\def \be {\begin{equation}}
\def \ee {\end{equation}}
\newcommand{\tr}{\mathrm{Tr}}
\newcommand{\Tr}{\mathrm{Tr}}
\newcommand{\ket}[1]{|#1\rangle}
\newcommand{\bra}[1]{\langle#1|}
\def \Im{\mathrm{Im}\,}
\def \del{\partial}
\def \argmin{\mathop{\rm argmin}}
\def \cH{{\cal H}}
\def \cX{{\cal X}}
\def \sofc2{{\cal S}({\mathbb C}^2)}
\def\>{\rangle}
\def\<{\langle}
\newcommand{\ket}[1]{\ensuremath{|#1\rangle}}
\newcommand{\bra}[1]{\ensuremath{\langle#1|}}
\newcommand{\beq}{\begin{equation}}
\newcommand{\eeq}{\end{equation}}
\newcommand{\bqa}{\begin{eqnarray}}
\newcommand{\eqa}{\end{eqnarray}}
\newcommand{\Tr}{\textrm{Tr}}
\newcommand{\forget}[1]{}
\def \cX{{\cal X}}
\def \cH{{\cal H}}
\def \del{\partial}
\newcommand{\tr}[1]{\mathrm{Tr}\left(#1\right)}
\newtheorem{theorem}{Theorem}
\newtheorem{lemma}[theorem]{Lemma}
\def\Label#1{\label{#1}\ [\ \text{#1}\ ]\ }
\def\Label{\label}
\begin{document}

\title{Finding the optimal probe state for multiparameter quantum metrology using conic programming}

\author{Masahito Hayashi}
\email{hmasahito@cuhk.edu.cn}
\affiliation{School of Data Science, The Chinese University of Hong Kong,
Shenzhen, Longgang District, Shenzhen, 518172, China}
\affiliation{International Quantum Academy (SIQA), Futian District, Shenzhen 518048, China}
\affiliation{Graduate School of Mathematics, Nagoya University, Nagoya, 464-8602, Japan}
\author{Yingkai Ouyang}
\email{y.ouyang@sheffield.ac.uk}
\affiliation{Department of Physics \& Astronomy, University of Sheffield, Sheffield, S3 7RH, United Kingdom}

\begin{abstract}
The aim of the channel estimation is
to estimate the parameters encoded in a quantum channel. For this aim, it is allowed to choose 
the input state as well as the measurement to get the outcome. 
Various precision bounds are known for the state estimation.
For the channel estimation,
the respective bounds are determined 
depending on the choice of the input state.
However, determining the optimal input probe state and the corresponding precision bounds in estimation is a non-trivial problem, particularly in the multi-parameter setting, where parameters are often incompatible. 
In this paper, we present a conic programming framework that allows us to determine the optimal probe state for
the corresponding multi-parameter precision bounds. 
The precision bounds we consider include
the Holevo-Nagaoka bound and the tight precision bound that give the optimal performances of 
correlated and uncorrelated measurement
strategies, respectively.
Using our conic programming framework, we discuss the optimality of a maximally entangled probe state in various settings.
We also apply our theory to analyze the canonical field sensing problem using entangled quantum
probe states.
\end{abstract}

\maketitle

\section{Introduction}

Channel estimation \cite{escher2011general,Hayashi11,pirandola2019fundamental,zhou2021asymptotic} utilizes quantum resources to allow us to estimate parametrized quantum processes with unprecedented precision. In channel estimation, a quantum channel that embeds physical parameters of interest acts on an initial probe state. 
For instance, consider estimating the physical parameters associated with a classical field that interacts linearly with an ensemble of qubits. 
In such a scenario, the quantum channel that describes the dynamics of the quantum system embeds the unknown field parameters. In addition, we have the flexibility to select the initial state, or probe state, of the quantum system. 
With access to multiple queries of the quantum channel, the objective is to estimate the embedded physical parameters with maximum precision.
An fundamental challenge in channel estimation is that of selecting the probe state that gives the optimal precision in estimation.

In quantum state estimation \cite{HELSTROM1967101,helstrom,holevo,nagaoka89,HM08}, we have a family of quantum states that depend continuously on physical parameters. We may interpret the channel estimation problem as a quantum state estimation problem, where the quantum states in the latter problem are the output of a quantum channel that acts on the probe state.
In general, precision bounds for the quantum state estimation depend on the type of measurement strategy that we choose. 
Two such families are correlated and uncorrelated measurement strategies.
Correlated measurement strategies operate jointly on infinite copies of quantum systems, whereas uncorrelated measurement strategies act on one copy at a time.
In the single parameter setting, the upper bound to the ultimate precision is the celebrated quantum Cramer Rao bound, which is tight for both correlated and uncorrelated measurement strategies \cite{HELSTROM1967101,helstrom,holevo,nagaoka89,hayashi,HM}.
For our channel estimation problem, we set the probe state to be entangled over systems A and C, and the quantum channel maps system A to B for both uncorrelated and correlated measurement strategies (see Figures \ref{fig1:operational-meaning} and \ref{fig2:operational-meaning}).

\begin{figure}[htbp]  
    \centering 
        \includegraphics[width=0.5\textwidth]{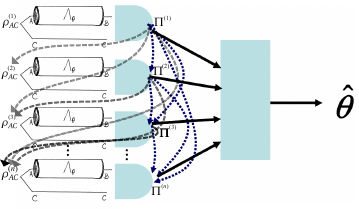}  
        \caption{{\bf Channel estimation by uncorrelated measurement and entangled probe states.}
In the $j$-th experiment for $j=1,\dots,n$, 
we input the probe state $\rho_{AC}^{(j)}$ over systems $A$ and $C$ of equal size, and measure the output entangled state
$(\Lambda_\theta \otimes \iota_C)(\rho_{AC}^{(j)})$
via the measurement $\Pi^{(j)}$.
Then, we obtain $j$-th measurement outcome $\omega^{(j)}$.
The probe state $\rho_{AC}^{(j)}$ and the measurement $\Pi^{(j)}$
are adaptively chosen by the previous  
measurement outcomes $\omega^{(1)}, \ldots, \omega^{(j-1)}$.
Finally, we decide our estimate $\hat{\theta}$
from the measurement outcomes $\omega^{(1)}, \ldots, \omega^{(n)}$.
The precision bound $J_1$ is the ultimate precision bound 
under the strategy of uncorrelated measurements.        } 
    \label{fig1:operational-meaning}
\end{figure}

\begin{figure}[htbp]  
    \centering 
        \includegraphics[width=0.46\textwidth]{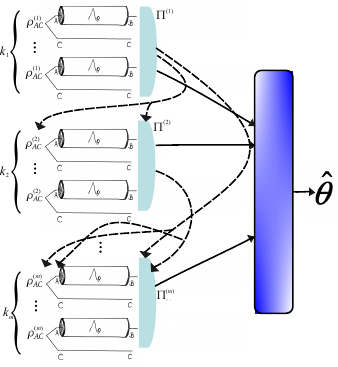}  
        \caption{
        {\bf Channel estimation by correlated measurement and entangled probe states.}
We have $m$ steps, and the $j$-th step has $k_j$ experiments.
In the $j$-th step, 
we input the probe state $\rho_{AC}^{(j)}$ over systems $A$ and $C$ 
for $k_j$ experiments.
We measure the output entangled state
$(\Lambda_\theta \otimes \iota_C)(\rho_{AC}^{(j)})^{\otimes k_j}$
via the measurement $\Pi^{(j)}$
and obtain $j$-th measurement outcome $\omega^{(j)}$.
The probe state $\rho_{AC}^{(j)}$ and the measurement $\Pi^{(j)}$
are adaptively chosen by the previous  
measurement outcomes $\omega^{(1)}, \ldots, \omega^{(j-1)}$.
Finally, we decide our estimate $\hat{\theta}$
from the measurement outcomes $\omega^{(1)}, \ldots, \omega^{(m)}$.
In this scenario, we use the unknown channel for a total of $k_1+ \cdots +k_m$ times.
The precision bound $J_5$ is the ultimate precision bound 
under the strategy of correlated measurements.        } 
    \label{fig2:operational-meaning}
\end{figure}

In the multiparameter setting \cite{D_and_D_2020,sidhu2020geometric}, the theory for quantum state estimation is far richer than the single parameter case; there are several distinct Cramer-Rao type bounds on the optimal precision for both correlated and uncorrelated measurement strategies \cite{HO}.
These bounds include the Holevo-Nagaoka (HN) bound \cite{holevo,nagaoka89,
HM,Albarelli2019_PRL,sidhu2020tight} \footnote{In the literature it is often also called the Holevo Cramer-Rao bound.} for correlated measurement strategies and the symmetric logarithmic derivative (SLD) bound \footnote{This is a direct generalization of the quantum Cramer Rao bound that is based on the SLD.}.
For uncorrelated measurement strategies, the HN and SLD bounds are not tight. 
The ultimate precision bound, namely the tight bound, for uncorrelated measurement strategies was initially formulated in \cite{Haya} as an optimization problem with uncountably infinite constraints, and was later reformulated as a conic optimization of an operator with dimension proportional to the size of the probe state and over a certain separable cone in \cite{HO}. 
The Nagaoka-Hayashi (NH) bound \cite{Nagaoka-generalization,Hayanoncomm,CSLA} gives a simpler albeit weaker precision bound for uncorrelated measurement strategies.
Recently, it was shown that the HN, SLD, tight, and NH bound can be obtained from the optimal value of conic programs with the same objective function, but optimized over different cones \cite{HO}.
In Section \ref{sec:estimation}, we review multiparameter quantum state estimation, particularly with respect to the conic programming framework of Ref.~\cite{HO}.

From an operational point of view, the tight bound is the most significant for channel estimation, because it describes the ultimate precision bound using uncorrelated measurement strategies. In uncorrelated measurement strategies, after the preparation of each probe state and the action of the quantum channel, we measure the evolved probe state as depicted in Figure \ref{fig1:operational-meaning}. We repeat the procedure for multiple copies of probe states, where the explicit form of each probe state and the corresponding measurement strategy on it is informed by all prior measurement outcomes.

We may also be interested in correlated measurement strategies for channel estimation, which are more challenging than uncorrelated measurement strategies to implement in practice. 
In this case, the relevant precision bound is the HN bound.
For the HN bound, we are allowed to perform correlated measurements across multiple copies of probe states (see Figure \ref{fig2:operational-meaning}).
After each batch of correlated measurements, we can update our choice of probe states and measurement strategies.

Determining the optimal precision for the channel estimation problem entails a two-step optimization process. 
First, we fix the probe state, and find the precision according to the appropriate Cramer-Rao type bound. Namely, we consider a set of parametrized states when the multi-parameter quantum channel acts on the probe state, and evaluate various Cramer-Rao type bounds on the set of parametrized states by solving an appropriate optimization problem.
Second, we optimize various Cramer-Rao type bounds by changing the probe state.

The two-step optimization is a non-trivial problem.
Even if the first optimization for a fixed probe state is efficient, the optimized precision bound is not necessarily a convex function of the probe state, and hence the second optimization over the probe state need not be so easy to perform. 
Remarkably, this challenging two-step optimization has been addressed in the single parameter setting \cite{Liu,Altherr}.
In particular, the references
\cite{Liu,Altherr} derived a semidefinite programming (SDP) form 
for the maximum SLD Fisher information
in the channel estimation that embeds one parameter.
However, the question of how to find both the optimal precision and optimal probe state for multi-parameter quantum channels remains unresolved.

The question of 
finding optimal probe states for estimating multiple parameters embedded in a quantum channel has recently been considered \cite{gorecki2020quantum,friel2020attainability}.
For probe states without ancilla assistance, Ref.~\cite{friel2020attainability} numerically finds the optimal two-qubit probe state with the HN bound when the channel models 3D-field sensing with independent and identical amplitude (i.i.d.) damping on the two qubits.
Ref.~\cite{gorecki2020quantum} considers the HN bound with unitary quantum channels. 
This leaves open the questions of how to evaluate the optimal probe state and corresponding bounds for the tight, NH and SLD bound for the general problem of quantum channel estimation.

In the channel estimation problem where we estimate the parameters embedded in the quantum channel, the set of parameters is continuous. 
If we discretize the channel estimation problem, we would obtain the problem of discriminating a discrete set of quantum channels \cite{acin-PhysRevLett.87.177901,sacchi-PhysRevA.71.062340,hay2009,pirandola2019fundamental,zhuang-PhysRevLett.125.080505,wilde2020amortized}.
Recently, the channel discrimination problem was formulated as a convex program, and this formulation made it possible to determine the optimal strategy to discriminate a pair of quantum channels \cite{Nakahira-and-Kato-2021}.
However, it is not so clear how to extend this result to the continuous parameter setting that we require in the channel estimation problem.

In this paper,
we give a framework that uses conic programs to find the optimal probe state for channel estimation with respect to multi-parameter bounds. The multi-parameter bounds include the SLD bound, the HN bound, the tight bound, and the NH bound.
For the SLD bound, the HN bound, and the NH bound, we show that the conic programs are also in fact semidefinite programs (SDP) and hence can be efficiently solved.
For the tight bound, we will require the same technique for optimizing over the separable cone as considered previously \cite{HO}.

We furthermore apply our framework to unravel situations where the maximally entangled state is optimal, 
and also study both numerically and theoretically the canonical problem of field sensing in the presence of collective amplitude damping, and in the multiparameter setting.

With our resolution of the non-trivial problem of simultaneously optimizing the  multiparameter Cramer-Rao bound and the corresponding probe state, we expect our theoretical contributions to have ramifications beyond the field-sensing application that we have explored. 
Namely, one will be able to apply our framework to determine optimal precision bounds and optimal probe states for a plethora of quantum sensing problems, where the unknown parameter is embedded in the underlying quantum channel, for instance, in many quantum imaging problems \cite{moreau2019imaging}.

\section{Overview of our main results}

We establish the following main results.\\

\noindent $\star$ {\bf A conic programming framework for channel estimation:- }The conic programming framework that we establish in Section \ref{sec:channel-estimation} proves that the optimal values of conic programs can correspond to Cramer-Rao type bounds for channel estimation.
Our framework applies in the multiparameter setting.
This main result is given Theorem \ref{NZD}, where we prove that the optimal values of the following are equivalent.
\begin{enumerate}
\item Fix the probe state, solve a minimization using the conic programs in \cite{HO} to find the best precision bound. Next solve a maximization for the best probe state to use. 
\item Solve a minimization for the conic programs that we introduce.
\end{enumerate}
The optimal values of the above two problems give the precision bounds for channel estimation.
Problem 1 is often not a convex program, and hence it is not clear how to solve it numerically. In contrast, in problem 2, the conic programs are convex programs, and we can solve them efficiently.
Hence, our framework allows one to solve the channel estimation problem efficiently.

Moreover, the conic programs' optimal solutions also allow us to evaluate the corresponding optimal probe states. Namely, we define a matrix-valued function $\rho_A(Y)$ that maps the optimal solution $Y^*$ of the conic program to a density matrix on system A (see Figures \ref{fig1:operational-meaning} and \ref{fig2:operational-meaning}), and the purification of such a density matrix to the joint system AC gives the optimal probe state.
Moreover, we discuss how to use the optimal solution of our conic program to calculate the optimal probe state. Hence, it becomes possible to determine the optimal probe state for channel estimation for the HN bound, the SLD bound, the NH bound, and the tight bound.
Of operational significance are our results on the channel versions of the HN bound and the tight bound, which give the optimal performances of correlated and uncorrelated measurement strategies, respectively. \newline

\noindent $\star$ {\bf On the optimality of the maximally entangled probe state:- }
There are many situations where the maximally entangled probe state is the optimal probe state to use in channel estimation. In Section \ref{sec:maxent}, we explore this possibility, and begin by considering the scenario where the input probe state is a maximally entangled state. 
We establish Theorem \ref{NNT} and Theorem \ref{NNT2}, which shows that a constraint on a dual matrix-valued variable is equivalent to the optimality of the maximally entangled state for the SLD bound and the HN bound respectively.
These theoretical results allow us to find situations when the maximally entangled state is the optimal probe state for both the SLD bound and the HN bound. 
We consider the following examples.
\newline

\noindent $\triangleright$ {\bf Noisy qubit channel with one parameter:- }
We consider the quantum channel as a unitary evolution afflicted with depolarizing noise, where the parameters are embedded into the unitary part of the channel. More precisely, we first consider a quantum channel that describes the mixture of a unitary qubit evolution and replacement by a completely mixed state. The unitary qubit evolution is generated by the product of the single parameter that we want to estimate and a Pauli matrix. 
We use Theorem \ref{NNT} to establish Theorem \ref{thm:1param}, where we prove that the maximally entangled state is the optimal probe state, and furthermore, the tight bound, the NH bound, the SLD bound and the HN bound are all equivalent.
\newline

\noindent $\triangleright$ {\bf Generalized Pauli channels:- } 
On a qudit system, a generalized Pauli channel applies generalized Paulis randomly on the qudit state according an apriori determined probability distribution. The channel estimation task here is to estimate this probability distribution. We use Theorem \ref{NNT}
to establish Theorem \ref{thm:Pauli}, which 
shows that the optimal probe state for the channel estimation problem for generalized Pauli channels under the SLD bound is the maximally entangled state.
\newline

\noindent $\triangleright$ {\bf Noisy SU(2) channels:- }
We consider a quantum channel that is the mixture of unitary evolution according to the spin-$j$ representation of SU(2) unitary evolution according to the spin-$j$ representation of SU(2) and replacement by a completely mixed state. 
We also give equal weightage to each of the three parameters that we estimate. 
We use Theorem \ref{NNT} to establish Theorem \ref{thm:SU2-maxent}, where we derive the analytical form for SLD bound. 
Furthermore, Theorem \ref{thm:SU2-maxent} shows that under purely SU(2) unitary evolution with zero noise, all the precision bounds $J_1,\dots, J_5$ are equivalent, and the maximally entangled state is the optimal probe state for all the precision bounds.
\newline

\noindent $\star$ {\bf Noisy field sensing:- }
In Section \ref{sec:field sensing}, we revisit the problem of quantum field sensing \cite{toth2014quantum} in the multiparameter setting.
In the noiseless setting, this is equivalent to the channel estimation problem for noisy SU(2) channels that we explore in Section \ref{sec:field sensing}.
The noise model we consider here is different from that in Section \ref{sec:field sensing}, instead of a depolarizing type of channel, we consider noise introduced by collective amplitude damping \cite[Eq. (7)]{PhysRevA.58.3491}. 
The channel that we consider differs from \cite{friel2020attainability} in two ways.
First, we consider collective amplitude damping while \cite{friel2020attainability} considers i.i.d. amplitude damping.
Second, we model the channel using a master equation, considering collective amplitude damping that occurs during the SU(2) evolution, whereas the channel in \cite{friel2020attainability} considers i.i.d. amplitude damping that occur after the unitary evolution.
We give corresponding plots of various precision bounds against the noise parameter in Figure \ref{fig:fixn} and Figure \ref{fig:APJgam}.

Using the MatLab computer code, we numerically determine the optimal probe state for the NH bound, the HN bound, and the SLD bound, and numerically evaluate the corresponding precision bounds. 
We numerically ascertain that in the noiseless setting, the maximally entangled state on the symmetric subspace is the optimal probe state, which agrees with our result in Theorem \ref{thm:SU2-maxent}.

When there is non-vanishing noise, we numerically ascertain that the SLD precision bound cannot be optimal. We furthermore prove this fact in Theorem \ref{THNM}, where in the proof, we calculate the expectations of the commutators of the symmetric logarithmic derivatives of different angular momentum generators.
\newline

\noindent $\star$ {\bf Semi-definite programming formulations:- }
To maximize the accessibility of our work, in Section \ref{sec:SDPs},
we give formulate the mathematical optimization problems that correspond to $J_2$, $J_3$, $J_4$, $J_5$ as semidefinite programs to be used with the CVX package and provide the corresponding MatLab code.

\section{Quantum state estimation}
\label{sec:estimation}

In the quantum parameter estimation problem, given a set of parameters $\Theta \subseteq \mathbb R^d$, we consider $d$-parameter state family $\{\rho_\theta : \theta = (\theta^1, \dots , \theta^d) \in \Theta \}$
on the Hilbert space ${\cal H}$.
Denoting the set of density matrices on ${\cal H}$ by ${\cal S}({\cal H})$,
we note that the quantum states $\rho_\theta$ are elements of ${\cal S}({\cal H})$, which means that they are positive semidefinite operators with unit trace. 
We define the true parameter to be $\theta_0$, and our goal is to construct an estimator $\hat \theta$ that will estimate $\theta_0$.
We furthermore consider $\Theta$ as a continuous set, and where the quantum states $\rho_\theta$ are differentiable with respect to $\theta$, so that we can define the partial derivatives $\rho_\theta$. In particular, in the neighborhood of the true parameter $\theta_0$, we define $D_j:= \frac{\partial}{\partial \theta^j}\rho_\theta|_{\theta_0}$,
and $\rho:= \rho_{\theta_0}$.

We can describe a measurement using a set of positive operators $\Pi = \{ \Pi_x : x\in \mathcal X\}$ where the completeness condition $\sum_{x \in \mathcal X}\Pi_x = I $ holds, where $\mathcal X$ is a set of classical labels. 
By the Born rule, 
when we perform a measurement $\Pi$ on a quantum state $\rho_\theta$, we will obtain the classical label $x$ and the state 
$\Pi_x \rho_\theta / \tr(\Pi_x \rho_\theta )$ 
with probability $p_\theta(x) = \tr(\Pi_x \rho_{\theta})$.
Upon measurement, the classical label $x$ is a random variable, and we denote $\mathbb E_{\bm{\theta}}[f(x) |\Pi]$ as the expectation of $f(x)$, the function $f$ of the classical label $x$, with probability distribution obtained according to the Born rule.
Now given a measurement $\Pi$ and an estimator $\hat{\bm{\theta}}$, where
the estimator $\hat{\bm{\theta}}$ is to be a function of the classical label $x$, we denote  
$\hat{\Pi}=(\Pi , \hat{{\bm{\theta}}})$ as an {\it estimator}. 
When the true parameter $\theta_0$ is equal to $\theta$, we define the mean-square error (MSE) matrix for the estimator $\hat{\Pi}$ as  
\begin{align}\nonumber
V_{{\bm{\theta}}}[\hat{\Pi}]
&=
\sum_{i,j=1}^d |i\>\<j|
 \sum_{x\in\cX} \tr{\rho_{\bm{\theta}}\Pi_x}({\hat{\theta}^i}(x)-\theta^i)({\hat{\theta}^j}(x)-\theta^j)  \\
&=
\sum_{i,j=1}^d |i\>\<j|
 \mathbb E_{\bm{\theta}}\big[({\hat{\theta}^i}(x)-\theta^i)({\hat{\theta}^j}(x)-\theta^j)|\Pi\big] \notag.
\end{align}
In multiparameter quantum metrology, the objective is to 
find an optimal estimator $\hat{\Pi}=(\Pi , \hat{{\bm{\theta}}})$ that in some sense minimizes the MSE matrix.
More precisely, the quantity to minimize is $\tr G V_{{\bm{\theta}}}[\hat{\Pi}]$, where a weight matrix $G$, a size $d$ positive semidefinite matrix, quantifies the relative importance of the different parameters in our parameter estimation problem. If we give all parameters an equal importance, we may choose $G$ to be the identity matrix.

When for all $i = 1,\dots, d$, the expectation of our estimator is equal to the true value of the parameter $\theta_0$ when $\theta_0 = \theta$, we have the condition 
\begin{align}
\mathbb E_{\bm{\theta}}\big[{\hat{\theta}^i}(x)|\Pi\big]&
=\sum_{x\in\cX} {\hat{\theta}^i}(x) \tr{\big[\rho_{{\bm{\theta}}}\Pi_x\big]}
=\theta^i \label{MK},
\end{align}
which means that our estimator $\hat{\Pi}=(\Pi , \hat{{\bm{\theta}}})$ is unbiased at $\theta_0 = \theta$. 
When \eqref{MK} holds for all $\theta \in \Theta$, we say that our estimator is globally unbiased. 
Unfortunately, globally unbiased estimators need not exist, and hence we consider estimators that are unbiased in the neighborhood of the true parameter $\theta_0$. 
This motivates us to take partial derivatives on both sides of \eqref{MK}, and consider  
\begin{align}
\frac{\del}{\del\theta^j}\mathbb E_{\bm{\theta}}\big[{\hat{\theta}^i}(x)|\Pi\big]&=
\sum_{x\in\cX} {\hat{\theta}^i}(x)\tr{   D_j \Pi_x}
=\delta_i^j \label{M1},
\end{align}
where $D_j = \frac{\del}{\del\theta^j}\rho_{{\bm{\theta}}} $.
When \eqref{MK} holds for all $i=1,\dots, d$ for a fixed $\theta$ where $\theta_0 = \theta$, and when \eqref{M1} holds for all $i,j = 1,\dots, d$, then we say that the estimator $\hat{\Pi}=(\Pi , \hat{{\bm{\theta}}})$ satisfies the locally unbiased condition.

 For any weight matrix $G = \sum_{i,j=1}^d g_{i,j}|i\>\<j|$, the fundamental precision limit \cite{HO} is given by
\be\label{qcrbound}
C_{\bm{\theta}}[G]:=
\min_{\hat{\Pi}\mathrm{\,:l.u.at\,}{\bm{\theta}}}\Tr{ \big[G V_{\bm{\theta}}[\hat{\Pi}]\big]}, \notag
\ee
where the minimization is carried out for all possible estimators under the locally unbiasedness condition, which is indicated by l.u.~at ${\bm{\theta}}$. 
Note that when we impose only the condition \eqref{M1}, the above minimum is attained by $\hat{\Pi}$ satisfying \eqref{MK}. Therefore, $C_{\bm{\theta}}[G]$ can be considered as the minimum only with the condition \eqref{M1}.
In the following, we consider the above minimization.
Hence, it is sufficient to focus on the operators $\rho_\theta$ and
$(D_j)_j$. 
Thus, the pair $(\rho_\theta, (D_j)_j)$ is called a model
in the following.

As discussed in Ref.~\cite{HO}, any lower bound to weighted trace of the MSE matrix is a Cramer-Rao (CR) type bound. The fundamental precision limit $C_{\bm{\theta}}[G]$ is one such lower bound which is tight, and hence refered to as the {\it tight CR} bound \cite{HO}.  
Operationally, we may attain the tight CR bound using an uncorrelated measurement strategy in the asymptotic setting.
This means that we would attain this tight CR bound by performing optimal measurements independently on asymptotically many individual copies of the probe states.
In quantum parameter estimation, one may also consider correlated measurement strategies, where we would allow joint measurements over multiple, and potentially infinite number of copies of probe states.
The Holevo-Nagaoka (HN) bound \cite{holevo,nagaoka89,HM,Albarelli2019_PRL,sidhu2020tight} is a CR bound that describes the ultimate precision in this correlated measurement strategy scenario, and the HN bound can be strictly smaller than the tight CR bound \cite{HO}.

While it is difficult to evaluate the tight CR bound exactly, one can nonetheless use a semidefinite program (SDP) to approximate it \cite{HO}. As the precision of the approximation increases, the complexity of the SDP also increases. 
Other CR-type bounds are more efficient to evaluate, such as the Nagaoka-Hayashi (NH) bound, the HN bound, and the SLD bound.
Recently, Ref.~\cite{HO} used the language of conic programming to clarify the relationship between these CR-type bounds. 
Namely, let us consider an operator $X$ that we construct from any estimator $\hat \Pi = (\Pi,\hat{\theta})$, where 
\begin{align}
&X(\Pi,\hat{\theta})\nonumber \\
:=& \sum_{
{x\in\mathcal X}
}
\Big(|0\rangle+ \sum_{i=1}^d \hat{\theta}^i(
{x}
) |i\rangle\Big) 
\Big(\langle 0|+ \sum_{i=1}^d \langle i| \hat{\theta}^i(
{x}
)\Big)
\otimes 
\Pi_{x}. \notag
\end{align}
This operator $X(\Pi,\hat{\theta})$ acts on the vectors in $\mathcal R \otimes \mathcal H$, 
where $\mathcal R = \mathbb R^{d+1}$ is spanned by $d+1$ basis vectors $|0\>,|1\>,\dots, |d\>$.

We may write the trace of the weighted MSE matrix $\Tr{G V_{\bm{\theta}}[\hat{\Pi}]}$ using $X(\Pi,\hat{\theta})$. Namely, 
\begin{align}
\Tr{G V_{\bm{\theta}}[\hat{\Pi}]} = \Tr ( G \otimes \rho) X(\Pi,\hat{\theta}) \Label{o1}.
\end{align}
Next, note that the completeness condition ${\sum_{x \in \mathcal X} \Pi_x = I_{\mathcal H}}$ using $X(\Pi,\hat{\theta})$ implies that 
\begin{align}
\Tr_{{\cal R}} (|0\rangle \langle 0| \otimes I_{{\cal H}})  
X(\Pi,\hat{\theta})&=I_{{\cal H}} \Label{c1}.
\end{align}
Hence, we may interpret \eqref{c1} as a rewriting of the completeness condition $\sum_{x \in \mathcal X} \Pi_x = I_{\mathcal H}$.
Next, we note that the condition \eqref{M1} for a locally unbiased estimator guarantees 
\begin{align}
\Tr (
\frac{1}{2}(|0\rangle \langle i| +|i\rangle \langle 0 |)
\otimes D_j )  X(\Pi,\hat{\theta}) &=\delta_{i,j}  \Label{c2} . 
\end{align}
Hence, we may interpret \eqref{c2} as a rewriting of the locally unbiased condition.

Note that the operator $X(\Pi,\hat{\theta})$ has a tensor product structure. 
Namely, we may consider $X(\Pi,\hat{\theta})$ as an element of a cone generated by separable states on $\mathcal R \otimes \mathcal H$.
More precisely, this separable cone $\mathcal S^1$ 
is the convex hull of 
the set of operators that are 
a tensor product of a real positive semidefinite matrix on $\mathcal R$ and 
a complex positive semidefinite operators on $\mathcal H$ with bounded norm.
Hence, it is natural to consider the minimization:
\begin{align}
S_1:= \min_{X \in \mathcal S^1}
\{\Tr ( G \otimes \rho) X |
\eqref{c1},\eqref{c2} \hbox{ hold.}
\}
\notag
\end{align} 
The reference \cite{HO} showed that $S_1$ is in fact equal to the tight CR type bound.
The reference \cite{HO} also showed that if we consider the minimization of 
$\Tr ( G \otimes \rho) X $ subject to the conditions \eqref{c1}, \eqref{c2}, but over suitable cones that contain $\mathcal S^1$, the optimal value can be made to be equal to the NH bound, the HN bound and the SLD bound.
Given this phenomenon, it is instructive to revisit the cones over which we can optimize $X$.

We now proceed to define several cones over which we like to optimize $X$.
Now let us define $\mathcal B$ as the vector space spanned by the tensor product of real symmetric matrices on $\mathcal R$ and bounded complex Hermitian matrices on $\mathcal H$. 
This means that we can write
\begin{align}
{\cal B}(\mathcal{R},{\cal H})&:=\Big\{\sum_{j=0}^d \sum_{k=0}^d
|k\rangle \langle j| \otimes X_{k,j} \Big|
X_{k,j} \in {\cal B}_{\rm sa}({\cal H}), X_{k,j}=X_{j,k} \Big\},\notag
\end{align}
where ${\cal B}_{\rm sa}({\cal H})$ denotes the set of self-adjoint (Hermitian) matrices on $\mathcal H$ with bounded norm.
When there is no confusion, 
${\cal B}(\mathbb{R}^{d+1},{\cal H})$ is simplified to ${\cal B}$.
Let us consider the cone $\mathcal S^2$ as 
\begin{align}
\mathcal S^2&:=\{X \in {\cal B}| \langle v| X|v\rangle \ge 0 \hbox{ for all } |v\> \in  \mathbb C^{d+1} \otimes {\cal H}\} .\notag
\end{align}
Relaxing the condition of ${\cal B}$
we extend the space ${\cal B}$ as
\begin{align}
{\cal B}''&:=\Big\{\sum_{j=0}^d \sum_{k=0}^d
|k\rangle \langle j| \otimes X_{k,j} \Big|
X_{k,0} \in {\cal B}_{\rm sa}({\cal H}), X_{k,j}=(X_{j,k})^\dagger \Big\} .\notag
\end{align}

\begin{table*}[htbp]
    \centering
    \rowcolors{2}{white}{gray!15}
    \label{tab:symbols}
    \begin{tabular}{>{$}c<{$} l}
    \toprule
    \textbf{Symbol} & \textbf{Meaning} \\
    \midrule
    d & the number of parameters to be estimated \\
    G & weight matrix, real positive semidefinite $d \times d$ matrix\\
    \theta_0 & the parameters' true value, a vector in $\mathbb R^d$  \\
    \hat \theta & estimator of the $d$ parameters, a vector in $\mathbb R^d$  \\
     \Theta \subseteq \mathbb R^d & set of all possible parameter vectors \\
    \Lambda_{\theta}  & quantum channel (mapping system A to B) that embeds $\theta$\\
    T_{\theta} & Choi matrix $\Lambda_{\theta}$ \\
    T_{\theta_0} & Choi matrix $\Lambda_{\theta_0}$ at the parameter's true value $\theta_0$ \\
    \rho_{AC}  & a probe state parametrized by $\theta$ \\
F_j & Choi matrix's partial derivative $\frac{\partial}{\partial \theta^j} T_\theta|_{\theta=\theta_0}$. \\
S_1 &CR-type bound: tight bound, optimized over cone $\mathcal S^1 = \mathcal S^1_{BC}$\\
S_2 &CR-type bound: Nagaoka-Hayashi bound, optimized over cone $\mathcal S^2 = \mathcal S^2_{BC}$\\
S_3 &CR-type bound: bound from optimization over PPT cone, optimized over cone $\mathcal S^3 = \mathcal S^3_{BC}$\\
S_4 &CR-type bound: SLD bound, optimized over cone $\mathcal S^4 = \mathcal S^4_{BC}$\\
S_5 &CR-type bound: Holevo-Nagaoka bound, optimized over cone $\mathcal S^5( (\Lambda_\theta \otimes \iota_C)( \rho_{AC})) = \mathcal S^5_{BC}( (\Lambda_\theta \otimes \iota_C)( \rho_{AC}))$\\
    \mathcal S^1, \mathcal S^2, \mathcal S^3, \mathcal S^4, \mathcal S^5((\Lambda_\theta \otimes \iota_C)\rho_{AC}),  & cones in $\mathcal R_C \otimes \mathcal H_B \otimes \mathcal H_C$ that correspond to the CR-type bounds $S_1, S_2, S_3, S_4, S_5$ \\
    \bar S_k \coloneqq \min_{\rho_{AC} } \bar S_k [\rho_{AC}] & CR-type precision bound for the channel estimation problem using the optimal probe state $\rho_{AC}$\\
 \mathcal S^1_{BA},
 \mathcal S^2_{BA},
 \mathcal S^3_{BA},
 \mathcal S^4_{BA},
 \mathcal S^5_{BA}(T)
 &
 cones in $\mathcal R_C \otimes \mathcal H_B \otimes \mathcal H_A$ that 
 are analogous to the cones in the conic optimizations of 
 $S_1, S_2, S_3, S_4, S_5$ \\
    J_1, J_2, J_3, J_4, J_5,  & 
    conic optimizations on cones 
    $ \mathcal S^1_{BA},
 \mathcal S^2_{BA},
 \mathcal S^3_{BA},
 \mathcal S^4_{BA},
 \mathcal S^5_{BA}(T)$\\    
 Y & optimization variable for the cones 
 $ \mathcal S^1_{BA},
 \mathcal S^2_{BA},
 \mathcal S^3_{BA},
 \mathcal S^4_{BA},
 \mathcal S^5_{BA}(T)$\\
\rho_{A}(Y) &  
When $Y$ is the optimal solution of CR-type bound, 
the purification of $\rho_{A}(Y)$ is the optimal entangled state\\
|\Phi\>\<\Phi|  & maximally entangled state, an example of a $\rho_{AC}$\\
\mathcal K & set of matrices on system A and B for which the partial trace on B is proportional to the identity matrix\\
    \bottomrule
    \end{tabular}
    \caption{Notations for the channel estimation problem.}
\label{table:notations}
\end{table*}

Let us define 
${\cal S}(\mathbb C^{d+1}\otimes {\cal H})_{\rm PPT}$
as the set of self-adjoint operators on $\mathbb C^{d+1} \otimes {\cal H}$ with positive partial transpose, and define $\mathcal S^3$ as ${\cal S}(\mathbb C^{d+1}\otimes {\cal H})_{\rm PPT} \cap {\cal B}''$.
Likewise, we define the set ${\cal S}(\mathbb C^{d+1}\otimes {\cal H})_{\rm P}$ as the set of 
positive semi-definite self-adjoint operators on $\mathbb C^{d+1} \otimes {\cal H}$,
and define 
$\mathcal S^4$ as ${{\cal S}(\mathbb C^{d+1}\otimes {\cal H})_{\rm P} \cap {\cal B}''}$.
Then, for $k=1,2,3,4$,
we define 
\begin{align}
S_k:=
 \min_{X \in {\cal S}^k}
\{\Tr ( G \otimes \rho) X |
\eqref{c1},\eqref{c2} \hbox{ hold.}
\}
\Label{o1-T}.
\end{align}
The relation 
\begin{align}
\mathcal S^1 \subset \mathcal S^2 
\subset \mathcal S^3
\subset \mathcal S^4 
\notag
\end{align}
 implies 
\begin{align}
S_1 \ge S_2 \ge S_3 \ge S_4.\label{IN1}
\end{align} 
 
In addition, we introduce a linear constraint to the operator $X \in {\cal B}''$ as
\begin{align}
\Tr X( ( |j \rangle \langle i |-  |i \rangle \langle j |)\otimes T )=0\Label{c3}
\end{align}
for $i,j=1,2, \ldots, d$ and a trace-class self-adjoint operator $T$.
We may simplify \eqref{c3} to 
\begin{align}
\Im \Tr X(  |j \rangle \langle i |\otimes T )=0.
\notag
\end{align}
by noting that 
$\Tr X(  |i \rangle \langle j |\otimes T )$
is the complex conjugate of $\Tr X(  |j \rangle \langle i |\otimes T )$
because 
$\Tr_R X (|i \rangle \langle j |\otimes I)
=(\Tr_R X (|j \rangle \langle i |\otimes I))^\dagger$. 
Using this linear constraint, we define the subspace ${\cal B}_T''$ of ${\cal B}''$
as
\begin{align}
{\cal B}_T'':=\{X \in {\cal B}''|
\eqref{c3} \hbox{ holds.}
\}.
\notag
\end{align}
Next, given a density matrix $\rho$ on $\mathcal H$, 
let us define ${\cal S}^{5}(\rho)$
as ${\cal S}(\mathbb C^{d+1}\otimes {\cal H})_{\rm P}\cap {\cal B}''_\rho$.
We consider the minimization:
\begin{align}
&S_5:= \min_{X \in {\cal S}^5(\rho)}
\{\Tr ( G \otimes \rho) X |
\eqref{c1},\eqref{c2} \hbox{ hold.}
\}.
\notag
\end{align}
Note that only the cone $\mathcal S^5(\rho)$ depends on $\rho$, for the cones $\mathcal S^1,\mathcal S^2,\mathcal S^3,\mathcal S^4$ are independent of $\rho$.
Since we have the relation
\begin{align}
 {\cal S}^2
\subset {\cal S}^5(\rho)
\subset {\cal S}^4,
\notag
\end{align}
we have the following relations
\begin{align}
S_4 \le S_5 \le S_2. \label{IN2}
\end{align}
Since $S_k$ depends on
the model $(\rho, (D_j)_j)$, i.e., the probe state $\rho$ and the partial derivatives of the probe state $D_j$ for $k=1,2,3,4,5$,
we can also write $S_k$ as
\begin{align}
S_k[\rho,(D_j)_j]\notag
\end{align}
 to emphasize the CR-type bounds' dependence on $\rho$ and $D_j$.
 
Ref.~\cite{HO} showed that 
$S_2$ equals the Nagaoka-Hayashi bound (NH bound) studied in Ref.~\cite{Nagaoka-generalization,Hayanoncomm,CSLA}.
Also, Ref.~\cite{HO} showed that $S_4$ equals the SLD bound, and
$S_5$ equals the HN bound.
In the single parameter case, i.e., $d=1$,
the SLD bound is attainable. Hence, we have the equality in \eqref{IN1} and \eqref{IN2}, i.e., the equation:
\begin{align}
S_1 = S_2 = S_3 =S_5= S_4.\label{IN3}
\end{align}

For further discussion, we prepare several notations.
We use the notation $X\circ Y:= \frac{1}{2} (XY+YX)$.
We define the SLD operator $L_j$ on $\cH$ as
\begin{align}
\rho \circ L_j=D_{j}.
\notag
\end{align}
We denote the SLD fisher information matrix by $J_{\rm SLD}$.
We define $L^i:= \sum_{j=1}^d(J_{\rm SLD}^{-1})^{i,j}L_j$ as a linear combination of SLD operators that depend on the $i$th row of the inverse SLD Fisher information matrix.
These operators $L^i$ satisfy the constraint $ \Tr D_j L^i=
\Tr D_j\sum_{j'=1}^d(J_{\rm SLD}^{-1})^{j,j'}L_{j'}
=\sum_{j'=1}^d(J_{\rm SLD}^{-1})^{i,j'} \Tr D_j L_{j'}
=\sum_{j'=1}^d(J_{\rm SLD}^{-1})^{i,j'} J_{{\rm SLD}, j,j'}
=\delta_j^i$.
When we need to clarify the dependence of the model
$(\rho, (D_{j})_j)$,
$L^i$, $L_i$, and $J_{\rm SLD}$ are denoted by
$L^i[\rho, (D_{j})_j]$, $L_i[\rho, (D_{j})_j]$, 
and $J_{\rm SLD}[\rho, (D_{j})_j]$,
respectively.

Now, we consider the $n$-fold tensor product system $\cH^{\otimes n}$.
Given an operator $X$ on $\cH$, we define the operator 
$X^{(n)}$ on $\cH^{\otimes n}$
 as $X^{(n)}:= \sum_{i=1}^n X^{(n)}_i$, where
$X^{(n)}_i:= I^{\otimes i-1}\otimes X \otimes I^{n-i} $.
Then, we call the model $(\rho^{\otimes n},(D_{j}^{(n)})_j)$
as the $n$-fold extension of the state model 
$(\rho,(D_{j})_j)$.
Then, we have
\begin{align}
L_i[\rho^{\otimes n},(D_{j}^{(n)})_j]&= (L_i[\rho, (D_{j})_j])^{(n)}\notag \\
J_{\rm SLD}[\rho^{\otimes n},(D_{j}^{(n)})_j]&= n J_{\rm SLD}[\rho, (D_{j})_j]) \notag\\
L^i[\rho^{\otimes n},(D_{j}^{(n)})_j]&=\frac{1}{n} (L^i[\rho, (D_{j})_j])^{(n)}\notag .
\end{align}
Also, the relation
\begin{align*}
S_k[\rho,(D_j)_j]=
n S_k[\rho^{\otimes n},(D_{j}^{(n)})_j] 
\end{align*}
holds for $k=4$ \cite{Nagaoka} and for $k=5$
\cite[Lemma 4]{HM}.

\section{Channel estimation} \label{sec:channel-estimation}

In the previous section, the optimizations $S_k$ depend on the quantum model comprising of density operators $\rho$ and their partial derivatives with respect to the parameters to be estimated.
Here, we consider the channel estimation problem, where 
we have a $d$-parameter channel family $\{ \Lambda_\theta\}$, where the input system is ${\cal H}_A$
and the output system is ${\cal H}_B$.
As we can see, the quantum channel $\Lambda_\theta$ embeds the $d$ parameters $\theta$ to be estimated. After the quantum channel $\Lambda_{\theta}$ maps a probe state on ${\cal H}_A$ to an output state $\Lambda_\theta(\rho)$ on $\mathcal H_B$,
With asymptotically many copies of quantum states $\Lambda_\theta(\rho)$,
we can perform appropriate measurements, either in a correlated or uncorrelated setting, to obtain a probability distribution that depends on the parameters, from which we may construct the best informated estimator for the parameters. 

In the channel estimation problem, it is typical to consider the purification of the probe state $\rho$ to a pure state on $\mathcal H_A \otimes \mathcal H_C$. Here, the Hilbert space $\mathcal H_C$, isomorphic to $\mathcal H_A$, is an ancillary system that the quantum channel has not access to.
Our measurement strategies however do have access the ancillary system. 
This setting follows the preceding paper \cite{Liu} which studies the optimization of the one-parameter case under the assumption that the ancilla system is available. 
We remark that considering measurement strategies that do not have access to this ancilla system is highly non-trivial, and beyond the scope of our current study. In what follows, 
we assume that we have full access to the ancilla system, which in turn means that 
the quantities of interest are
\begin{align}
\bar{S}_k[\rho_{AC}]
:=
S_k[(\Lambda_\theta \otimes  \iota_C )( \rho_{AC}) ,  ( \frac{\partial}{\partial \theta_j}(\Lambda_\theta \otimes  \iota_C )(  \rho_{AC}))_j ],\label{Sk-AC}
\end{align}
where $\iota_C$ denotes an identity channel on system $\mathcal H_C$.
With complete access to the ancilla system, the corresponding CR-type precision bounds are then 
\begin{align}
\bar S_k := \min_{\rho_{AC}} \bar{S}_k[\rho_{AC}]
\Label{min-Sk-AC}
\end{align}
where the minimization is over all pure density operators on $\mathcal H_A \otimes \mathcal H_C$,
and the accessible measurements for our estimation are applied to 
the tensor product system ${\cal H}_B\otimes {\cal H}_C$.
Since the whole ancialla system is accessible,
when $\rho_{AC}$ is a mixed state,
we are allowed to retake the system ${\cal H}_C$
to contain the reference of the purification of $\rho_{AC}$.
This explains why 
the range of the above minimization is limited to all pure density operators on $\mathcal H_A \otimes \mathcal H_C$.

It is not immediately obvious how one would solve the optimization in \eqref{min-Sk-AC}. Even if one solves the inner optimization for $S_k$, it is unclear if the subsequent optimization in $\rho_{AC}$ is a tractable optimization problem, such as a convex problem. 
We overcome these difficulties. Namely,
 we construct conic programs with optimal values that are precisely equal to \eqref{min-Sk-AC}, which allows us to find \eqref{min-Sk-AC} using only a single optimization program.

The first tool that we use is the Choi matrix of a quantum channel \cite{choi1975completely}. 
We denote the Choi matrix of $\Lambda_\theta$ by $T_\theta$. 
Given any orthonormal basis $\{ |e_j\rangle \}$ of ${\cal H}_A$, 
we can define the unnormalized maximally entangled state 
$|I\rangle:=\sum_{j}|e_j\rangle|e_j\rangle$ on 
${\cal H}_{A'} \otimes {\cal H}_{A}$, where 
${\cal H}_{A'}$ is isomorphic to ${\cal H}_{A}$.
Then, the Choi matrix $
T_\theta$ of $\Lambda_\theta$ is an operator on ${\cal H}_B \otimes {\cal H}_{A}$ and is given by
\begin{align*}
T_\theta:= (\Lambda_\theta \otimes \iota)
( |I\rangle \langle I|).
\end{align*}
Since $\rho= \Tr_{A}[|I\rangle \langle I|  (I_{A'} \otimes \rho)]$ for any input state $\rho$ on ${\cal H}_{A}$, 
where $I_{A'} = \sum_j |e_j\>\<e_j|$
we have
\begin{align}
& \Lambda_\theta(\rho)=
\Lambda_\theta(\Tr_{A}[|I\rangle \langle I|  (I_{A'} \otimes \rho)]) \nonumber \\
=&
\Tr_{A}[ (\Lambda_\theta \otimes \iota)
(|I\rangle \langle I|)  (I_B \otimes \rho))] 
=\Tr_{A} [T_\theta (I_B \otimes \rho)]\label{action of channel},
\end{align}
where $I_B$ denotes the identity operator on $\mathcal H_B$.
Here, \eqref{action of channel} allows us to rewrite $\Lambda_\theta(\rho)$ in terms of 
the Choi matrix $T_{\theta}$ and the input probe state $\rho$.

Next, we can consider the partial derivatives of $\Lambda_\theta(\rho)$
in terms of the Choi matrix derivatives. Namely,
\begin{align}
\frac{\partial}{\partial \theta^j}\Lambda_\theta(\rho)
=\Tr_A[ \frac{\partial}{\partial \theta^j} T_\theta (I_B \otimes \rho)].
\notag
\end{align}

When the parameter $\theta$ is in the neighborhood of the true parameter $\theta_0$, 
we denote the corresponding Choi matrix $T_{\theta_0}$ by $T$ 
and its derivatives as $F_j:= \frac{\partial}{\partial \theta^j} T_\theta|_{\theta=\theta_0}$.
In this case, we can write
\begin{align}
\frac{\partial}{\partial \theta^j}\Lambda_\theta(\rho)|_{\theta=\theta_0}
=\Tr_A[ F_j (I_B \otimes \rho)].\label{action of channel derivatives}
\end{align}

Since we allow access to an ancilla system $\mathcal H_C$ for both preparation and measurement of the probe state, instead of $\rho$ as the input state, we consider $\rho_{AC}$ on $\mathcal H_A \otimes \mathcal H_C$ as the input state. When the quantum channel still maps $\mathcal H_A$ to $\mathcal H_B$, then the output state that corresponds to our input state $\rho_{AC}$ at the true parameter value $\theta_0$ is 
\begin{align}
(\Lambda_\theta \otimes \iota_C)(\rho_{AC})|_{\theta=\theta_0}
&=\Tr_A[ (T \otimes I_C)(I_B \otimes \rho_{AC})], \label{output state}
\end{align}
and its derivatives are
\begin{align}
\frac{\partial}{\partial \theta^j}(\Lambda_\theta \otimes \iota_C)(\rho_{AC})|_{\theta=\theta_0}
=&
\Tr_A[ (F_j  \otimes I_C)(I_B \otimes \rho_{AC})].\label{output state derivatives}
\end{align}
In this setting, 
the joint system ${\cal H}_B \otimes {\cal H}_C$
is accessible for our measurement for our estimation. Then,
the CR-type bound 
$\bar{S}_k[\rho_{AC}]$
of channel estimation problem as given in \eqref{Sk-AC} can be written as 
\begin{align}
&\bar{S}_k[\rho_{AC}] \notag\\
=&
S_k[\Tr_A[ (T\otimes I_C) (I_B \otimes \rho_{AC})],
(\Tr_A [(F_j\otimes I_C)( I_B \otimes \rho_{AC})])_j ]
\label{Sk-AC-Choi}
\end{align}
with $k=1,2,3,4,5$. 
In the following, the pair $(T,(F_j)_j)$ is called the channel model.
In particular, when we need to clarify the dependence of the channel model,
$\bar{S}_k[\rho_{AC}]$ is denoted by $\bar{S}_k[T,(F_j)_j,\rho_{AC}]$.

Note that the operators in the left sides of \eqref{output state} and \eqref{output state derivatives} act on the space $\mathcal H_B \otimes \mathcal H_C$.
In the conic programming formulation of \eqref{Sk-AC}, we require the introduction of the space ${\cal R}_C = \mathbb C^{d+1}$, and consider conic programming over the tripartite system 
${\cal R}_C \otimes {\cal H}_B \otimes {\cal H}_C$.
For $k=1,2,3,4$, we define the cones we consider are operators on ${\cal R}_C \otimes {\cal H}_B \otimes {\cal H}_C$ which are equal to ${\cal S}^k$ as explained in Section \ref{sec:estimation} under the choice ${\cal H}={\cal H}_B \otimes {\cal H}_C$ as ${\cal S}^k_{BC}$.
Following the notation of Section \ref{sec:estimation}, we use $X$ on ${\cal R}_C \otimes {\cal H}_B \otimes {\cal H}_C$ to optimize within the appropriate cones.

After solving \eqref{Sk-AC}, we need to perform a subsequent minimization over all states $\rho_{AC}$ on the space $\mathcal H_A \otimes \mathcal H_C$, which is a challenging task.
To overcome this challenge, we formulate new conic programs with optimal values equal to those given by \eqref{min-Sk-AC}.

Our formulation of new conic programs draws upon the insight that the space $\mathcal H_C$ is in fact 
isomorphic to the space  $\mathcal H_A$, and that we may consider optimizing over cones on 
${\cal R}_C \otimes {\cal H}_B \otimes {\cal H}_A$
instead of 
${\cal R}_C \otimes {\cal H}_B \otimes {\cal H}_C$.
This means that instead of considering the cones ${\cal S}^k_{BC}$,
we like to consider the cones ${\cal S}^k_{BA}$, where we obtain ${\cal S}^k_{BA}$
by replacing $\mathcal H$ with ${\cal H}_B \otimes {\cal H}_A$ in the definition of $\mathcal S^k$.

One challenge in this idea is to be able to construct $\rho_{AC}$ from the optimal solution of the cone. Denoting $Y$, an operator on ${\cal R}_C \otimes {\cal H}_B \otimes {\cal H}_A$, as the optimization variable in this case, this challenge can be reduced to that of finding an appropriate density operator $\rho_A(Y)$ on $\mathcal H_A$ that depends on $Y$, and subsequently purifying $\rho_A(Y)$ to $\rho_{AC}$.
To formulate the objective function of the conic programs on $Y$, 
we revisit the optimization of \eqref{Sk-AC-Choi}.
Note that, the objective function of \eqref{Sk-AC-Choi} is 
\begin{align}
&\Tr[  G \otimes \Tr_A [ (T\otimes I_C) (I_B \otimes \rho_{AC})] X ]\notag\\
=&
\Tr[  G   \otimes( ( T\otimes I_C) (I_B \otimes \rho_{AC})) (I_A \otimes X )]\notag\\
=&
\Tr[(  G \otimes T) \Tr_C[(I_{RB} \otimes \rho_{AC}) (I_A \otimes X )] ].\notag
\end{align}
Identifying $Y$ as $ \Tr_C[(I_{RB} \otimes \rho_{AC}) (I_A \otimes X )]$, the objective function of the our conic programs becomes
\begin{align}
\Tr[Y ( G \otimes T) ].\notag
\end{align}
We establish an appropriate relation between the variable $Y$ and some $\rho_{AC}$ through the conditions:
  
\begin{description}
\item[(i)]
Given fixed $Y$ on $\mathcal R_C \otimes \mathcal H_B \otimes \mathcal H_A$, there exists a state $\rho_A$ on ${\cal H}_A$ such that 
\begin{align}
\Tr_{R}[ Y (|0\rangle \langle 0|\otimes I_{AB})]
=
I_B \otimes \rho_A\Label{NMZA}.
\end{align}
\item[(ii)]
\begin{align}
\frac{1}{2}\Tr[ Y
( (|0\rangle \langle j'|+|j'\rangle \langle 0|) \otimes F_j)] =\delta_{j,j'}\notag
\end{align}
for $j,j'=1, \ldots, d$.
\end{description}
While the condition (ii) is a linear constraint, it is not immediately apparent how condition (i) can be written as a linear constraint.
However, we point out that the condition (i) is in fact equivalent to the following linear constraint.
\begin{description}
\item[(i')]
Let $\{|b\rangle\}$ be any orthonormal basis of ${\cal H}_B$.
For $b\neq b'$,
we have
\begin{align}
&\Tr_{RB}[ Y (|0\rangle \langle 0|\otimes 
I_A
\otimes |b\rangle \langle b'|)]
=0 ,\Label{NBFY2}\\
&\Tr_{RB} [Y (|0\rangle \langle 0|\otimes 
I_A
\otimes |b\rangle \langle b|)]\notag
\\
&=
\Tr_{RB}[ Y (|0\rangle \langle 0|\otimes 
I_A
\otimes |b'\rangle \langle b'|)]\Label{NBFY}
\end{align}
as operators on $\cH_A$.
Also, 
\begin{align}
\Tr[ Y (|0\rangle \langle 0|\otimes I_A\otimes |b\rangle \langle b|)]=1.\Label{ZKT}
\end{align}
\end{description}

The equivalence between (i) and (i') is shown as follows.
Since (i) $\Rightarrow$ (i') is trivial, 
we show (i') $\Rightarrow$ (i). 
We assume that (i').
We denote
$
\Tr_B [\Tr_{R}[ Y (|0\rangle \langle 0|\otimes I_{AB})]
(I_A \otimes |b\rangle \langle b|)]
=
\Tr_{RB}[ Y (|0\rangle \langle 0|\otimes 
I_A
\otimes |b\rangle \langle b|)]
$ by $\rho_A(Y)$.
Then, 
\eqref{ZKT} guarantees that 
$\rho_A(Y)$ is a state.
Also, \eqref{NBFY2} and \eqref{NBFY}
imply \eqref{NMZA}.
We obtain the condition (i). 
Therefore, the condition (i) is replaced by the linear constraint (i').
Furthermore, when we impose the condition (i), the operator
\begin{align}
\rho_A(Y) := \frac{1}{d_B}\Tr_{RB}[ Y(|0\rangle \langle 0|\otimes I_{AB})]
\label{rhoA def}
\end{align}
is a density matrix, where $d_B$ is the dimension of ${\cal H}_B$.
Given conditions (i') and (ii) along with the objective function $\Tr[Y(G \otimes T)]$, we can define our new conic programs. 
Namely for $k=1,2,3,4$, we define 
\begin{align}
J_k:= \min_{Y \in {\cal S}_{BA}^k}
\{\Tr [Y (G \otimes T)]|Y \hbox{ satisfies }
\hbox{(i'), (ii).} \}.\notag
\end{align}
For $k=5$, 
we define ${\cal S}^5_{BA}(T):=
{\cal S}({\cal R}_C\otimes {\cal H}_{AB})_{\rm P}\cap {\cal B}''_T$,
and define the conic linear programming.
\begin{align}
J_5:= \min_{Y \in {\cal S}_{BA}^5(T)}
\{\Tr [Y (G \otimes T)]|Y \hbox{ satisfies }
\hbox{(i'), (ii).} \}.\notag
\end{align}
When we need to clarify the dependence of the channel model $(T,(F_j)_j)$,
$J_k$ is denoted by $J_k[T,(F_j)_j]$. 

 \begin{figure}[htbp]  
    \centering
    \includegraphics[width=0.49\textwidth]{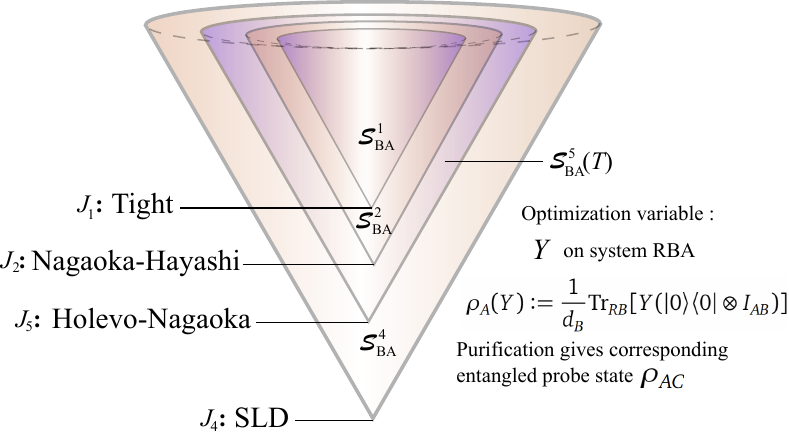}    
    \caption{.   
     \label{fig:cones} We depict the relationship between $J_1,J_2,J_5,J_4$ 
     and their associated cones $\mathcal S^1_{BA},\mathcal S^2_{BA},\mathcal S^5_{BA}(T),\mathcal S^4_{BA}.$ The optimal values of the these optimizations, which are minimizations, are the tight bound, the Nagaoka-Hayashi bound, 
     the Holevo-Nagaoka bound, and the SLD bound respectively. We pictorally illustrate that $J_1 \ge J_2 \ge J_5 \ge J_4$.
     We denote the optimization variable for all of these minimizations as $Y$, which is a matrix supported on $\mathcal R_C \otimes \mathcal H_B \otimes \mathcal H_A$.
From the optimal solution $Y^*$ of any of these conic programmings, we can derive a corresponding $\rho_A(Y^*)$, whose purification to system AC yields the corresponding optimal probe state.}  
\end{figure}

The main result of our paper is the following theorem.
\begin{theorem}\Label{NZD}
For $k=1,2,3,4,5$, we have
\begin{align}
J_k=\bar S_k. \Label{MAIN}
\end{align}
\end{theorem}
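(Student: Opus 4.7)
My plan is to prove $J_k = \bar S_k$ by constructing an explicit correspondence
\[
Y \;\longleftrightarrow\; (\rho_{AC}, X), \qquad Y = \Tr_C\bigl[(I_{RB} \otimes \rho_{AC})(I_A \otimes X)\bigr],
\]
between the variables of the two minimizations, and then verifying that this correspondence translates feasibility constraints and objective values in both directions.

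\textbf{Direction $J_k \le \bar S_k$.} Starting from any pure $\rho_{AC}$ and any $X \in \mathcal S^k_{BC}$ feasible for the inner problem defining $\bar S_k[\rho_{AC}]$, I would define $Y$ by the formula above. The trace manipulations presented just before Theorem \ref{NZD} already give $\Tr[(G \otimes (\Lambda_{\theta_0} \otimes \iota_C)(\rho_{AC}))X] = \Tr[Y(G \otimes T)]$, so the objective values agree. Pushing the partial trace through the completeness condition \eqref{c1} for $X$ yields \eqref{NBFY2}--\eqref{ZKT}, which is exactly condition (i') with $\rho_A = \Tr_C \rho_{AC}$; substituting \eqref{output state derivatives} into the local-unbiasedness condition \eqref{c2} converts it into (ii). For each $k$, the map $X \mapsto \Tr_C[(I_{RB} \otimes \rho_{AC})(I_A \otimes X)]$ is a completely positive linear map from operators on $\mathcal R \otimes \mathcal H_B \otimes \mathcal H_C$ to operators on $\mathcal R \otimes \mathcal H_B \otimes \mathcal H_A$ that acts trivially on the $\mathcal R$ and $\mathcal H_B$ factors, so it preserves positivity, PPT, separability across the $\mathcal R$ vs.\ $\mathcal H_{BA}$ cut, and the reality/symmetry constraints defining $\mathcal B''$; the additional constraint \eqref{c3} needed for $k=5$ is matched because $T$ on the $Y$ side and $(\Lambda_{\theta_0} \otimes \iota_C)(\rho_{AC})$ on the $X$ side are linked precisely by the correspondence.

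\textbf{Direction $J_k \ge \bar S_k$.} Given feasible $Y \in \mathcal S^k_{BA}$, I construct an explicit pair $(\rho_{AC}, X)$ with matching objective. Set $\rho_A := \rho_A(Y)$ from \eqref{rhoA def}, which condition (i') guarantees is a density matrix. Assuming first that $\rho_A$ is invertible, take the canonical purification $|\psi\rangle_{AC} := (\sqrt{\rho_A} \otimes I_C)|I\rangle_{AC}$ with $|I\rangle_{AC} = \sum_j |e_j\rangle_A |e_j\rangle_C$, and set $\rho_{AC} := |\psi\rangle\langle\psi|$. Define $\tilde X := (I_{RB} \otimes \rho_A^{-1/2}) Y (I_{RB} \otimes \rho_A^{-1/2})$ on $\mathcal R \otimes \mathcal H_B \otimes \mathcal H_A$, then transport it to an operator $X$ on $\mathcal R \otimes \mathcal H_B \otimes \mathcal H_C$ via the basis identification $|e_j\rangle_A \leftrightarrow |e_j\rangle_C$ built into $|I\rangle_{AC}$. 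A short computation in components confirms $Y = \Tr_C[(I_{RB} \otimes \rho_{AC})(I_A \otimes X)]$, whereupon all the identities of the easy direction run in reverse, delivering both the equality of objectives and the implications (i') $\Rightarrow$ \eqref{c1}, (ii) $\Rightarrow$ \eqref{c2}.

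\textbf{Main obstacle.} The technical core is preservation of cone membership under the inverse map $Y \mapsto \tilde X$. Conjugation by $I_{RB} \otimes \rho_A^{-1/2}$ is a local invertible Hermitian action on the $\mathcal H_A$ register that commutes with partial transposition over $\mathcal R$ and with the reality/symmetry constraints there, so it preserves positivity, PPT, separability, and membership in $\mathcal B''$; the subsequent transport $\mathcal H_A \to \mathcal H_C$ is a basis-change unitary and also preserves all relevant structure. The $k=5$ case requires extra bookkeeping: $\mathcal S^5_{BA}(T)$ uses the Choi matrix $T$ while $\mathcal S^5_{BC}((\Lambda_{\theta_0} \otimes \iota_C)(\rho_{AC}))$ uses the output state, and matching them reduces to the identity $\Tr[Y(M \otimes T)] = \Tr[X(M \otimes (\Lambda_{\theta_0} \otimes \iota_C)(\rho_{AC}))]$ valid for every symmetric $M$ on $\mathcal R$ under the correspondence. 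The remaining subtlety is singular $\rho_A(Y)$, which I would resolve either by restricting everything to $\mathrm{supp}(\rho_A)$ and replacing $\rho_A^{-1/2}$ by its Moore--Penrose pseudoinverse (after using positivity of $Y$ together with (i') to verify that the complementary block of $Y$ contributes nothing to either the objective or the linear constraints), or by a continuity argument via $\rho_A \to \rho_A + \epsilon I$ and passage to the limit.
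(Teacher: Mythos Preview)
Your proposal is correct and follows essentially the same route as the paper: the paper proves $J_k\le\bar S_k$ via the map $Y=\Tr_C[(I_{RB}\otimes\rho_{AC})(I_A\otimes X)]$ (Lemma~\ref{L1}), proves the reverse inequality by the construction $X=(U\otimes I_{RB})(\rho_A^{-1/2}\otimes I_{RB})Y(\rho_A^{-1/2}\otimes I_{RB})(U^\dagger\otimes I_{RB})$ with $\rho_{AC}$ the purification of $\rho_A(Y)$ (Lemma~\ref{L3}), and handles singular $\rho_A(Y)$ by a perturbation argument (Lemma~\ref{L2}). One caution: of your two options for the singular case, the continuity argument is the one that works cleanly (and is what the paper implements, via an explicit $Y_\epsilon$ that rescales the $|0\rangle\langle0|$ block and adds $\frac{\epsilon}{d_A}|0\rangle\langle0|\otimes I_{AB}$ so as to preserve condition~(ii) exactly); the pseudoinverse/support-restriction option is shakier, since positivity together with (i') forces the $Y_{0,j}$ blocks to vanish on $\ker\rho_A$ but does \emph{not} force the $Y_{j,j'}$ blocks with $j,j'\ge1$ to vanish there, and those are precisely the blocks entering the objective $\Tr[Y(G\otimes T)]$.
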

This theorem allows calculation of precision bounds for channel estimation optimized over probe states as given by $\bar S_k$ using the conic programs that correspond to $J_k$. 
Since the cones considered in $J_k$ are analogous to the cones considered in $S_k$ \cite{HO}, we know how to solve $J_k$ numerically. 
Furthermore for $k=2,3,4,5$, we can solve $J_k$ via semidefinite programming (SDP).
In contrast, the calculation of $J_1$ requires the minimization over a certain separable cone on $\mathcal R_C \otimes (\mathcal H_B \otimes \mathcal H_A)$. Such a type conic programming is discussed in Section IV of \cite{HO}.

In addition, 
in the single parameter case, i.e., $d=1$,
the SLD bound is attainable. 
\eqref{IN3} implies
\begin{align}
J_1 = J_2 = J_3 =J_5= J_4.\label{IN4}
\end{align}

First, 
$J_1$ has the operational meaning of the precision limit under the scenario shown in Fig. \ref{fig1:operational-meaning}.
We can choose our input state for each single input system 
individually.
Our measurement can be done over the joint system of
the single output system and the ancilla system of the single input.
In addition, we allow adaptive improvement for the choice of the input state and the measurement.

Second, $J_5$ has the operational meaning of the precision limit under the scenario shown in Fig. \ref{fig2:operational-meaning}.
Again, we can choose our input state for each single input system 
individually.
Our measurement can be done over the joint system of
all output systems and all ancilla systems of the inputs.
However, when our measurement can be done over the joint system of
the initial $k_1$ output systems and the initial $k_1$ ancilla systems of the inputs,
we can adaptively choose the next $k_2$ input states
depending on the above measurement outcome.
Then, depending on the above measurement outcome,
our measurement can be done over the joint system of
the next $k_2$ output systems and the next $k_2$ ancilla systems of the inputs.
Such an adaptive improvement over several rounds is allowed.

Now we sketch Theorem \ref{NZD}'s proof. 
First we can establish an upper bound on $J_k$ in terms of $\bar S_k$ in the following lemma. 
\begin{lemma}\Label{L1}
For $k=1,2,3,4,5$, we have
\begin{align}
J_k\le \bar S_k.\Label{MAIN1}
\end{align}
\end{lemma}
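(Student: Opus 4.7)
The plan is to prove $J_k \le \bar S_k$ by constructing, for any pure probe state $\rho_{AC} = |\psi\rangle\langle\psi|_{AC}$ and any feasible $X \in \mathcal S^k_{BC}$ for $\bar S_k[\rho_{AC}]$, an operator $Y$ on $\mathcal R_C \otimes \mathcal H_B \otimes \mathcal H_A$ that is feasible for $J_k$ and attains the same objective value. The natural candidate, suggested by the manipulation that motivates the definition of $J_k$ in the main text, is
\begin{align}
Y := \Tr_C\big[(I_{RB}\otimes \rho_{AC})(X\otimes I_A)\big].\notag
\end{align}
Taking infima over $X$ first and then over $\rho_{AC}$ yields $J_k \le \bar S_k$.

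The affine constraints and the objective match then follow from a single cyclic-trace manipulation. Using $\tilde\rho_{BC} = \Tr_A[(T\otimes I_C)(I_B\otimes \rho_{AC})]$ and cyclicity of the full trace on $\mathcal R_C\otimes\mathcal H_B\otimes\mathcal H_A\otimes\mathcal H_C$, one verifies $\Tr[Y(G\otimes T)] = \Tr[X(G\otimes \tilde\rho_{BC})]$, matching the two objectives; replacing $T$ by $F_j$ in exactly the same identity converts condition (ii) for $Y$ into the local-unbiasedness condition \eqref{c2} for $X$, and for $k=5$ sends the $\mathcal B''_T$-constraint on $Y$ to the $\mathcal B''_{\tilde\rho}$-constraint on $X$. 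For condition (i'), the completeness condition \eqref{c1} on $X$ forces the matrix entries $X_{00,bb',cc'} = \delta_{b,b'}\delta_{c,c'}$ in any product basis of $\mathcal H_B\otimes\mathcal H_C$; substituting this into $Y$ collapses $\Tr_{RB}[Y(|0\rangle\langle 0|\otimes I_A\otimes |b\rangle\langle b'|)]$ to $\delta_{b,b'}\Tr_C[\rho_{AC}]$, which is exactly what \eqref{NBFY2}, \eqref{NBFY}, and \eqref{ZKT} require.

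The main obstacle is verifying cone membership $Y \in \mathcal S^k_{BA}$ for all five values of $k$, for which I expect the following identity to be central. Writing $|\psi\rangle_{AC} = \sum_{a,c}\psi_{ac}|a\rangle|c\rangle$ in a product basis and defining the linear map $V:\mathcal H_A\to\mathcal H_C$ by $V|a\rangle = \sum_c\psi_{ac}|c\rangle$, a direct index computation gives
\begin{align}
\langle\phi|Y|\phi\rangle = \langle (I_{RB}\otimes V)\phi\,|\,X\,|\,(I_{RB}\otimes V)\phi\rangle \notag
\end{align}
for every $|\phi\rangle \in \mathcal R_C\otimes\mathcal H_B\otimes\mathcal H_A$. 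Once this identity is established, all five cone memberships follow uniformly: $X\succeq 0$ immediately implies $Y\succeq 0$ (handling $k=2,4$); applying the identity summand-by-summand to a separable decomposition $X=\sum_\ell P^R_\ell\otimes Q^{BC}_\ell$ transfers separability to $Y$ ($k=1$); and since $I_{RB}\otimes\rho_{AC}$ is identity on $\mathcal R_C$, partial transpose on $\mathcal R_C$ commutes with the whole construction, so $Y^{T_R} = \Tr_C[(I_{RB}\otimes\rho_{AC})(X^{T_R}\otimes I_A)]$ and the identity applied to $X^{T_R}$ yields $Y^{T_R}\succeq 0$ ($k=3$). The block-Hermitian structure required by $\mathcal B''$ is inherited from that of $X$ together with $\rho_{AC}^\dagger = \rho_{AC}$. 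Thus establishing the $V$-identity is the crux of the argument; all other verifications then reduce to routine index manipulations.
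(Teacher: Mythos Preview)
Your overall strategy coincides with the paper's: take any feasible $(\rho_{AC},X)$ for $\bar S_k[\rho_{AC}]$, set
\[
Y=\Tr_C\big[(I_{RB}\otimes\rho_{AC})(I_A\otimes X)\big],
\]
and check that the objective and the linear constraints (i'), (ii) transfer. That part is fine and matches the paper's Appendix~A.1 essentially line for line.

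The gap is in the cone–membership step, which you correctly identify as the crux. Your claimed identity
\[
\langle\phi|Y|\phi\rangle=\langle(I_{RB}\otimes V)\phi\,|\,X\,|\,(I_{RB}\otimes V)\phi\rangle
\]
is \emph{not} what the index computation gives. Writing $|\psi\rangle_{AC}=\sum_{a,c}\psi_{ac}|a,c\rangle$, one finds
\[
Y_{(rba),(r'b'a')}=\sum_{c,c'}\psi_{ac}\,\bar\psi_{a'c'}\,X_{(rbc'),(r'b'c)},
\]
whereas $[(I_{RB}\otimes V)^\dagger X(I_{RB}\otimes V)]_{(rba),(r'b'a')}=\sum_{c,c'}\bar\psi_{ac}\,\psi_{a'c'}\,X_{(rbc),(r'b'c')}$. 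The two differ by a partial transpose: in fact $Y=\big[(I_{RB}\otimes V)^\dagger X(I_{RB}\otimes V)\big]^{T_A}$, or equivalently $Y=(I_{RB}\otimes\bar V)^\dagger X^{T_C}(I_{RB}\otimes\bar V)$. Because partial transpose does not preserve positivity, your deduction ``$X\succeq 0\Rightarrow Y\succeq 0$'' does not follow. A concrete obstruction: take $d=2$, $\mathcal H_B$ trivial, $\dim\mathcal H_A=\dim\mathcal H_C=2$, $|\psi\rangle$ maximally entangled, and let $X\in\mathcal S^4_{BC}$ be the singlet supported on $\{|1\rangle_R,|2\rangle_R\}\otimes\mathcal H_C$ (extended by zero to the $|0\rangle_R$ block). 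Then $Y=\tfrac12X^{T_C}$ has eigenvalue $-\tfrac14$, so $Y\notin\mathcal S^4_{BA}$. The same issue propagates to your arguments for $k=1,3,5$, all of which rest on this identity.

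For comparison, the paper does \emph{not} use a $V$-identity. Its positivity argument is to pass to the sandwich $\sqrt{I_{RB}\otimes\rho_{AC}}\,(I_A\otimes X)\,\sqrt{I_{RB}\otimes\rho_{AC}}$, which is manifestly positive, and then claim that tracing out $C$ returns $Y$. So even at the level of mechanism your route diverges from the paper's, and you would need either to prove your $V$-identity (which fails as stated) or to switch to the paper's sandwich form and justify that the partial trace of the sandwich coincides with the $Y$ you defined.
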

To prove Lemma \ref{L1}, we show 
that given any solution $\rho_{AC}$ and $X$ to $\bar S_k$, we can also construct a corresponding solution $Y$ for $J_k$ with the same value for the objective function. 
We prove the inequality opposite to \eqref{MAIN1},
based on the discussion in Section \ref{sec:estimation}, 
we rewrite 
the constraints for the completeness condition and the locally unbiased condition as
\begin{align}
& \Tr_R [X (|0\rangle \langle 0|\otimes I_{BC})] 
=I_{BC} \Label{NB2}\\
&\frac{1}{2}\Tr [(I_A \otimes X ) 
((|0\rangle \langle j'|+|j'\rangle \langle 0|) \otimes F_j \otimes I_C)
(I_{RB}\otimes \rho_{AC})]
=\delta_{j,j'}. \Label{NB3}
\end{align}
Then, we show the following lemma.

\begin{lemma}\Label{L3}
For $k=1,2,3,4,5$,
we choose $Y \in {\cal S}_{BA}^k$ satisfying the conditions (i), (ii), and $\rho_A:=\rho_A(Y)>0$.
We diagonalize $\rho_A$ as
$\sum_{j=1}^{d_A} s_j |\phi_j\rangle \langle \phi_j|$.
We choose an orthonormal basis 
$\{\psi_j\}$ of ${\cal H}_C$.
We define a unitary map
$U:\phi_j \mapsto \psi_j $ from $\cH_A$ to $\cH_C$.
We choose $\rho_{AC}$ as the pure state
$\sum_{j=1}^{d_A}\sqrt{s_j}|\phi_j,\psi_j\rangle$, which is a purification of 
$\rho_A$.
Then, we have
\begin{align}
&\Tr [Y (G \otimes T)]  \ge \bar S_k.\Label{ZIT}
\end{align}

In addition,
we choose $X$
as 
\begin{align}
X:= 
(U\otimes I_{RB}) (\rho_A^{-1/2}\otimes I_{RB}) Y(\rho_A^{-1/2} \otimes I_{RB} )(U^\dagger \otimes I_{RB}) .
\Label{SSD}
\end{align}

For $k=1,2,3,4$, 
$X$ is an element of ${\cal S}_{BC}^k $,
and satisfies
the conditions \eqref{NB2} and \eqref{NB3},
and the following relation.
\begin{align}
\Tr [Y (G \otimes T) ]
=\Tr[
\Tr_{C}[ (I_{RB} \otimes \rho_{AC}) (I_A \otimes X) ]
 ( G \otimes T)]
.\Label{MXI}
\end{align}

For $k=5$,
$X$ is an element of ${\cal S}_{BC}^4 $,
and satisfies the conditions \eqref{NB2} and \eqref{NB3},
\eqref{MXI}, and 
\begin{align}
\Tr[ (I_A \otimes X ) 
((|i\rangle \langle j'|-|j'\rangle \langle i|) \otimes T \otimes I_C)
(I_{RB}\otimes \rho_{AC})]
=0.\Label{MXI2}
\end{align}
\end{lemma}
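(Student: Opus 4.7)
The plan is to prove \eqref{ZIT} by producing a feasible pair $(X,\rho_{AC})$ for the optimization defining $\bar{S}_k[\rho_{AC}]$ whose objective value matches $\Tr[Y(G\otimes T)]$. Given $Y$ satisfying (i), (ii) with $\rho_A:=\rho_A(Y)>0$, I take $\rho_{AC}$ to be the purification in the statement and $X$ as in \eqref{SSD}. The construction \eqref{SSD} realizes $X$ as a conjugation of $Y$ on its $\mathcal{H}_A$-factor by the invertible operator $U\rho_A^{-1/2}$, where $U$ is a unitary from $\mathcal{H}_A$ to $\mathcal{H}_C$. This conjugation is a completely positive invertible map on the relevant factor, so it preserves positive semidefiniteness, the PPT property (on the $\mathcal{R}_C$-vs-rest bipartition), and the separable product structure. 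Hence $Y\in\mathcal{S}^k_{BA}$ implies $X\in\mathcal{S}^k_{BC}$ for $k=1,2,3,4$. For $k=5$, the same reasoning places $X$ in $\mathcal{S}^4_{BC}$, and the additional relation \eqref{MXI2} is the image of the constraint defining $\mathcal{B}''_T$ (applied to $Y$) under this conjugation.

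Next I would verify the completeness condition \eqref{NB2} and local unbiasedness \eqref{NB3}. For completeness, substituting \eqref{SSD} and using that the conjugation by $U\rho_A^{-1/2}$ commutes with the partial trace over $\mathcal{R}$, I obtain $\Tr_R[X(|0\rangle\langle 0|\otimes I_{BC})]=(I_B\otimes U\rho_A^{-1/2})\Tr_R[Y(|0\rangle\langle 0|\otimes I_{AB})](I_B\otimes\rho_A^{-1/2}U^\dagger)$. By condition (i) the middle factor equals $I_B\otimes\rho_A$, and the outer factors reduce the expression to $I_B\otimes UU^\dagger=I_{BC}$. For local unbiasedness, I combine \eqref{SSD} with the standard purification identity $\Tr_C[(N\otimes I_C)\rho_{AC}]=\rho_A^{1/2}(U^\dagger N U)^T\rho_A^{1/2}$ (transpose in the $\{|\phi_j\rangle\}$ basis). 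The two $\rho_A^{-1/2}$ factors in $X$ cancel against the two $\rho_A^{1/2}$ factors from the purification, the $U^\dagger\cdot U$ sandwich cancels against $U\cdot U^\dagger$, and the left-hand side of \eqref{NB3} collapses to $\tfrac{1}{2}\Tr[Y((|0\rangle\langle j'|+|j'\rangle\langle 0|)\otimes F_j)]$, which equals $\delta_{j,j'}$ by condition (ii).

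The heart of the proof is the identity \eqref{MXI}, which matches the two objective values. I would show that $\Tr_C[(I_{RB}\otimes\rho_{AC})(I_A\otimes X)]$ equals $Y$ up to a partial transpose of the $\mathcal{H}_A$-factor, and then invoke the Hermitian $\mathcal{B}''$-structure of $Y$ together with the real symmetry of $G$ and Hermiticity of $T$ to conclude that this partial transpose is invisible under $\Tr[(G\otimes T)\,\cdot\,]$. Concretely, expanding $\rho_{AC}$ in its Schmidt basis $\{|\phi_j\rangle\otimes|\psi_j\rangle\}$ and substituting \eqref{SSD}, the $(U,U^\dagger)$ factors in $X$ cancel the Schmidt unitary on the $\mathcal{H}_C$-side of $\rho_{AC}$, and the $\rho_A^{-1/2}$ factors cancel the Schmidt coefficients $\sqrt{s_j}$, leaving the (partial-transposed) form of $Y$; the identity $(\rho_A^{1/2}\otimes U)|I\rangle=|\psi_{AC}\rangle$ provides clean bookkeeping. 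Combining \eqref{MXI} with the trivial inequality $\bar{S}_k[\rho_{AC}]\le\Tr[(G\otimes(\Lambda_\theta\otimes\iota_C)(\rho_{AC}))X]$, which holds because $(X,\rho_{AC})$ is feasible, yields \eqref{ZIT}. For $k=5$, the analogous partial-trace identity with $F_j$ replaced by $T$ turns the constraint \eqref{c3} on $Y$ into precisely \eqref{MXI2}.

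The main technical obstacle I anticipate is the careful tracking of tensor factor orderings, conjugation conventions, and especially the transpose appearing in the purification identity: a naive computation of $\Tr_C[(I_{RB}\otimes\rho_{AC})(I_A\otimes X)]$ returns the partial transpose of $Y$ rather than $Y$ itself, so one must invoke the Hermitian structure of $Y$ and the symmetry of $G\otimes T$ to show this ambiguity cancels in the final trace. Once this bookkeeping is in place, the rest is routine index manipulation.
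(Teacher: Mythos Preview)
Your overall strategy coincides with the paper's: construct $X$ by conjugating $Y$ on its $\mathcal{H}_A$-factor by $U\rho_A^{-1/2}$, verify cone membership via complete positivity of conjugation, check \eqref{NB2} using condition (i), and reduce everything else to a single partial-trace identity. The paper's proof hinges on establishing the \emph{exact} identity
\[
\Tr_C\bigl[(I_{RB}\otimes\rho_{AC})(I_A\otimes X)\bigr]=Y
\]
(its equation (NBT), obtained blockwise from the scalar identity $\Tr_C[\rho_{AC}(I_A\otimes U\rho_A^{-1/2}Z\rho_A^{-1/2}U^\dagger)]=Z$). From this, \eqref{MXI}, \eqref{NB3}, and \eqref{MXI2} drop out immediately by substitution, with no transpose bookkeeping at all.

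Where your proposal departs from the paper --- and where it has a genuine gap --- is your handling of the transpose. You assert that $\Tr_C[(I_{RB}\otimes\rho_{AC})(I_A\otimes X)]$ equals $Y$ only up to a partial transpose on $\mathcal{H}_A$, and then claim this transpose is ``invisible under $\Tr[(G\otimes T)\,\cdot\,]$'' by Hermiticity of $T$, real symmetry of $G$, and the $\mathcal{B}''$-structure of $Y$. This inference is false. Using $\Tr[A^{T_A}B]=\Tr[A\,B^{T_A}]$ one finds
\[
\Tr\bigl[(G\otimes T)\,Y^{T_A}\bigr]=\Tr\bigl[(G\otimes T^{T_A})\,Y\bigr],
\]
and nothing in your hypotheses forces $T^{T_A}=T$ (a generic Choi matrix is not invariant under partial transpose on $\mathcal{H}_A$). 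The same obstruction wrecks \eqref{NB3}: your argument would yield $\tfrac{1}{2}\Tr[Y((|0\rangle\langle j'|{+}|j'\rangle\langle 0|)\otimes F_j^{T_A})]=\delta_{j,j'}$, which is \emph{not} condition (ii), and likewise \eqref{MXI2} would involve $T^{T_A}$ rather than $T$. So if a partial transpose really appears, it does not cancel, and none of \eqref{MXI}, \eqref{NB3}, \eqref{MXI2} --- hence \eqref{ZIT} --- follows from your outline. You must either establish the identity $\Tr_C[(I_{RB}\otimes\rho_{AC})(I_A\otimes X)]=Y$ on the nose, as the paper does, or modify the construction (e.g.\ of the purification or of $X$) so that no transpose survives.
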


Lemma \ref{L3} assumes that $Y$ satisfies the full rank condition 
for $\rho_A(Y)$.
If the minimization in $J_k$ is achieved by an operator $Y$ to satisfy this condition, 
we obtain Theorem \ref{NZD}.
However, there is a possibility that 
the optimal operator $Y$ does not satisfy this condition.
To cover such a possibility, we need the following technical lemma.

\begin{lemma}\Label{L2}
We have
\begin{align}
J_k= \inf_{Y \in \mathcal S^k_{BA}}
\left\{\Tr [Y (G \otimes T)]\left|
\begin{array}{l}
Y \hbox{ satisfies }
\hbox{(i), (ii)
and, }\\
\rho_A(Y)>0, i.e.,\\
 \rho_A(Y) \hbox{ is full rank.}
\end{array}
\right.\right \}\Label{MMFB}
\end{align}
for $k=1,2,3,4$, and 
\begin{align}
J_5= \inf_{Y \in \mathcal S^5_{BA}(T)}
\left\{\Tr [Y (G \otimes T)]\left|
\begin{array}{l}
Y \hbox{ satisfies }
\hbox{(i), (ii)
and, }\\
\rho_A(Y) \hbox{ is full rank.}
\end{array}
\right.\right
 \}\Label{MMFB5}
\end{align}
\end{lemma}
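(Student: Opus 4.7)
\textbf{Proof plan for Lemma~\ref{L2}.} The inequality $J_k \leq$ (right-hand infimum) of \eqref{MMFB} (resp.\ \eqref{MMFB5}) is immediate, since the right-hand side minimizes the same linear objective over a subset of the feasible set of $J_k$. My plan is to prove the reverse inequality by a convex-combination perturbation: mix any feasible $Y$ for $J_k$ with a carefully chosen anchor $Y_0$ whose $\rho_A(Y_0)$ is full rank, producing a curve of feasible points for the constrained infimum whose objective values converge to $\Tr[Y(G\otimes T)]$.

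\textbf{Anchor construction.} I would take the maximally entangled probe $|\Phi\rangle\langle\Phi| = \frac{1}{d_A}|I\rangle\langle I|$ on $\mathcal{H}_A \otimes \mathcal{H}_C$, whose reduced state is $\rho_A = I_A/d_A > 0$. Pick any locally unbiased POVM estimator $(\Pi_0, \hat\theta_0)$ for the output state model $((\Lambda_\theta\otimes\iota_C)(|\Phi\rangle\langle\Phi|))_\theta$ and set
\[
X_0 := X(\Pi_0, \hat\theta_0) \in \mathcal{S}^1_{BC}, \quad
Y_0 := \Tr_C\!\big[(I_{RB}\otimes |\Phi\rangle\langle\Phi|)(I_A\otimes X_0)\big].
\]
The same computation that underlies Lemma~\ref{L1} shows that $Y_0 \in \mathcal{S}^1_{BA}\subseteq\mathcal{S}^k_{BA}$, that $Y_0$ satisfies (i') and (ii), and that $\rho_A(Y_0) = I_A/d_A > 0$. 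For $k=5$ I would further pick $X_0 \in \mathcal{S}^5_{BC}(\tilde\rho)$, where $\tilde\rho := (\Lambda_{\theta_0}\otimes\iota_C)(|\Phi\rangle\langle\Phi|) = T/d_A$ under the canonical $\mathcal{H}_C\cong\mathcal{H}_A$ identification; for instance, any optimizer of the Holevo--Nagaoka program on $\tilde\rho$ works, and the explicit relation $\tilde\rho = T/d_A$ transports the $\tilde\rho$-antisymmetry of $X_0$ into the required $T$-antisymmetry of $Y_0$, placing $Y_0 \in \mathcal{S}^5_{BA}(T)$.

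\textbf{Perturbation.} For any $Y$ feasible for $J_k$ and $\epsilon\in(0,1)$, set $Y_\epsilon := (1-\epsilon)Y + \epsilon Y_0$. Convexity of the cones gives $Y_\epsilon$ in the relevant cone; the constraints (i'), (ii), and the $T$-antisymmetry defining $\mathcal{B}''_T$ are linear with affine right-hand sides, so each holds for $Y_\epsilon$ whenever it holds for both $Y$ and $Y_0$; and $\rho_A(Y_\epsilon) = (1-\epsilon)\rho_A(Y) + (\epsilon/d_A)I_A > 0$, so $Y_\epsilon$ lies in the restricted feasible set. Linearity of the objective gives $\Tr[Y_\epsilon(G\otimes T)] \to \Tr[Y(G\otimes T)]$ as $\epsilon\to 0^+$, so the right-hand infimum is at most $\Tr[Y(G\otimes T)]$. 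Taking the infimum over $Y$ closes the argument.

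\textbf{Main obstacle.} The delicate step is the $k=5$ case, since $\mathcal{S}^5_{BA}(T)$ depends on $T$ via the extra antisymmetry constraint that the anchor must also respect; the cleanest resolution is to start from an $X_0 \in \mathcal{S}^5_{BC}(\tilde\rho)$ and use $\tilde\rho = T/d_A$ to pull the constraint back to $Y_0$. The only auxiliary assumption is the existence of a locally unbiased estimator on $\tilde\rho$, a generic non-degeneracy condition (non-singularity of the SLD Fisher matrix of the output state model) that may be imposed without loss of generality, since otherwise $J_k=+\infty$ and the lemma is vacuous.
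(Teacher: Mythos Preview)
Your approach is correct and takes a genuinely different route from the paper's. You use a standard interior-point/Slater argument: construct one feasible anchor $Y_0$ with $\rho_A(Y_0)=I_A/d_A>0$ (via the maximally entangled input and the computation from Lemma~\ref{L1}), then take convex combinations $Y_\epsilon=(1-\epsilon)Y+\epsilon Y_0$. The paper instead perturbs any feasible $Y$ directly, without constructing an anchor: it conjugates $Y$ by $\big(\sqrt{1-\epsilon}\,|0\rangle\langle 0|+(1-\epsilon)^{-1/2}(I_R-|0\rangle\langle 0|)\big)\otimes I_{AB}$, which scales the $|0\rangle\langle 0|$-block by $1-\epsilon$ while leaving the off-diagonal $|0\rangle\langle j|$-blocks (and hence constraint (ii)) untouched, and then adds $\tfrac{\epsilon}{d_A}|0\rangle\langle 0|\otimes I_{AB}$ to restore the trace condition in (i) and fill the rank. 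The paper's trick is entirely self-contained, whereas yours relies on the existence of a locally unbiased estimator for the maximally entangled input, which you flag as a non-degeneracy hypothesis. In fact that hypothesis is automatic whenever $J_k<\infty$: feasibility of (ii) produces Hermitian $Z^{j'}$ on $\mathcal{H}_{BA}$ with $\Tr F_j Z^{j'}=\delta_{j,j'}$, and since the output model for the maximally entangled input has $D_j=F_j/d_A$ after identifying $\mathcal{H}_C\cong\mathcal{H}_A$, the operators $d_A Z^{j'}$ serve as dual observables there, so the model is regular. One simplification you could make: your separate treatment of $k=5$ is unnecessary, because $X(\Pi_0,\hat\theta_0)\in\mathcal{S}^1_{BC}\subset\mathcal{S}^2_{BC}$, and every element of $\mathcal{S}^2_{BC}$ automatically satisfies the antisymmetry constraint \eqref{c3} (the blocks $X_{i,j}=X_{j,i}$ are Hermitian), so the same anchor $Y_0$ already lands in $\mathcal{S}^5_{BA}(T)$.
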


Notice that both sides in the definition \eqref{SSD}
act on
${\cal H}_{R}\otimes {\cal H}_{B}\otimes {\cal H}_{C}$
because $U$ maps $\cH_A$ to $\cH_C$.
The combination of Lemma \ref{L2} and 
\eqref{ZIT} in Lemma \ref{L3} yields
the $\ge$ part of \eqref{MAIN}
while Lemma \ref{L1} shows
the $\le$ part of \eqref{MAIN}.
Therefore, the combination of Lemmas \ref{L1},
\ref{L3}, and \ref{L2} shows Theorem \ref{NZD}.
In addition, Lemmas \ref{L1},
\ref{L3}, and \ref{L2} are shown in Appendix \ref{sec:proofs}.

The pure state $\rho_{AC}$ given in Lemma \ref{L3} 
is the input state to achieves 
the value $\Tr [Y (G \otimes T)]$ in the respective bounds.
The operator $X$ given in Lemma \ref{L3} 
achieves 
the value $\Tr [Y (G \otimes T)]$ in the respective 
conic linear programming.

\section{Case with maximally entangled input state}
\label{sec:maxent}
\subsection{Notations}
The channel model is given as
the pair $(T^{\otimes n},(F_j^{(n)})_j)$.
Fixing the input state to be the maximally entangled state 
$|\Phi\rangle \langle \Phi|$, for $k=1,2,3,4,5$,
gives the following inequalities.
\begin{align}
J_k[T,(F_j)_j] &\le \bar{S}_k[T,(F_j)_j,|\Phi\rangle \langle \Phi|] \Label{NBR1}.
\end{align}
The aim of this section is to derive a necessarily and sufficient condition for the equality of the above inequality for $k=4,5$.

For this aim, we discuss when a maximally entangled input state
realizes the minimum value in the respective bounds.
In this section, we simplify 
$J_k[T,(F_j)_j] $ and 
$ \bar{S}_k[T,(F_j)_j,|\Phi\rangle \langle \Phi|] $
to
$J_k$ and $ \bar{S}_k[|\Phi\rangle \langle \Phi|] $, respectively.
We employ the normalized Choi state 
$T_N:= \frac{1}{d_A}T$ on ${\cal H}_{AB}$.
Also, we employ $F_{j,N}:= \frac{1}{d_A}F_j$.
Then, we impose the following conditions to $Y_N:= d_A Y$.

\begin{description}
\item[(i-N)]
Given fixed $Y_N$ on $\mathcal H_R \otimes \mathcal H_B \otimes \mathcal H_A$, there exists a state $\rho_A$ on ${\cal H}_A$ such that 
\begin{align}
\Tr_{R}[ Y_N (|0\rangle \langle 0|\otimes I_{AB})]
=
I_B \otimes d_A \rho_A\Label{NMZA-N}.
\end{align}
\item[(ii-N)]
\begin{align}
\frac{1}{2}\Tr[ Y_N
( (|0\rangle \langle j'|+|j'\rangle \langle 0|) \otimes F_{N,j})] =\delta_{j,j'}\notag
\end{align}
for $j,j'=1, \ldots, d$.
\end{description}
The condition (i-N) is equivalent to the following linear constraint.
\begin{description}
\item[(i'-N)]
Let $\{|b\rangle\}_{b=1}^{d_B}$ be any orthonormal basis of ${\cal H}_B$.
For $b \in \{1,\ldots, d_B-1\}$ and
$b' \in \{2,\ldots, d_B-1\}$ with $b>b'$
we have
\begin{align}
&\Tr_{RB}[ Y_N (|0\rangle \langle 0|\otimes 
I_A
\otimes |b \rangle \langle b'|)]
=0 ,\Label{NBFY2-N}\\
&\Tr_{RB} [Y_N (|0\rangle \langle 0|\otimes 
I_A
\otimes (|b\rangle \langle b|-|b+1\rangle \langle b+1|)
]
= 0
\Label{NBFY-N}
\end{align}
as operators on $\cH_A$.
Also, 
\begin{align}
\Tr[ Y_N (|0\rangle \langle 0|\otimes I_A\otimes |1\rangle \langle 1|)]=d_A.\Label{ZKT-N}
\end{align}
\end{description} 

Then, $J_k$ for $k=1,2,3,4$ is rewritten as
\begin{align}
J_k= \min_{Y_N \in {\cal S}_{BA}^k}
\{\Tr [Y_N (G \otimes T_N)]|Y_N \hbox{ satisfies }
\hbox{(i'-N), (ii-N).} \}.\Label{NVF1}
\end{align}
$J_5$ is rewritten as
\begin{align}
J_5= \min_{Y \in {\cal S}_{BA}^5(T_N)}
\{\Tr [Y_N (G \otimes T_N)]|Y_N \hbox{ satisfies }
\hbox{(i'-N), (ii-N).} \}.\Label{NVF2}
\end{align}

When the input state is fixed as a maximally entangled state $|\Phi\rangle$, the bound is 
$\bar{S}_k[|\Phi\rangle\langle \Phi|]$.
Then, we have $T_N= \Tr_A[ (T\otimes I_C) (I_B \otimes |\Phi\rangle\langle \Phi|)]$.
Hence, by replacing the system $\cH_C$ by $\cH_A$, we rewrite 
$\bar{S}_k[|\Phi\rangle\langle \Phi|]$ as
\begin{align}
&\bar{S}_k[|\Phi\rangle\langle \Phi|] \nonumber \\
=&
 \min_{Y_N \in {\cal S}_{BA}^k}
\{\Tr [Y_N (G \otimes T_N)]|Y_N \hbox{ satisfies }
\hbox{(i''), (ii-N).} \} \Label{MIN1}\\
=&{S}_k[T_{N},(F_{j,N})_j], \notag\\
&\bar{S}_5[|\Phi\rangle\langle \Phi|]\notag\\
=&
 \min_{Y_N \in {\cal S}_{BA}^5(T_N)}
\{\Tr [Y_N (G \otimes T_N)]|Y_N \hbox{ satisfies }
\hbox{(i''), (ii-N).} \} \Label{MIN2}\\
=&{S}_5[T_{N},(F_{j,N})_j],\notag
\end{align}
for $k=1,2,3,4$,
where the condition (i'') is defined as
\begin{description}
\item[(i'')]
\begin{align}
\Tr_{R}[ Y_N (|0\rangle \langle 0|\otimes I_{AB})]
=
I_{AB}\Label{NMTT}.
\end{align}
\end{description}

\if0
That is, for $k=1,2,3,4,5$, we have
\begin{align}
\bar{S}_k[|\Phi\rangle\langle \Phi|]=
{S}_k[T_{N},(F_{j,N})_j].\notag
\end{align}
\fi
Therefore, we find that the difference between 
$J_k$ and $\bar{S}_k[|\Phi\rangle\langle \Phi|]$
is characterized by the difference between the conditions
(i'-N) and (i'').
To discuss this difference, 
we focus on
${S}_k[T_{N},(F_{j,N})_j]$, i.e., the model $(T_{N},(F_{j,N})_j)$
in the following discussion.
\if0
When a parametrization satisfies the following condition,
the parametrization is called a canonical parametrization.
$G$ is a diagonal matrix, and the SLD Fisher information matrix is 
the identity matrix.
Even when a parametrization is not canonical,
this condition can be satisfied by applying a suitable linear transformation to the given parametrization.
We assume that 
the model $(T_N, (F_{j,N})_j)$ has a canonical parametrization.
\fi

\subsection{Equality condition for $k=4$}
We choose $L^i_*:=L^i[T_N, (F_{j,N})_j]$, and define
$\vec{L}_*=(L^i_*)_i$.
For a vector of Hermitian matrices
$\vec{Z}=(Z^j)_j$, we define the operator
\begin{align}
W_{\rm SLD}(T_N,\vec{Z}):=\sum_{1\le i,j\le d} G_{j,i} Z^i T_N Z^j.\notag
\end{align}
We define the subset ${\cal K} \subset {\cal B}_{sa}({\cal H}_{AB})$ as
\begin{align}
{\cal K}:= \{X \in {\cal B}_{sa}({\cal H}_{AB})|
\Tr_{B}X \hbox{ is a constant times of }I_A.\}\notag
\end{align}
This subset relates to dual matrix-valued variable that corresponds to the condition 
(i'-N) in the primal problem. We supply the precise description of the dual problem in 
Appendix \ref{B1}.

 Then, we have the following theorem.
\begin{theorem}\Label{NNT} 
The following conditions are equivalent.
\begin{description}
\item[(A1)]
\begin{align}
J_4=S_4[|\Phi\rangle\langle \Phi|].\notag
\end{align}
\item[(A2)]
$\Tr_{B}W_{\rm SLD}(T_N,\vec{L}_*)$ belongs to ${\cal K}$.
\end{description}
\end{theorem}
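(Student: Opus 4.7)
The plan is to prove Theorem \ref{NNT} via Lagrangian duality for the SDPs defining $J_4$ and $S_4[|\Phi\rangle\langle\Phi|] = S_4[T_N,(F_{j,N})_j]$. Both SDPs minimize $\Tr[Y_N(G \otimes T_N)]$ over $Y_N \in \mathcal S^4_{BA}$ subject to the locally unbiased constraint (ii-N), differing only in the reduced-state constraint: (i'') for $S_4[|\Phi\rangle\langle\Phi|]$ fixes $\Tr_R[Y_N(|0\rangle\langle 0|\otimes I_{AB})] = I_{AB}$, while (i'-N) for $J_4$ allows $I_B \otimes d_A \rho_A$ for any state $\rho_A$. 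Hence the feasible set of $S_4[|\Phi\rangle\langle\Phi|]$ is contained in that of $J_4$ as the special case $\rho_A = I_A/d_A$, giving $J_4 \le S_4[|\Phi\rangle\langle\Phi|]$.

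Next I would write down the Lagrangian duals of both SDPs. For $S_4[T_N,(F_{j,N})_j]$, the dual variable for (i'') is an unrestricted Hermitian operator $M$ on $\mathcal H_{AB}$ and the dual variables for (ii-N) are reals $\lambda^j$. Standard SLD-bound analysis, using the characterizing identity $T_N \circ L^i_* = \sum_j (J_{\rm SLD}^{-1})^{ij} F_{j,N}$, identifies the dual optimum with $M^* = W_{\rm SLD}(T_N,\vec L_*)$ and dual value $\Tr[W_{\rm SLD}(T_N,\vec L_*)] = \sum_{ij} G_{ij}(J_{\rm SLD}^{-1})^{ij} = S_4[|\Phi\rangle\langle\Phi|]$ (strong duality holds by Slater's condition inside the PSD cone). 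For $J_4$, the dual takes the same form except that the relaxation from (i'') to (i'-N) restricts $M$ to annihilate the extra primal freedom in $\rho_A$, which works out to be exactly $\Tr_B M \propto I_A$, i.e., $M \in \mathcal K$; the scalar proportionality is pinned down by the normalization (ZKT-N).

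By strong duality on both sides, $J_4 = S_4[|\Phi\rangle\langle\Phi|]$ holds if and only if the unrestricted dual optimum $M^* = W_{\rm SLD}(T_N,\vec L_*)$ is already feasible for the restricted dual of $J_4$, i.e., lies in $\mathcal K$; this is precisely condition (A2). The direction (A2)~$\Rightarrow$~(A1) is then immediate because $M^*$ attains the larger dual's optimum while being feasible for the smaller dual. For (A1)~$\Rightarrow$~(A2), I would argue that the dual optimum of $J_4$ must coincide with $M^*$ up to the gauge freedom in the locally unbiased constraints (which does not affect the partial-trace structure), so $M^* \in \mathcal K$.

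The main obstacle will be to identify the dual subspace induced by relaxing (i'') to (i'-N) as exactly $\mathcal K$. Writing $\rho_A = I_A/d_A + \delta_A$ with $\delta_A$ traceless Hermitian, the extra primal freedom lies along directions $I_B \otimes \delta_A$; dual feasibility therefore requires $\Tr[M(I_B \otimes \delta_A)] = 0$ for every such $\delta_A$, which is equivalent to $\Tr_B M$ being proportional to $I_A$. Threading this partial-trace accounting through the full Lagrangian, while maintaining strict feasibility for Slater's condition and correctly handling the scalar normalization, is the most delicate part of the argument.
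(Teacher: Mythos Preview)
Your approach is essentially the paper's: dualize both SDPs, observe that the only difference is that the dual variable $W$ attached to the reduced-state constraint ranges over all of $\mathcal T_{sa}(\mathcal H_{AB})$ for $S_4[|\Phi\rangle\langle\Phi|]$ but is confined to $\mathcal K$ for $J_4$, and then identify the $S_4$-dual optimizer's $W$-component as $-W_{\rm SLD}(T_N,\vec L_*)$ (the paper does this in Lemma~\ref{LE1} by exhibiting the primal optimum $X_*^{i,j}=L_*^iL_*^j$ and verifying complementary slackness). Two points to tighten. First, the full dual carries an additional family $B\in\mathcal T_{sa}(\mathcal H_{AB})^d$ coming from the Hermiticity constraint $X_{k,0}=X_{k,0}^\dagger$ in the definition of $\mathcal B''$, and (ii-N) consists of $d^2$ equations so its multiplier is a matrix $A\in\mathbb R^{d\times d}$, not a vector of scalars $\lambda^j$. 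Second, and more substantively, your (A1)$\Rightarrow$(A2) step invokes an unspecified ``gauge freedom in the locally unbiased constraints'' --- there is no such freedom, and what is actually required is uniqueness of the $W$-component of the $S_4$-dual maximizer; the paper also leaves this implicit (it simply speaks of ``the'' maximizer), but it follows from complementary slackness against the explicit primal optimum $X_*$.
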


\subsection{Equality condition for $k=5$}
Given the weight matrix $G$, 
for simplicity, we choose the new parameter 
$\tilde{\theta}:= \sqrt{G}\theta$.
By using the estimator $\hat{\theta}$ of the parameter $\theta$,
the new parameter's estimator
$\tilde{\theta}$ is given as
$\sqrt{G} \hat{\theta}$.
Hence, by using the covariance matrix $V_\theta[\hat{\Pi}]$
of the parameter $\theta$,
the covariance matrix 
$\tilde{V}_{\tilde{\theta}}[\hat{\Pi}]$
of the new parameter
$\tilde{\theta}$ is given as
$\sqrt{G}V_\theta[\hat{\Pi}]\sqrt{G}$.
That is, we have
\begin{align}
\Tr G V_\theta[\hat{\Pi}]= 
\Tr \tilde{V}_{\tilde{\theta}}[\hat{\Pi}].\notag
\end{align}
In other words,
under the new parameter $\tilde{\theta}$,
the weight matrix is given as 
the identity matrix $I$ 
so that the analysis on 
the weight matrix $I$
can recover the case with 
a general weight matrix $G$.
Therefore, without loss of generality,
we can assume that
the weight matrix is the identity matrix $I$.


We define $\vec{Z}_*$ as
\begin{align}
\vec{Z}_*:= \argmin_{\vec{Z}}
\{\Tr \Pi(\vec{Z}) G \otimes T_N| \Tr D_j Z^i=\delta_j^i\},\notag
\end{align}
where $\Pi(\vec{Z})^{i,j}= Z^iZ^j$, i.e.,
$\Pi(\vec{Z})= 
\sum_{1\le i,j\le d} |i\rangle\langle j| \otimes (Z^i)^\dagger Z^j
=(\sum_{1=1}^d |i\rangle \otimes (Z^i)^\dagger)
(\sum_{1=1}^d \langle i| \otimes Z^i)
$. 
We choose $Z^0_*:=I_{AB}$.
We define the matrix 
$V_*^{i,j}:=\Tr Z^i_* (Z^i_*)^\dagger T_N$.
We define $C_*:= \Im V_*|\Im V_*|^{-1}$.


Next, we define
\begin{align}
W_{HN}(T_n,\vec{Z}):= 
\Big(\sum_{i=1}^d Z^i T_N Z^i\Big)-
\sum_{1\le i,j\le d} \sqrt{-1}C_*^{i,j}
Z^j T_N Z^i \notag
\end{align}
which corresponds to a dual matrix-valued variable.
\begin{theorem}\Label{NNT2}
The following conditions are equivalent.
\begin{description}
\item[(B1)]
\begin{align}
J_5=S_5[|\Phi\rangle\langle \Phi|].\notag
\end{align}
\item[(B2)]
$\Tr_{B}W_{HN}(T_n,\vec{Z}_*)$ belongs to ${\cal K}$.
\end{description}
\end{theorem}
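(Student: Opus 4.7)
My plan is to parallel the proof of Theorem \ref{NNT} but with the HN cone $\mathcal{S}^5_{BA}(T_N)$ replacing the positive cone $\mathcal{S}^4$ that governs the SLD case. The starting observation is that moving from the defining optimization for $J_5$ in \eqref{NVF2} to the one for $S_5[|\Phi\rangle\langle \Phi|]$ in \eqref{MIN2} amounts to strengthening the constraint (i'-N) to (i''): (i'-N) only requires that $\Tr_R[Y_N(|0\rangle\langle 0|\otimes I_{AB})]=I_B\otimes d_A\rho_A$ for \emph{some} state $\rho_A$ on $\mathcal{H}_A$, while (i'') forces $\rho_A=I_A/d_A$. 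Hence $J_5\le S_5[|\Phi\rangle\langle \Phi|]$ by monotonicity in the feasible set, and equality in (B1) holds if and only if the minimum of the HN bound along the one-parameter family of models indexed by the probe's reduced density $\rho_A$ is already attained at $\rho_A=I_A/d_A$.

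The next step is to write down the Lagrangian dual of the conic program \eqref{NVF2} following the prescription of Appendix \ref{B1}, introducing a Hermitian auxiliary variable $H$ on $\mathcal{H}_A$ that enforces the extra equality $\rho_A=I_A/d_A$ relative to the $J_5$ dual. Under strong duality, which holds because the maximally entangled probe supplies a Slater point with strictly positive reduced state, equality in (B1) is equivalent to the existence of an optimal multiplier $H^*$ that is a scalar multiple of $I_A$. I would then identify $H^*$ in closed form by exploiting the HN representation
\begin{align*}
S_5[T_N,(F_{j,N})_j]=\Tr\,\mathrm{Re}\,V_*+\|\mathrm{Im}\,V_*\|_1=\Tr W_{HN}(T_N,\vec Z_*),
\end{align*}
where the identity $\|\mathrm{Im}\,V_*\|_1=-\Tr C_*\,\mathrm{Im}\,V_*$ uses the polar structure of the real antisymmetric matrix $\mathrm{Im}\,V_*$ and $C_*$ plays the role of its sign.

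To extract $H^*$ I would apply the envelope theorem to the scalar function $\rho_A\mapsto S_5(\rho_A)$ at $\rho_A=I_A/d_A$; the substitution $T_N\mapsto \Tr_A[(T\otimes I_C)(I_B\otimes\sigma_{AC})]$ induced by varying the probe produces a linear functional of $\delta\rho_A$ whose coefficient, after reading off the stationarity condition with $\vec Z_*$ and $C_*$ frozen, is precisely $\Tr_B W_{HN}(T_N,\vec Z_*)$ modulo a trace-normalization constant. Hence the KKT condition $H^*\propto I_A$ collapses to the requirement that $\Tr_B W_{HN}(T_N,\vec Z_*)$ is a scalar multiple of $I_A$, which is exactly condition (B2). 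The main technical obstacle is the non-smoothness of $\|\mathrm{Im}\,V\|_1$: the envelope calculation requires that $C_*$ be a valid subgradient of this norm at $V_*$ and that the rank of $\mathrm{Im}\,V$ remain locally constant under perturbation of $\rho_A$. I would resolve this by a limiting/perturbation argument analogous to the full-rank reduction used in Lemma \ref{L2}, after which the reasoning parallels the SLD case treated in Theorem \ref{NNT}.
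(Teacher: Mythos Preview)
Your overall strategy---compare the two conic programs via their Lagrangian duals and reduce (B1) to a condition on the extra dual multiplier---matches the paper's. The paper formulates the dual of both $J_5$ and $\bar S_5[|\Phi\rangle\langle\Phi|]$ explicitly: they differ only in that the dual variable $W\in\mathcal T_{sa}(\mathcal H_{AB})$ is unconstrained for $\bar S_5$ but must lie in $\mathcal K$ for $J_5$, so (B1) holds iff the $\bar S_5$-optimal $W_*$ already lies in $\mathcal K$. This is equivalent to your ``$H^*\propto I_A$'' via the decomposition $\mathcal T_{sa}(\mathcal H_{AB})=\mathcal K\oplus\bigl(I_B\otimes\{\text{traceless Hermitian on }\mathcal H_A\}\bigr)$.

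Where your route diverges, and where there is a genuine gap, is in the identification of $W_*$. The paper does not use an envelope argument. Instead it exhibits the primal optimum $X_*=\Pi(\vec Z_*)+\bigl(|\Im V_*|-\sqrt{-1}\,\Im V_*\bigr)\otimes I_{AB}$ and applies complementary slackness $\Tr[X_*\,\Pi(A_\circ,C_\circ,B_\circ,W_\circ)]=0$. This first forces $C_\circ=C_*$ (from the trace-norm identity $\Tr|\Im V_*|=\Tr C_\circ\,\Im V_*$), and then the rank-one structure $\Pi(\vec Z_*)=\bigl(\sum_i|i\rangle\otimes Z_*^i\bigr)\bigl(\sum_j\langle j|\otimes Z_*^j\bigr)$ forces $\bigl(\sum_i\langle i|\otimes Z_*^i\bigr)\Pi(A_\circ,C_*,B_\circ,W_\circ)=0$, from which one reads off $W_\circ=-W_{HN}(T_N,\vec Z_*)$ block by block. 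This completely sidesteps the non-smoothness of $\|\cdot\|_1$.

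Your envelope calculation, as sketched, does not close. You propose to differentiate $\rho_A\mapsto\bar S_5[\rho_{AC}]$ by substituting $T_N\mapsto\Tr_A[(T\otimes I_C)(I_B\otimes\sigma_{AC})]$ and freezing $\vec Z_*,C_*$; but under that substitution the locally-unbiased constraints $\Tr D_jZ^i=\delta_j^i$ also move (since $D_j$ depends on $\rho_{AC}$), so the derivative acquires constraint-multiplier terms you have not accounted for. Moreover, the operator $W_{HN}(T_N,\vec Z_*)$ contains $T_N$ explicitly and hence cannot literally be $\partial(\text{HN objective})/\partial\rho$ at $\rho=T_N$; what you actually need is $\Tr_B W_*$ for the \emph{dual} optimum $W_*$, and equating that with $-\Tr_B W_{HN}$ is precisely the content of the paper's Lemma~\ref{LE2}. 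The complementary-slackness argument supplies that identification directly; your envelope route would require substantially more work---including the chain rule through the purification map and the channel, plus a careful subgradient treatment of the trace norm---to reach the same conclusion.
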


\subsection{Examples}
\subsubsection{One-parameter case}
We choose $\cH_A$ and $\cH_B$ as two dimensional systems 
spanned by $\ket{0},\ket{1}$.
We define the channel $\Lambda_0$ as a depolarizing channel:
\begin{align}
\Lambda_{0,p}(\rho):=(1-p)\rho+p \rho_{{\rm mix},B},\notag
\end{align}
where $\rho_{{\rm mix},B}$ is the completely mixed state on the output system $\cH_B$.
The channel $\Lambda_\theta$
is given as 
$\Gamma_{\theta,p}(\rho):=U_\theta \Lambda_{0,p}(\rho)
U_\theta^\dagger$,
where 
$U_\theta := \exp (i \theta \sigma_1)$
and 
$\sigma_1= \ket{0}\bra{1}+ \ket{1}\bra{0}$.
Then, the following theorem holds.
\begin{theorem}
\label{thm:1param}
When $G$ is $1$,
we have the following relation
\begin{align}
&J_k[
(\Gamma_{\theta,p}\otimes \iota)(|I\rangle \langle I|) ,
(
\frac{d}{d\theta}\Gamma_{\theta,p}\otimes \iota)(|I\rangle \langle I|) 
] \nonumber \\
=& \bar{S}_k[
(\Gamma_{\theta,p}\otimes \iota)(|I\rangle \langle I|) ,
(
\frac{d}{d\theta}\Gamma_{\theta,p}\otimes \iota)(|I\rangle \langle I|) ,
|\Phi\rangle \langle \Phi|] \nonumber\\
=& \frac{2-p}{8(1-p)^2}
 \Label{NBR7}
\end{align}
for $k=1,2,3,4,5$.
\end{theorem}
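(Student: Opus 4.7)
The plan is to collapse every quantity appearing in \eqref{NBR7} down to a single SLD bound and then verify optimality of the maximally entangled input via Theorem \ref{NNT}. Since $d=1$, equation \eqref{IN4} immediately yields $J_1=J_2=J_3=J_4=J_5$. Analogously, applying \eqref{IN3} to the state model $(T_N,F_{1,N})$ gives $\bar S_k[|\Phi\rangle\langle\Phi|] = S_k[T_N,F_{1,N}] = S_4[T_N,F_{1,N}]$ for all $k=1,\dots,5$. Together with the a priori bound $J_k \le \bar S_k[|\Phi\rangle\langle\Phi|]$ from \eqref{NBR1}, the theorem reduces to two tasks: (a) compute the SLD bound $S_4[T_N,F_{1,N}]$ and show it equals $\frac{2-p}{8(1-p)^2}$, and (b) prove $J_4 = \bar S_4[|\Phi\rangle\langle\Phi|]$.

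For task (a), I would first reduce to $\theta_0 = 0$ by unitary invariance: conjugation by $U_{\theta_0}\otimes I$ is a unitary on $\cH_B$ that leaves every CR-type bound invariant, and $U_{\theta_0}$ commutes with the generator $\sigma_1$ of the derivative. Expanding in the Bell basis on $\cH_B\otimes\cH_C$ with $|\Phi\rangle=\frac{1}{\sqrt 2}(|00\rangle+|11\rangle)$ and $|\Phi_1\rangle:=(\sigma_1\otimes I)|\Phi\rangle$, the Choi state $T_N$ is diagonal with eigenvalue $\frac{4-3p}{4}$ on $|\Phi\rangle$ and $\frac{p}{4}$ on the remaining three Bell vectors, and $F_{1,N}=i[\sigma_1\otimes I,T_N]=i(1-p)(|\Phi_1\rangle\langle\Phi|-|\Phi\rangle\langle\Phi_1|)$. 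The ansatz $L_1 = i\alpha(|\Phi_1\rangle\langle\Phi|-|\Phi\rangle\langle\Phi_1|)$ is forced because $F_{1,N}$ is supported on the 2D subspace $\mathrm{span}\{|\Phi\rangle,|\Phi_1\rangle\}$; plugging into $T_N\circ L_1 = F_{1,N}$ gives $\alpha=\frac{4(1-p)}{2-p}$, whence $J_{\rm SLD}=\Tr(L_1 F_{1,N})=2\alpha(1-p)=\frac{8(1-p)^2}{2-p}$, so $S_4[T_N,F_{1,N}] = \frac{2-p}{8(1-p)^2}$.

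For task (b) I would invoke Theorem \ref{NNT}. With $G=1$ and $d=1$, $W_{\rm SLD}(T_N,(L^1)) = L^1 T_N L^1 = J_{\rm SLD}^{-2}\,L_1 T_N L_1$, and a short Bell-basis calculation gives
\begin{align*}
L_1 T_N L_1 = \tfrac{\alpha^2 p}{4}|\Phi\rangle\langle\Phi| + \tfrac{\alpha^2(4-3p)}{4}|\Phi_1\rangle\langle\Phi_1|.
\end{align*}
The decisive observation is that both $|\Phi\rangle$ and $|\Phi_1\rangle$ are maximally entangled, so each of $\Tr_B|\Phi\rangle\langle\Phi|$ and $\Tr_B|\Phi_1\rangle\langle\Phi_1|$ equals $\tfrac12 I_A$, and therefore $\Tr_B W_{\rm SLD}(T_N,\vec L_*) = \frac{\alpha^2(2-p)}{4 J_{\rm SLD}^2} I_A$ is a scalar multiple of $I_A$ and hence lies in $\cK$. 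Condition (A2) of Theorem \ref{NNT} thus holds, giving $J_4=\bar S_4[|\Phi\rangle\langle\Phi|]$; combining with task (a) and the collapses across $k$ yields \eqref{NBR7}. The only nontrivial ingredient is recognizing that $L_1 T_N L_1$ is supported on two maximally entangled projectors so the partial-trace condition in Theorem \ref{NNT} is automatic; the rest is forced by the single-parameter structure.
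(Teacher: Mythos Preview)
Your proof is correct and follows essentially the same route as the paper's: reduce to $k=4$ via \eqref{IN4} (and \eqref{IN3} for the $\bar S_k$ side), compute the SLD and Fisher information for $(T_N,F_{1,N})$ at $\theta_0=0$, and then verify condition (A2) of Theorem~\ref{NNT} by showing $\Tr_B W_{\rm SLD}(T_N,\vec L_*)\propto I_A$. The only cosmetic difference is that you work in the Bell basis with explicit eigenvalues $\tfrac{4-3p}{4},\tfrac p4$ and the projectors $|\Phi\rangle\langle\Phi|,|\Phi_1\rangle\langle\Phi_1|$, whereas the paper carries out the same computation via anticommutator identities for $[\sigma_1,|\Phi\rangle\langle\Phi|]$; both arrive at $\Tr_B W_{\rm SLD}=\tfrac{2-p}{16(1-p)^2}I_A$.
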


\if0
Thus, in the channel estimation $\Lambda_\theta$,
the maximum Fisher information is $4$ and it is
realized when 
the input is the maximally entangled state $
|\Phi\rangle \langle \Phi|$.
In this case, there is a possibility that 
another input state realizes
the Fisher information $4$.
For example, 
the input state $\ket{0}$ realizes
the Fisher information $4$.
\fi
\begin{proof}
Due to \eqref{IN4},
it is sufficient to show \eqref{NBR7} with $k=4$.
Also, due to the symmetry, it is sufficient to show the case with $\theta=0$.
Then, $T_N= 
(1-p)|\Phi\rangle \langle \Phi|+ p \rho_{{\rm mix},AB}$
and $F_{1,N}= (1-p) [i \sigma_1 , |\Phi\rangle \langle \Phi|]$, where
$\rho_{{\rm mix},AB}$ is the completely mixed state on $\cH_{AB}$.
Since 
\begin{align}
[\sigma_1 , |\Phi\rangle \langle \Phi|] \circ |\Phi\rangle \langle \Phi|
&=\frac{1}{2}[\sigma_1 , |\Phi\rangle \langle \Phi|]\notag \\
[\sigma_1 , |\Phi\rangle \langle \Phi|] \circ \rho_{{\rm mix},AB}
&=\frac{1}{4}[\sigma_1 , |\Phi\rangle \langle \Phi|] ,\notag
\end{align}
using $ \frac{1}{2}(1-p)+\frac{1}{4}p=\frac{1}{4}(2-p)$,
we have $L_1= 4 \frac{1-p}{2-p}
[i \sigma_1 , |\Phi\rangle \langle \Phi|]$.
Since $\sigma_1^2=I$,
the SLD Fisher information in this model is 
\begin{align}
&\Tr L_1^2T_N=
\Tr 
(4 \frac{1-p}{2-p})^2
([i\sigma_1 , |\Phi\rangle \langle \Phi|])^2 T_N \nonumber\\
=&
16 \frac{(1-p)^2}{(2-p)^2} \Tr
(\sigma_1  |\Phi\rangle\langle \Phi| \sigma_1
+ |\Phi\rangle\langle \Phi| ) T_N \nonumber\\
=&
16 \frac{(1-p)^2}{(2-p)^2} \Tr
\Big((1-p)|\Phi\rangle\langle \Phi|
+\frac{p}{4} 
(\sigma_1  |\Phi\rangle\langle \Phi| \sigma_1
+ |\Phi\rangle\langle \Phi| ) \Big)\nonumber\\
=&
16 \frac{(1-p)^2}{(2-p)^2} 
((1-p)+\frac{p}{2} )
=
16 \frac{(1-p)^2}{(2-p)^2}  (1-\frac{1}{2}p) \nonumber\\
=&
 \frac{8(1-p)^2}{2-p}.\notag
\end{align}
Hence, we obtain the second equation in 
\eqref{NBR7} with $k=4$.

Then, 
$L^1_*
=\frac{2-p}{8(1-p)^2}\cdot 4 \frac{1-p}{2-p}[i \sigma_1 , |\Phi\rangle \langle \Phi|]
=\frac{1}{2(1-p)} [i \sigma_1 , |\Phi\rangle \langle \Phi|]$.
Since
\begin{align}
[i\sigma_1 , |\Phi\rangle \langle \Phi|]
|\Phi\rangle \langle \Phi|[ i\sigma_1 , |\Phi\rangle \langle \Phi|]
= &\sigma_1 |\Phi\rangle \langle \Phi|\sigma_1 \notag \\
[i\sigma_1 , |\Phi\rangle \langle \Phi|] 
\rho_{{\rm mix},AB}[ i\sigma_1 , |\Phi\rangle \langle \Phi|]
=& \frac{1}{4}(
\sigma_1  |\Phi\rangle\langle \Phi| \sigma_1
+ |\Phi\rangle\langle \Phi|),\notag
\end{align}
we have
\begin{align}
&W_{\rm SLD}(T_N,\vec{L}_*)= L^1_* T_N L^1_*\nonumber\\
=&
\frac{1}{4(1-p)^2}
((1-p)\sigma_1 |\Phi\rangle \langle \Phi|\sigma_1 
+\frac{p}{4}(\sigma_1  |\Phi\rangle\langle \Phi| \sigma_1
+ |\Phi\rangle\langle \Phi|)\nonumber\\
=&
\frac{1}{4(1-p)^2}
((1-\frac{3p}{4})\sigma_1 |\Phi\rangle \langle \Phi|\sigma_1 
+\frac{p}{4} |\Phi\rangle\langle \Phi|).\notag
\end{align}
\if0
\begin{align}
W_{\rm SLD}(T_N,\vec{L}_*)= L^1_* T_N L^1_*=
\frac{1}{4}
\sigma_1 |\Phi\rangle \langle \Phi|\sigma_1 .\notag
\end{align}
\fi
Since 
$\Tr_B \sigma_1 |\Phi\rangle \langle \Phi|\sigma_1
=\Tr_B  |\Phi\rangle \langle \Phi|
=\frac{1}{2}I_A$, we have
\begin{align}
&\Tr_B 
\frac{1}{4(1-p)^2}
((1-\frac{3p}{4})\sigma_1 |\Phi\rangle \langle \Phi|\sigma_1 
+\frac{p}{4} |\Phi\rangle\langle \Phi|) \nonumber\\
=&
\frac{1}{4(1-p)^2}
((1-\frac{3p}{4})+\frac{p}{4})
\frac{1}{2}I_A
=
\frac{2-p}{16(1-p)^2} I_A.\notag
\end{align}
Thus, $ W_{\rm SLD}(T_N,\vec{L}_*)$ belongs to ${\cal K}$.
Theorem \ref{NNT} guarantees the first equation in \eqref{NBR7} for $k=4$.
\end{proof}

\subsubsection{Generalized Pauli channel}\label{Pauli}
We consider the generalized Pauli channel on the system $
\cH=\cH_A=\cH_B$, which is spanned by $\{|a\rangle\}_{a \in \mathbb{Z}_d}$.
We define the operators $\mathsf{W}(a,b)$ for 
$a,b\in\mathbb{Z}_d$ as the following unitary matrices on $\cH$; 
\begin{align}
\mathsf{X}(a) &\coloneqq \sum_{j\in\mathbb{Z}_d} |j+a\rangle \langle j |, \quad 
\mathsf{Z}(b) \coloneqq \sum_{j\in\mathbb{Z}_d} \omega^{bj} |j\rangle \langle j |,\\
\mathsf{W}(a,b) &\coloneqq \mathsf{X}(a)\mathsf{Z}(b),
\end{align}
where $\omega \coloneqq \exp({2\pi i/d})$.
We introduce a distribution family $p_\theta$ over $\mathbb{Z}_d^2$.
Then, we define the family of channels $\{\Lambda_\theta\}$ as
\begin{align}
\Lambda_\theta(\rho):=
\sum_{(a,b) \in \mathbb{Z}_d^2}
p_\theta(a,b) \mathsf{W}(a,b)\rho \mathsf{W}(a,b)^\dagger.
\end{align}

We denote the Fisher information of the distribution family $\{P_\theta\}$.
Then, as shown in \cite{GP1,GP2}, 
we have the following theorem.

\begin{theorem}
\label{thm:Pauli}
We have the following relations
\begin{align}
&J_4[
(\Lambda_{\theta}\otimes \iota)(|I\rangle \langle I|) ,
(\frac{\partial }{\partial \theta^l}\Lambda_{\theta}\otimes \iota)(|I\rangle \langle I|) )_l
] \nonumber \\
=& \bar{S}_4[
(\Lambda_{\theta}\otimes \iota)(|I\rangle \langle I|) ,
(\frac{\partial }{\partial \theta^l}\Lambda_{\theta}\otimes \iota)(|I\rangle \langle I|) )_l,
|\Phi\rangle \langle \Phi|] \nonumber \\
=&
\Tr G J_\theta^{-1}
\Label{NBR8} .
\end{align}
\end{theorem}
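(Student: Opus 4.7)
The plan is to exploit the fact that the Choi state of a generalized Pauli channel is diagonal in the generalized Bell basis, which reduces the quantum estimation problem to a classical one, and then use Theorem \ref{NNT} to confirm that the maximally entangled state is optimal.

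\textbf{Step 1: Diagonalizing $T_N$ in the Bell basis.} Writing $|\Phi_{a,b}\rangle := (\mathsf{W}(a,b)\otimes I)|\Phi\rangle$, the family $\{|\Phi_{a,b}\rangle\}_{(a,b)\in\mathbb{Z}_d^2}$ is an orthonormal basis of $\cH_A\otimes\cH_B$. A direct computation using $T_N = (\Lambda_\theta\otimes\iota)(|\Phi\rangle\langle\Phi|)$ gives
\begin{align}
T_N = \sum_{a,b} p_\theta(a,b)\,|\Phi_{a,b}\rangle\langle\Phi_{a,b}|,\qquad
F_{j,N} = \sum_{a,b} \frac{\partial p_\theta(a,b)}{\partial\theta^j}\,|\Phi_{a,b}\rangle\langle\Phi_{a,b}|,\notag
\end{align}
so $T_N$ and every $F_{j,N}$ are mutually commuting.

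\textbf{Step 2: Reduction to the classical Fisher information.} Since $T_N$ and $F_{j,N}$ share the Bell eigenbasis, the defining equation $T_N\circ L_j = F_{j,N}$ is solved by the diagonal operator $L_j = \sum_{a,b} \bigl(\partial_j\log p_\theta(a,b)\bigr)|\Phi_{a,b}\rangle\langle\Phi_{a,b}|$ on the support of $T_N$. Hence the SLD Fisher information matrix equals the classical Fisher matrix $J_\theta$ of $\{p_\theta\}$, and $L^i_* := \sum_j (J_\theta^{-1})^{i,j}L_j$ is again diagonal in the Bell basis. Consequently $S_4[T_N,(F_{j,N})_j] = \Tr G\,J_\theta^{-1}$, which by \eqref{MIN1} equals $\bar{S}_4[|\Phi\rangle\langle\Phi|]$.

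\textbf{Step 3: Verifying condition (A2) of Theorem \ref{NNT}.} Because each $L^i_*$ is diagonal in the Bell basis, the operator
\begin{align}
W_{\rm SLD}(T_N,\vec{L}_*)=\sum_{i,j} G_{j,i}\,L^i_* T_N L^j_*
\notag
\end{align}
is also diagonal in that basis, with nonnegative eigenvalues supported on $\{|\Phi_{a,b}\rangle\langle\Phi_{a,b}|\}$. The key marginal identity is
\begin{align}
\Tr_B |\Phi_{a,b}\rangle\langle\Phi_{a,b}|
= \mathsf{W}(a,b)\,\Tr_B |\Phi\rangle\langle\Phi|\,\mathsf{W}(a,b)^\dagger
= \tfrac{1}{d}\, I_A,\notag
\end{align}
so $\Tr_B W_{\rm SLD}(T_N,\vec{L}_*)$ is a nonnegative multiple of $I_A$, hence lies in $\cK$. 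Theorem \ref{NNT} then yields $J_4 = S_4[|\Phi\rangle\langle\Phi|]$, completing \eqref{NBR8}.

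The main technical obstacle is really just Step 1: one must be careful that the channel's Choi state actually diagonalizes in the Bell basis with eigenvalues $p_\theta(a,b)$, which follows from the covariance identity $(\mathsf{W}(a,b)\otimes I)|\Phi\rangle = (I\otimes \mathsf{W}(a,b)^T)|\Phi\rangle$ applied inside the Choi construction. Once that structural observation is made, the rest reduces to a classical Fisher-information calculation combined with the maximally-mixed marginal property of Bell states, both of which are standard.
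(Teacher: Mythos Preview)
Your proof is correct and follows essentially the same route as the paper: diagonalize $T_N$ and the $F_{j,N}$ in the generalized Bell basis, identify the SLDs with the classical scores so that $\bar S_4[|\Phi\rangle\langle\Phi|]=\Tr G J_\theta^{-1}$, and then verify condition (A2) of Theorem~\ref{NNT} via the fact that every Bell state has maximally mixed marginal on $\cH_A$. The only cosmetic point is that in your Step~3 identity $\Tr_B|\Phi_{a,b}\rangle\langle\Phi_{a,b}|=\mathsf{W}(a,b)\,\Tr_B|\Phi\rangle\langle\Phi|\,\mathsf{W}(a,b)^\dagger$, the unitary $\mathsf{W}(a,b)$ acts on the $B$ factor in the Choi convention, so it should disappear under $\Tr_B$ by cyclicity rather than survive as a conjugation on $A$; the conclusion $\tfrac{1}{d}I_A$ is of course unaffected.
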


\begin{proof}
When the input state is the maximally entangled state 
$|\Phi\rangle \langle \Phi|$, the output state is 
\begin{align}
\sum_{(a,b) \in \mathbb{Z}_d^2}
p_\theta(a,b)  \mathsf{W}(a,b)|\Phi\rangle\langle \Phi| \mathsf{W}(a,b)^\dagger.
\end{align}
Since $\{\mathsf{W}(a,b)|\Phi\rangle\}_{a,b}$ forms  
an orthogonal basis on ${\cal H}^{\otimes 2}$, 
the state family can be considered as distribution family $\{p_\theta\}$.
Then, we obtain the second equation.

Next, we show the first equation.
We denote the logarithmic likelihood derivative for the $j$-th parameter
of $\{p_\theta\}$ by $l_{\theta,j}$. We define the function 
$l_\theta^j$ as $l_\theta^j:= \sum_{j'} (J_\theta^{-1})^{j,j'} l_{\theta,j'}$.
Then, we have
$L_*^j=
\sum_{(a,b) \in \mathbb{Z}_d^2}
l_\theta^j(a,b) 
\mathsf{W}(a,b) |\Phi\rangle\langle \Phi| \mathsf{W}(a,b)^\dagger$.
Hence,
\begin{align}
&\Tr_{B}W_{\rm SLD}(T_N,\vec{L}_*)\notag \\
=&\Tr_{B} \sum_{(a,b) \in \mathbb{Z}_d^2}
\Big(\sum_{1\le j',j\le d} G_{j,j'} 
l_\theta^{j'}(a,b) p_\theta (a,b)l_\theta^j(a,b)\Big) \notag \\
&\cdot \mathsf{W}(a,b) |\Phi\rangle\langle \Phi| \mathsf{W}(a,b)^\dagger \notag \\
=& \sum_{(a,b) \in \mathbb{Z}_d^2}
\sum_{1\le j',j\le d} G_{j,j'} 
l_\theta^{j'}(a,b) p_\theta (a,b)l_\theta^j(a,b)
\rho_{mix},
\end{align}
where $\rho_{mix}$ is the completely mixed state on $\cH_C$
because $\Tr_{B}\mathsf{W}(a,b) |\Phi\rangle\langle \Phi| \mathsf{W}(a,b)^\dagger=\Tr_A |\Phi\rangle\langle \Phi| =\rho_{mix}$.
Hence, 
$\Tr_{B}W_{\rm SLD}(T_N,\vec{L}_*)$ belongs to ${\cal K}$.
Theorem \ref{NNT} guarantees the first equation.
\end{proof}

\subsubsection{Spin $j$ representation of SU(2)}\label{BVR}
Next, we consider spin $j$ representation of SU(2)
over the Hilbert space $\cH_j$.
Here, $\sigma_{1,j}$, $\sigma_{2,j}$, and $\sigma_{3,j}$
are defined as the spin $j$ representations of the generators of SU(2)
on $\cH_j$.
We set $\cH_A$ and $\cH_B$ to be $\cH_j$.
We define the channel $\Lambda_0$ as a depolarizing channel:
\begin{align}
\Lambda_{0,p}(\rho):=(1-p)\rho+p \rho_{{\rm mix},B}.\notag
\end{align}
The channel $\Lambda_\theta$
is given as 
$\Lambda_{\theta,p}(\rho):=U_\theta \Lambda_{0,p}(\rho)
U_\theta^\dagger$,
where 
$U_\theta := \exp (i \sum_{k=1}^3 \theta^k \sigma_k)$.

Then, the following theorem holds.
\if0
\begin{theorem}
\label{thm:maxent-J4S4-J1S1-p0}
We have the following relations
\begin{align}
&J_4[
(\Lambda_{\theta,p}\otimes \iota)(|I\rangle \langle I|) ,
(\frac{\partial }{\partial \theta^l}\Lambda_{\theta,p}\otimes \iota)(|I\rangle \langle I|) )_{l=1,2,3}
] \nonumber \\
=& \bar{S}_4[
(\Lambda_{\theta,p}\otimes \iota)(|I\rangle \langle I|) ,
(\frac{\partial }{\partial \theta^l}\Lambda_{\theta,p}\otimes \iota)(|I\rangle \langle I|) 
)_{l=1,2,3},
|\Phi\rangle \langle \Phi|] \Label{NBR8} \\
&J_1[
(\Lambda_{\theta,0}\otimes \iota)(|I\rangle \langle I|) ,
(\frac{\partial }{\partial \theta^l}\Lambda_{\theta,0}\otimes \iota)(|I\rangle \langle I|) )_{l=1,2,3}
] \nonumber \\
=& \bar{S}_1[
(\Lambda_{\theta,0}\otimes \iota)(|I\rangle \langle I|) ,
(\frac{\partial }{\partial \theta^l}\Lambda_{\theta,0}\otimes \iota)(|I\rangle \langle I|) 
)_{l=1,2,3},
|\Phi\rangle \langle \Phi|] \Label{NBR9} .
\end{align}
\end{theorem}
\fi

\begin{theorem}
\label{thm:SU2-maxent}
When the weight matrix $G$ is chosen to be $I$,
we have the following relations
\begin{align}
&J_4[
(\Lambda_{\theta,p}\otimes \iota)(|I\rangle \langle I|) ,
(\frac{\partial }{\partial \theta^l}\Lambda_{\theta,p}\otimes \iota)(|I\rangle \langle I|) )_{l=1,2,3}
] \nonumber \\
=& \bar{S}_4[
(\Lambda_{\theta,p}\otimes \iota)(|I\rangle \langle I|) ,
(\frac{\partial }{\partial \theta^l}\Lambda_{\theta,p}\otimes \iota)(|I\rangle \langle I|) 
)_{l=1,2,3},
|\Phi\rangle \langle \Phi|] \nonumber \\
=&
\frac{9(1-\frac{4j^2+4j-1}{(2j+1)^2} p)}{8j(j+1)(1-p)^2}
\Label{NBR8} .
\end{align}
In particular, for $p=0$, we have
\begin{align}
&J_4[
(\Lambda_{\theta,0}\otimes \iota)(|I\rangle \langle I|) ,
(\frac{\partial }{\partial \theta^l}\Lambda_{\theta,0}\otimes \iota)(|I\rangle \langle I|) )_{l=1,2,3}
] \nonumber \\
=&J_1[
(\Lambda_{\theta,0}\otimes \iota)(|I\rangle \langle I|) ,
(\frac{\partial }{\partial \theta^l}\Lambda_{\theta,0}\otimes \iota)(|I\rangle \langle I|) )_{l=1,2,3}
] \nonumber \\
=& \bar{S}_1[
(\Lambda_{\theta,0}\otimes \iota)(|I\rangle \langle I|) ,
(\frac{\partial }{\partial \theta^l}\Lambda_{\theta,0}\otimes \iota)(|I\rangle \langle I|) 
)_{l=1,2,3},
|\Phi\rangle \langle \Phi|] \nonumber\\
=&
\frac{9}{8j(j+1)}
\Label{NBR10} .
\end{align}
\end{theorem}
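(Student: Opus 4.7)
The plan is to apply Theorem~\ref{NNT} to establish $J_4 = \bar S_4[|\Phi\rangle\langle\Phi|]$ by verifying condition~(A2), evaluate this common value explicitly, and then handle the $p=0$ case via a pure-state attainability argument on top of the formula from the first part.

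First, by SU(2)-covariance of the channel family and the invariance of $|\Phi\rangle\langle\Phi|$, it suffices to work at $\theta=0$. There the normalized Choi state is
$T_N = (1-p)|\Phi\rangle\langle\Phi| + p\,\rho_{\mathrm{mix},AB}$, which is $(U\otimes U^{*})$-invariant for every $U\in\mathrm{SU}(2)$, and $F_{l,N} = i(1-p)[\sigma_l\otimes I_A,|\Phi\rangle\langle\Phi|]$. Setting $\tilde L_l := i[\sigma_l\otimes I_A,|\Phi\rangle\langle\Phi|]$ and using $|\Phi\rangle\langle\Phi|\circ\tilde L_l = \tilde L_l/2$ together with $\rho_{\mathrm{mix},AB}\circ\tilde L_l = \tilde L_l/d^{2}$ (with $d=2j+1$), I obtain the SLDs as $L_l = \alpha\,\tilde L_l$ with $\alpha = 2d^{2}(1-p)/\bigl((1-p)d^{2}+2p\bigr)$. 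The SLD Fisher information evaluates to $J_{\mathrm{SLD}} = c(p,j)\,I_{3}$ for an explicit scalar $c(p,j)$, where the proportionality is fixed by the SU(2)-invariant identity $\tfrac{1}{d}\Tr(\sigma_k\sigma_l)\propto\delta_{kl}$.

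Second, I verify condition (A2) by symmetry. Under the $(U\otimes U^{*})$-adjoint action, $\sigma_l\otimes I_A$ transforms in the vector representation, so each $\tilde L_l$ does as well, and hence $\sum_l \tilde L_l\, T_N\, \tilde L_l$ is $(U\otimes U^{*})$-invariant. Taking $\Tr_B$ and using that the partial trace on $B$ converts a left multiplication by $U$ on $B$ into a trivial factor, I find that $\Tr_B W_{\mathrm{SLD}}(T_N,\vec L_*)$ is invariant under conjugation by $U^{*}$ on $\mathcal H_A$ for every $U\in\mathrm{SU}(2)$. Since $U\mapsto U^{*}$ realizes the conjugate spin-$j$ representation on $\mathcal H_A$, which is irreducible, Schur's lemma forces $\Tr_B W_{\mathrm{SLD}}(T_N,\vec L_*)\propto I_A$, i.e., membership in $\mathcal K$. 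Theorem~\ref{NNT} then yields $J_4 = \bar S_4[|\Phi\rangle\langle\Phi|] = \Tr J_{\mathrm{SLD}}^{-1}$, and substituting the explicit form of $c(p,j)$ produces the formula in~\eqref{NBR8}.

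Finally, for $p=0$ the state $T_N = |\Phi\rangle\langle\Phi|$ is pure and $\Tr T_N L_k L_l = \alpha^{2}\,\tfrac{1}{d}\Tr(\sigma_k\sigma_l)\in\mathbb R$, so $\Im \Tr T_N L_k L_l = 0$; this weak-commutativity property identifies the HN and SLD bounds, giving $J_5 = J_4$. The pure-state unitary model at $p=0$ is the SU(2)-orbit of $|\Phi\rangle$, and a single-copy projective measurement onto the span of $|\Phi\rangle$ together with the vectors $\tilde L_l|\Phi\rangle$ attains the SLD bound, so $\bar S_1[|\Phi\rangle\langle\Phi|] = S_1[T_N,(F_{l,N})_l] = S_4[T_N,(F_{l,N})_l]$; combining with $\bar S_k\le\bar S_k[|\Phi\rangle\langle\Phi|]$ and Theorem~\ref{NZD} forces $J_1=\cdots=J_5=\bar S_1[|\Phi\rangle\langle\Phi|]$ at $p=0$. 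The main obstacle will be the partial-trace/Schur step for condition~(A2): one must carefully track which SU(2) representation acts on which tensor factor after partial tracing, and invoke the irreducibility of the conjugate spin-$j$ representation. The remaining calculations, including the explicit evaluation of $c(p,j)$ and the identification of the optimal single-copy measurement at $p=0$, are routine.
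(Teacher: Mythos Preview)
Your proposal is correct and follows the same overall structure as the paper: reduce to $\theta=0$, compute the SLDs, verify condition~(A2) of Theorem~\ref{NNT}, and for $p=0$ invoke pure-state attainability. The one genuine methodological difference is in the verification of~(A2). The paper computes $W_{\rm SLD}(T_N,\vec L_*)=\sum_l L^l_* T_N L^l_*$ explicitly and then uses the Casimir identity $\sum_l \sigma_{l,j}^2=2j(j+1)I$ to reduce $\Tr_B\sum_l \sigma_{l,j}|\Phi\rangle\langle\Phi|\sigma_{l,j}$ to a multiple of $I_A$. You instead argue by symmetry: the sum $\sum_l\tilde L_l T_N\tilde L_l$ is $(U\otimes U^*)$-invariant, partial tracing over $B$ kills the $U$-conjugation, and irreducibility of the conjugate spin-$j$ representation on $\mathcal H_A$ forces $\Tr_B W_{\rm SLD}\propto I_A$ via Schur. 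Your route is cleaner and avoids the explicit Casimir computation, at the cost of not producing the proportionality constant (which is not needed for membership in $\mathcal K$ anyway). For the $p=0$ step, your argument is essentially the paper's: the orthogonality $\langle\Phi|\sigma_{k,j}\sigma_{l,j}|\Phi\rangle\propto\delta_{kl}$ is exactly the hypothesis of Matsumoto's attainability theorem for pure-state models, which the paper cites directly; your extra remark about weak commutativity and $J_5=J_4$ is correct but redundant once $\bar S_1[|\Phi\rangle\langle\Phi|]=\bar S_4[|\Phi\rangle\langle\Phi|]$ is established.
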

The paper \cite{Imai} studied a similar problem in the $n$-copy setting of the estimation of SU($D$).
It maximized the trace of SLD Fisher information matrix by varying the input state.
This maximization 
is a different problem from our problem.

\begin{proof}
First, we show \eqref{NBR8}.
Due to the symmetry, it is sufficient to show the case with $\theta=0$.
Then, $T_N= 
(1-p)|\Phi\rangle \langle \Phi|+ p \rho_{{\rm mix},AB}$
and $F_{l,N}= (1-p) [i \sigma_{l,j} , |\Phi\rangle \langle \Phi|]$, where
$\rho_{{\rm mix},AB}$ is the completely mixed state on $\cH_{AB}$.
Since 
\begin{align}
[\sigma_{l,j} , |\Phi\rangle \langle \Phi|] \circ |\Phi\rangle \langle \Phi|
&=\frac{1}{2}[\sigma_{l,j} , |\Phi\rangle \langle \Phi|] \notag\\
[\sigma_{l,j} , |\Phi\rangle \langle \Phi|] \circ \rho_{{\rm mix},AB}
&=\frac{1}{(2j+1)^2}[\sigma_{l,j} , |\Phi\rangle \langle \Phi|] ,\notag
\end{align}
using $ \frac{1}{2}(1-p)+\frac{1}{(2j+1)^2}p=\frac{1}{2}-\frac{4j^2+4j-1}{2(2j+1)^2} p$,
we have $L_l= c_{p,j}
[i \sigma_{l,j} , |\Phi\rangle \langle \Phi|]$ with 
$c_{p,j}:=\frac{1-p}{\frac{1}{2}-\frac{4j^2+4j-1}{2(2j+1)^2} p}$.
Since 
$ \langle \Phi| \sigma_{l',j} \sigma_{l,j} |\Phi\rangle=\frac{2j(j+1)}{3}\delta_{l,l'}$
and
$ \langle \Phi| \sigma_{l',j} |\Phi\rangle=0$,
the SLD Fisher information matrix in this model is 
\begin{align}
&\Tr L_{l}L_{l'}  T_N=
\Tr 
c_{p,j}^2
[i \sigma_{l,j} , |\Phi\rangle \langle \Phi|] [i \sigma_{l',j} , |\Phi\rangle \langle \Phi|]
 T_N \notag\\
=&
c_{p,j}^2 \Tr
(\sigma_{l,j}  |\Phi\rangle\langle \Phi| \sigma_{l',j}
+ \frac{2j(j+1)}{3}\delta_{l,l'}|\Phi\rangle\langle \Phi| ) T_N \notag\\
=&
c_{p,j}^2 \Tr
\Big((1-p)\frac{2j(j+1)}{3}\delta_{l,l'}|\Phi\rangle\langle \Phi|\notag \\
&+\frac{p}{(2j+1)^2} 
(\sigma_{l,j}  |\Phi\rangle\langle \Phi| \sigma_{l',j} 
+\frac{2j(j+1)}{3}\delta_{l,l'} |\Phi\rangle\langle \Phi| ) \Big)\notag\\
=&\tilde{c}_{p,j}
\delta_{l,l'},\notag
\end{align}
where
$\tilde{c}_{p,j}:=c_{p,j}^2 \frac{2j(j+1)}{3}((1-p)+\frac{2p}{(2j+1)^2} )
$.
Thus, the SLD Fisher information matrix is
$\tilde{c}_{p,j}I$.
Hence, the SLD bound $\bar{S}_4$ is 
$\tr \tilde{c}_{p,j}^{-1}I=3/\tilde{c}_{p,j}$.
Since 
we have
\begin{align}
&\tilde{c}_{p,j}
=
(\frac{1-p}{\frac{1}{2}-\frac{4j^2+4j-1}{2(2j+1)^2} p})^2
 \frac{2j(j+1)}{3}
(1-\frac{4j^2+4j-1}{(2j+1)^2} p) \nonumber \\
=&
\frac{4(1-p)^2}{1-\frac{4j^2+4j-1}{(2j+1)^2} p}
 \frac{2j(j+1)}{3}
=
\frac{8j(j+1)(1-p)^2}{3(1-\frac{4j^2+4j-1}{(2j+1)^2} p)}.\notag
\end{align}
Hence, we obtain the second equation in \eqref{NBR8}.

Then, 
$L^l_*
=\frac{c_{p,j}}{\tilde{c}_{p,j}}
[i \sigma_{l,j} , |\Phi\rangle \langle \Phi|]$.
Since
\begin{align}
&[i\sigma_{l,j} , |\Phi\rangle \langle \Phi|]
|\Phi\rangle \langle \Phi|[ i\sigma_{l',j} , |\Phi\rangle \langle \Phi|]
= \sigma_{l,j} |\Phi\rangle \langle \Phi|\sigma_{l',j} , \\
&[i\sigma_{l,j} , |\Phi\rangle \langle \Phi|] 
\rho_{{\rm mix},AB}[ i\sigma_{l',j} , |\Phi\rangle \langle \Phi|]\nonumber \\
=& \frac{1}{(2j+1)^2}(
\sigma_{l,j}  |\Phi\rangle\langle \Phi| \sigma_{l',j}
+\frac{2j(j+1)}{3}\delta_{l,l'} |\Phi\rangle\langle \Phi|),\notag
\end{align}
we have
\begin{align}
&W_{\rm SLD}(T_N,\vec{L}_*)= \sum_{l=1}^3L^l_* T_N L^l_*\nonumber \\
=&
(\frac{c_{p,j}}{\tilde{c}_{p,j}})^2\sum_{l=1}^3
\Big((1-p)\sigma_{l,j} |\Phi\rangle \langle \Phi|\sigma_{l,j} 
\nonumber \\
&+\frac{p}{(2j+1)^2}(\sigma_{l,j}  |\Phi\rangle\langle \Phi| \sigma_{l,j}
+\frac{2j(j+1)}{3} |\Phi\rangle\langle \Phi|\Big)\nonumber \\
=&
(\frac{c_{p,j}}{\tilde{c}_{p,j}})^2
\Big((1-p+\frac{p}{(2j+1)^2})
(\sum_{l=1}^3 \sigma_{l,j}  |\Phi\rangle\langle \Phi| \sigma_{l,j})
\nonumber \\
&+  \frac{3p}{(2j+1)^2} \frac{2j(j+1)}{3}|\Phi\rangle\langle \Phi|\Big).\notag
\end{align}
Since $\sum_{l=1}^3 \sigma_{l,j}^2$ is a Casimir operator, it is 
$2j(j+1)I $ on $\cH_j$. Thus,
\begin{align}
&\Tr_B 
\sum_{l=1}^3 \sigma_{l,j} |\Phi\rangle \langle \Phi|\sigma_{l,j}
=
\Tr_B 
\sum_{l=1}^3 \sigma_{l,j}^2 |\Phi\rangle \langle \Phi| \nonumber \\
=&
\Tr_B 2j(j+1)
|\Phi\rangle \langle \Phi|= \frac{2j(j+1)}{2j+1}I_A.\notag
\end{align}
Since 
$\Tr_B |\Phi\rangle \langle \Phi|
=\frac{1}{2j+1}I_A$, we have
\begin{align}
&\Tr_B 
W_{\rm SLD}(T_N,\vec{L}_*) \nonumber \\
=&
(\frac{c_{p,j}}{\tilde{c}_{p,j}})^2
\Big((1-p+\frac{p}{(2j+1)^2})
  \frac{2j(j+1)}{2j+1}
\nonumber \\
&+ \frac{3p}{(2j+1)^2} \frac{2j(j+1)}{3} \frac{1}{2j+1}\Big) I_A \nonumber \\
=&
(\frac{c_{p,j}}{\tilde{c}_{p,j}})^2
\Big((1-p+\frac{2p}{(2j+1)^2})
  \frac{2j(j+1)}{2j+1}\Big) I_A .\notag
\end{align}
Thus, $ W_{\rm SLD}(T_N,\vec{L}_*)$ belongs to ${\cal K}$.
Theorem \ref{NNT} guarantees the first equation of \eqref{NBR8}.

Next, with $p=0$,
we show 
\begin{align}
& \bar{S}_1[
(\Lambda_{\theta,0}\otimes \iota)(|I\rangle \langle I|) ,
(\frac{\partial }{\partial \theta^l}\Lambda_{\theta,0}\otimes \iota)(|I\rangle \langle I|) 
)_{l=1,2,3},
|\Phi\rangle \langle \Phi|] \nonumber  \\
=& \bar{S}_4[
(\Lambda_{\theta,0}\otimes \iota)(|I\rangle \langle I|) ,
(\frac{\partial }{\partial \theta^l}\Lambda_{\theta,0}\otimes \iota)(|I\rangle \langle I|) 
)_{l=1,2,3},
|\Phi\rangle \langle \Phi|] \Label{NBR11} .
\end{align}
Then, $T_N= 
|\Phi\rangle \langle \Phi|$
and $F_{l,N}= 
(i \sigma_{l,j}) |\Phi\rangle \langle \Phi|
+ |\Phi\rangle \langle \Phi|(i \sigma_{l,j})^\dagger$.
Since
$\langle \Phi|(i \sigma_{l,j})^\dagger
(i \sigma_{l',j}) |\Phi\rangle=
\frac{2j(j+1)}{3}\delta_{l,l'} $,
the vectors 
$((i \sigma_{l,j}) |\Phi\rangle)_{l=1,2,3}$
are orthogonal. 
Thus, Theorem 3 of \cite{matsumoto2002new} guarantees that
the SLD bound is attainable, i.e., \eqref{NBR11}.
Since $\tilde{S}_1 \ge J_1 \ge J_4$, 
the combination of \eqref{NBR8} and \eqref{NBR11}
implies the first and second equations in
\eqref{NBR10}.
The third equation in \eqref{NBR10} follows from \eqref{NBR8}. 
\end{proof}

\if0
When the input is the maximally entangled state $|\Phi\rangle \langle \Phi|$
and $\theta_0$ is set to $0$,
$T_N= |\Phi\rangle \langle \Phi|$
and $F_{k,N}= i [\sigma_k , |\Phi\rangle \langle \Phi|]$.
Then, $L_k=2 i [\sigma_k , |\Phi\rangle \langle \Phi|]$.
The Fisher information in this model is 
$\Tr L_k L_{k'} T_N=
\Tr 4 \sigma_k \sigma_{k'} |\Phi\rangle \langle \Phi|=4\delta_{k,k'}$.
Then, $L^k_*=\frac{1}{2} i [\sigma_k , |\Phi\rangle \langle \Phi|]$.
Setting $G$ to be the identity matrix,
we have
\begin{align}
W_{\rm SLD}(T_N,\vec{L}_*)= \sum_{k=1}^3L^k_* T_N L^k_*=
\frac{1}{4}
\sum_{k=1}^3 \sigma_k |\Phi\rangle \langle \Phi|\sigma_k .\notag
\end{align}
Since $\sum_{k=1}^3 \sigma_k^2$ is Casimir operator, it is 
$j(j+1)I $ on $\cH_j$. Thus,
\begin{align}
&\Tr_B \frac{1}{4}
\sum_{k=1}^3 \sigma_k |\Phi\rangle \langle \Phi|\sigma_k 
=
\frac{1}{4}\Tr_B 
\sum_{k=1}^3 \sigma_k^2 |\Phi\rangle \langle \Phi| \\
=&
\frac{1}{4}\Tr_B j(j+1)
|\Phi\rangle \langle \Phi|= \frac{j(j+1)}{4(2j+1)}I_A.\notag
\end{align}
Hence, $ W_{\rm SLD}(T_N,\vec{L}_*)$ belongs to ${\cal K}$.

Thus, in the channel estimation $\Lambda_\theta$
with the weight matrix $G=I$,
the bound $J_4$ is attained when
the input is the maximally entangled state $
|\Phi\rangle \langle \Phi|$.
\fi

 \section{Application to field sensing}
 \label{sec:field sensing}

\subsection{Model for field sensing}
The canonical example that is widely considered in quantum metrology is `field sensing', where a classical field interacts with an ensemble of qubits. When a 3D classical field interacts identically with $n$ qubits, we can write the interaction Hamiltonian as 
\begin{align}
H_\theta = \theta_1 E^1  + \theta_2 E^2 + \theta_3 E^3\notag
\end{align}
where $\theta = (\theta_1, \theta_2, \theta_3)$ is a real vector that is proportional to the 3D field that we wish to estimate, and $E^1, E^2, E^3$ are angular momentum operators defined on $n$-qubits,
given by
\begin{align}
E^1 = \frac{1}{2} ( \tau_{1}^{(1)} + \dots + \tau_{1}^{(n)} ),\notag\\
E^2 = \frac{1}{2} ( \tau_{2}^{(1)} + \dots + \tau_{2}^{(n)}  ),\notag\\
E^3 = \frac{1}{2} ( \tau_{3}^{(1)} + \dots + \tau_{3}^{(n)} ),\notag
\end{align}
$\tau_1 = |0\>\<1|+|1\>\<0|,\tau_3 = |0\>\<0| -|1\>\<1|$ are Pauli matrices that apply the bit-flip and phase-flip on a qubit, 
$\tau_2  = i \tau_1 \tau_3$, and $\tau_{j}^{(k)}$ represents an $n$-qubit Pauli matrix that applies $\tau_{j}$ on the $k$th qubit, and identity operations everywhere else.

Amplitude damping operators $A_j$, which model energy loss, apply $\gamma |0\>\<1|$ on the $j$th qubit and the identity operator on other qubits. 
These operators arise because of a linear interaction between individual qubits and a Markovian zero-temperature bath\cite{CLY97}.  
Namely, 
\begin{align}
A_{j,\gamma} &=  \gamma I^{\otimes j-1} \otimes |0\>\<1|  \otimes I^{\otimes (n-j)}.\notag
\end{align}
The collective amplitude damping operator, also considered in \cite[Eq. (7)]{PhysRevA.58.3491}, models collective energy loss, specifically because of a collective linear interaction between all $n$ qubits and a Markovian zero-temperature bath, and is given by 
$\tilde{A}_\gamma:=\sum_{j=1}^nA_{j,\gamma}$.

We model the evolution of an initial probe state using the master equation
\begin{align}
\frac{d\rho}{dt} = \mathcal L_{\theta,\gamma} (\rho), 
\label{master equation}
\end{align}
where $t$ denotes time, and the operator $\mathcal L_\theta $ can be written as a linear operator, which is 
\begin{align}
 \mathcal L_{\theta,\gamma}(\rho) 
= 
-i (H_\theta \rho - \rho H_\theta)
+ 
\tilde A_\gamma \rho \tilde A_\gamma^\dagger 
- \frac{1}{2} (\tilde A_\gamma^\dagger \tilde A_\gamma \rho 
+ \rho \tilde A_\gamma^\dagger \tilde A_\gamma ).\label{BNJ}
\end{align}

For a non-negative evolution time $s$, let $\tilde{\rho}_s$ denote the solution to the master equation \eqref{master equation}. In particular, we can write 
\begin{align}
\tilde{\rho}_s  = e^{s\mathcal L_{\theta,\gamma} }(\tilde{\rho}_0).\notag
\end{align}
This means that we can write $\tilde{\rho}_s$ as the Taylor series
\begin{align}
\tilde{\rho}_s  =\tilde{\rho}_0 + \sum_{k=1}^\infty  \frac{ s^k (\mathcal L_{\theta,\gamma})^k (\tilde{\rho}_0 )}{k!}.\notag
\end{align}
In our application, we set the evolution time as $s=1.$
Hence, the channel that we consider in our channel estimation problem is 
\begin{align}
    \Lambda_{\theta,\gamma}(\rho)
    = e^{\mathcal L_{\theta,\gamma}}(\rho) = 
    \rho + \sum_{k=1}^\infty  \frac{\mathcal L_{\theta,\gamma} ^k (\rho )}{k!}.\Label{NMR5}
\end{align}
Using $\Lambda_{\theta,\gamma}$, 
we can calculate the corresponding Choi matrix 
$ T_{\theta,\gamma}$, 
and its derivatives about the true paramater $\theta_0$ are
\begin{align}
    F_{j,\gamma} &= \frac{\partial}{\partial \theta^j} T_{\theta,\gamma} |_{\theta = \theta_0}\notag
\end{align}
for $j=1,2,3$.


\subsection{Numerical results} 
 \begin{figure*}[htbp]  
    \centering
    \includegraphics[width=0.98\textwidth]{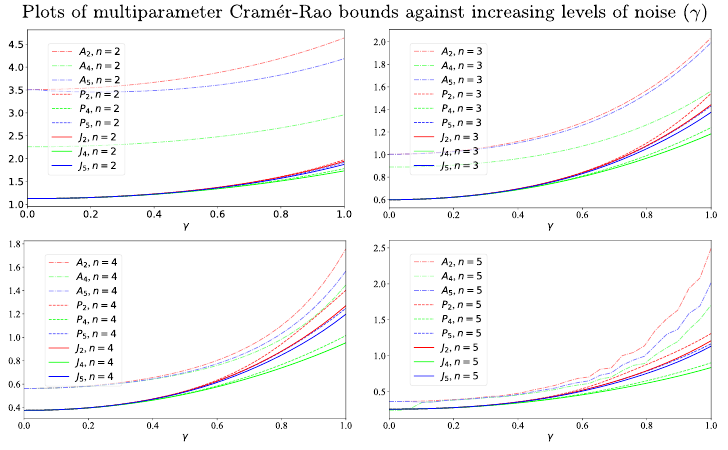}    
    \caption{We numerically calculate the CR-bounds for field sensing when $\theta_1 = \theta_2 = \theta_3 = 0$ for fixed number of qubits $n=2,3,4,5$.
 Here, $J_k \coloneqq J_{k,\gamma}  $ denote the CR-bounds using the optimal probe state with ancilla assistance. 
 In contrast, $P_k \coloneqq S_{k,\gamma}^{\rm sym}$ denote the CR-bounds using the maximally entangled probe state on the symmetric subspace.
We also use $A_k \coloneqq  S_{k,\gamma}^{3D}$ for the CR-type bounds using the 3D-GHZ state that do not require ancilla resistance.
 The vertical axis represents the CR-type lower bound on the sum of the variances of the field parameters $\theta_1, \theta_2 $ and $\theta_3$, which corresponds to a weight matrix $G$ equal to the identity matrix.
 The horizontal axis denotes the amount of amplitude damping noise $\gamma$.   
     \label{fig:fixn}  }  
\end{figure*}

\begin{figure*}[htbp]  
    \centering 
        \includegraphics[width=\textwidth]{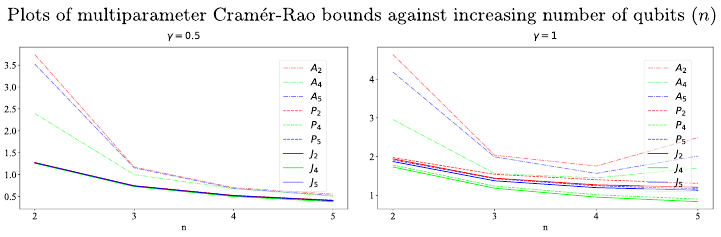}    
    \caption{We represent the data in Figure \ref{fig:fixn} differently, plotting the CR-bounds for fixed values of $\gamma$.
 The horizontal axis denotes the number of qubits $n$ in the field sensing problem.
    }  
      \label{fig:APJgam}
\end{figure*}

Here, we set the true parameter as $\theta_0 = (0,0,0)$ and the number of qubits as $n=2,3,4,5$. We investigate our channel estimation problem when $\gamma$ varies from 0 to 1.
In our calculations, we set the weight matrix $G$ as the identity matrix $I$.

We numerically evaluate the channel estimation precision bounds 
$J_{2,\gamma},J_{4,\gamma},J_{5,\gamma}$, where
\begin{align}
J_{k,\gamma} &= J_k[T_{\theta_0,\gamma}, 
(F_{1,\gamma},F_{2,\gamma},F_{3,\gamma})] .\notag
\end{align}
The precision bounds $J_{2,\gamma},J_{4,\gamma},J_{5,\gamma}$ are given by the optimal values of semidefinite programs with corresponding optimal solutions given by 
$Y_{2,\gamma},Y_{4,\gamma},Y_{5,\gamma}$,
respectively.
Using the optimal solutions 
$Y_{2,\gamma},Y_{4,\gamma},Y_{5,\gamma}$,
we calculate the corresponding probe states on $\mathcal H_A$ given by 
$\rho_A(Y_{2,\gamma}) ,
\rho_A(Y_{4,\gamma}) ,
\rho_A(Y_{5,\gamma}) $ respectively.

Let us denote $|\Phi\>$ as the maximally entangled state on the symmetric subspace,
and consider the precision bounds that correspond to using $|\Phi\>$ as the input probe state for the channel estimation problem. 
These precision bounds are 
\begin{align}
 S_{k,\gamma}^{\rm sym} 
 : =& S_k \Big[  
 \Lambda_{\theta_0,\gamma} \otimes \iota_C(  |\Phi\>\<\Phi|),
\nonumber \\   
& \quad \Big(
 \frac{\partial }{\partial \theta^j} \Lambda_{\theta,\gamma}  \otimes \iota_C(  |\Phi\>\<\Phi|)\Big|_{\theta = \theta_0}  
 \Big)_{j=1,2,3}\Big]   .\notag
\end{align}

\if0
Since the true parameter is $\theta_0 = (0,0,0)$, we calculate
\begin{align}
 P_k   
 & = S_k[ 
\Lambda_{\theta_0} \otimes \iota_C(  |\Phi\>\<\Phi|),  
 (F_1/(n+1), F_2/(n+1),F_3/(n+1) ) ]   .\notag
\end{align}
\fi

We also numerically evaluate $S^{\rm sym}_{2,\gamma},
S^{\rm sym}_{4,\gamma},
S^{\rm sym}_{5,\gamma}$.
When we have no access to an ancillary system $\mathcal{H}_C$, we may consider using the 3D-GHZ state \cite{BaumgratzDatta2016PRL}
\begin{align}
|\psi_{\text{3D GHZ}}\> = \frac{\sum_{j=0}^1 |j\>^{\otimes n}   + |+\>^{\otimes n} + |-\>^{\otimes n} + |+i\>^{\otimes n} + |-i\>^{\otimes n}}{N},\notag
\end{align}
where $|\pm\> = \frac{|0\> \pm |1\>}{\sqrt 2}$ and $|\pm i\>= \frac{|0\> +\pm i  |1\>}{\sqrt 2}$, and $N$ is the appropriate normalization factor.
We denote the corresponding precision bounds for using the 3D-GHZ state without ancillary assistance as
\begin{align}
S_{k,\gamma}^{3D} = S_k[   \Lambda_{\theta_0}(\rho_{\rm 3D GHZ} ),
\frac{\partial }{\partial \theta^j}   \Lambda_{\theta,\gamma}(\rho_{\rm 3D GHZ} )|_{\theta = \theta_0}
 ] .\notag
\end{align}
We also numerically calculate $S_{2,\gamma}^{3D}, 
S_{4,\gamma}^{3D}, S_{5,\gamma}^{3D}$.

We numerically find that when $\gamma = 0$, we have 
\begin{align}
    J_{2,0}=J_{4,0}=J_{5,0} = S^{\rm sym}_{2,0} = S^{\rm sym}_{4,0} 
    = S^{\rm sym}_{5,0} 
    = \frac{9}{n(n+2)}.\notag
\end{align}
In fact, the analysis in Subsection \ref{BVR} showed that
the same equality when the input state is limited to a state on 
the symmetric subspace.
Our numerical analysis suggests that the support of 
the optimal input state is limited to the symmetric subspace.

Furthermore, when we solve the semidefinite programs corresponding to 
$J_{2,\gamma},J_{4,\gamma},J_{5,\gamma}$,
we find that the corresponding optimal solutions 
$Y_{2,\gamma},Y_{4,\gamma},Y_{5,\gamma}$
have corresponding 
density matrices 
$\rho_A(Y_{2,\gamma}),\rho_A(Y_{4,\gamma}),
\rho_A(Y_{5,\gamma})$ 
that are very close to the completely mixed state.

When $\gamma > 0$ we numerically find that 
\begin{align}
    S^{3D}_2 > S^{\rm sym}_2  > J_2, \quad 
    S^{3D}_4 > S^{\rm sym}_4 > J_4, \quad 
    S^{3D}_5 > S^{\rm sym}_5 > J_5. \label{PJ-gap}
\end{align}

In Figure \ref{fig:fixn}, for different fixed values of $n=2,3,4,5$, we plot the precision bounds $S^{3D}_{k,\gamma}, S^{\rm sym}_{k,\gamma}$ and 
$J_{k,\gamma}$ on the vertical axis and $\gamma \in [0,1]$ on the horizontal axies.
In Figure \ref{fig:APJgam}, 
for fixed $\gamma  = 0.5$ or $\gamma = 1$,
we plot the precision bounds $A_k,P_k$ and $J_k$ on the vertical axis, and the number of qubits $n$ on the horizontal axis.

In this way, we confirm the suboptimality of the maximally entangled state by our numerical evaluation of $\rho_A(Y)$ using the optimal solution $Y$ for the conic programs 
$J_{2,\gamma}, J_{4,\gamma}$ and $J_{5,\gamma}$ for $\gamma$.
In particular, while $\rho_A(Y_{2,\gamma})$, 
$\rho_A(Y_{4,\gamma})$, and $\rho_A(Y_{5,\gamma})$ 
are still a diagonal matrices when $\gamma > 0$, they are not completely mixed states. 
Hence, their purifications cannot be the maximally entangled state.
We can furthermore see the suboptimality of the maximally entangled state because $S^{\rm sym}_{k,\gamma} \ge J_{k,\gamma}$ 
for $k=2,4,5$ when $\gamma > 0$.
Curiously, from Figure \ref{fig:fixn},
the maximally entangled state 
on the symmetric subspace
is nonetheless still quite close to optimal. 

Motivating the above analysis, we prove the following theorem. 
When the support of the input state is limited to the symmetric subspace
in the channel $ \Lambda_{\theta}$,
we denote the obtained channel by $\Lambda_{\theta,\gamma}^{\rm sym}$.
Using this limitation for the inputs, we define 
$T_{\theta,\gamma}^{\rm sym}$ and $F_{j,\gamma}^{\rm sym}$ in the same way.

\begin{theorem}\label{THNM}
For $\gamma >0$, we have 
\begin{align}
J_4[T_{\theta_0,\gamma}^{\rm sym}, &
(F_{j,\gamma}^{\rm sym})_{j=1,2,3}] \nonumber \\
<& S^{\rm sym}_{k,\gamma} \nonumber \\
=& S_4 \Big[  
 \Lambda_{\theta_0,\gamma}^{\rm sym} \otimes \iota_C(  |\Phi\>\<\Phi|),
\nonumber \\   
& \quad \Big(
 \frac{\partial }{\partial \theta^j} \Lambda_{\theta,\gamma} ^{\rm sym} \otimes \iota_C(  |\Phi\>\<\Phi|)\Big|_{\theta = \theta_0}  
 \Big)_{j=1,2,3}\Big]   .\notag
\end{align}
\end{theorem}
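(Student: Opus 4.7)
The plan is to invoke Theorem~\ref{NNT} in contrapositive form on the symmetric-subspace model $(T^{\rm sym}_{\theta_0,\gamma,N}, (F^{\rm sym}_{j,\gamma,N})_{j=1,2,3})$: the equality $J_4 = S^{\rm sym}_{4,\gamma}$ holds iff condition (A2) is satisfied, namely $\Tr_B W_{\rm SLD}(T^{\rm sym}_N, \vec L_*) \in \mathcal K$. Thus to prove strict inequality for $\gamma > 0$, it suffices to exhibit an asymmetry between the longitudinal ($z$) and transverse ($x,y$) directions in this partial trace that prevents it from being a scalar multiple of $I_A$.

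First I would compute the SLD operators $L_j$ of the symmetric-subspace channel at $\theta_0 = 0$ from $T^{\rm sym}_N \circ L_j = F^{\rm sym}_{j,N}$. At $\gamma = 0$, $T^{\rm sym}_N$ is the maximally entangled state on the symmetric subspace and $L_j \propto i[E^j, |\Phi\rangle\langle\Phi|]$; the SU(2) covariance of $|\Phi\rangle\langle\Phi|$ together with the Casimir identity $\sum_j (E^j)^2 \propto I$ makes $\Tr_B W_{\rm SLD}(T^{\rm sym}_N, \vec L_*) \propto I_A$, recovering Theorem~\ref{thm:SU2-maxent}. For $\gamma > 0$, the collective jump operator $\tilde A_\gamma = \gamma(E^1 - iE^2)$ preserves only the axial $U(1)$ symmetry about $z$, so the $L_j$ pick up $\gamma$-dependent corrections that treat $j = 3$ differently from $j \in \{1,2\}$.

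Next, following the commutator-based hint, I would compute $\Im \Tr\bigl(T^{\rm sym}_N L_i L_j\bigr)$ for the angular-momentum index pairs. At $\gamma = 0$ these expectations vanish by SU(2) symmetry. At $\gamma > 0$, axial covariance forces the $(1,2)$-expectation to be the unique non-vanishing entry, strictly nonzero at leading order in $\gamma$. I would then translate this commutator asymmetry into $\Tr_B W_{\rm SLD}(T^{\rm sym}_N, \vec L_*) = \sum_i \Tr_B\bigl(L^i_* T^{\rm sym}_N L^i_*\bigr)$ (with $G = I$): the axial-but-not-spherical covariance forces an imbalance between diagonal blocks carrying different $J_z$-weight, so the partial trace fails to be proportional to $I_A$. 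This falsifies (A2) and, via Theorem~\ref{NNT}, yields $J_4 < S^{\rm sym}_{4,\gamma}$.

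The main obstacle will be producing the SLDs in sufficiently explicit form on the multi-qubit symmetric subspace, since $\mathcal L_{0,\gamma}$ mixes $J_z$-eigenspaces and the superoperator $X \mapsto T^{\rm sym}_N \circ X$ admits no obvious diagonalization for general $\gamma$. I would sidestep full diagonalization by perturbation in $\gamma$ about the unitary model and by using axial covariance to constrain the allowed tensor structure of the $L_j$-corrections. Strict violation of (A2) at leading order establishes the claim for small $\gamma$; for arbitrary $\gamma > 0$, the same covariance-based argument applies because the representation-theoretic obstruction (axial symmetry without full SU(2) symmetry) is present throughout the entire range $\gamma > 0$, so the partial trace can never coincide with a scalar.
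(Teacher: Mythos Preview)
Your route via Theorem~\ref{NNT} is \emph{not} the one the paper takes. The paper does not check condition~(A2) at all. Instead it shows directly that the SLD weak-commutativity condition fails at $|\Phi\rangle\langle\Phi|$: it approximates $\rho_\gamma=(\Lambda_{0,\gamma}\otimes\iota_C)(|\Phi\rangle\langle\Phi|)$ by the rank-two operator $\bar\rho_\gamma=|\phi\rangle\langle\phi|+|\epsilon\rangle\langle\epsilon|$ built from $|\phi\rangle=(I-\tfrac12\tilde A_\gamma^\dagger\tilde A_\gamma)\otimes I_C\,|\Phi\rangle$ and $|\epsilon\rangle=\tilde A_\gamma\otimes I_C\,|\Phi\rangle$, computes its two nonzero eigenvalues and eigenvectors explicitly in the Dicke basis (Lemmas~\ref{L12} and~\ref{lem:E1E2-cross-products}), evaluates $\tr[\bar L_1\bar\rho_\gamma\bar L_2]-\tr[\bar L_2\bar\rho_\gamma\bar L_1]$ in closed form, and then controls the error in passing from $\bar\rho_\gamma$ back to $\rho_\gamma$ via eigenvector perturbation bounds (Lemma~\ref{lem:commutator perturbation}). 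The upshot is $\tr[L_1\rho_\gamma L_2]-\tr[L_2\rho_\gamma L_1]=-\tfrac{2in(n+2)}{3}\bigl(1+n(n+2)/3\bigr)\gamma^2+O(\gamma^4)$, and the paper concludes the theorem from this non-vanishing.

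There is a genuine gap in your proposal, at the step where you ``translate this commutator asymmetry into $\Tr_B W_{\rm SLD}$.'' These are two different conditions: $\Im\Tr(\rho L_iL_j)\neq 0$ is the criterion for $S_4<S_5$ (the SLD bound is not attainable even with correlated measurements), whereas (A2) of Theorem~\ref{NNT} is the criterion for $J_4=S_4[|\Phi\rangle\langle\Phi|]$ (the maximally entangled input is optimal for the SLD bound). Neither implies the other in general, so you cannot read off a violation of (A2) from the commutator. To actually run your strategy you would have to compute $\Tr_B\sum_i L^i_* T_N L^i_*$ in the Dicke basis and check that its diagonal entries differ---a computation you have not outlined. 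Your closing claim, that axial-but-not-spherical covariance forces (A2) to fail for \emph{every} $\gamma>0$, is also unjustified: axial covariance only constrains $\Tr_B W_{\rm SLD}$ to be diagonal in the $J_z$-basis; it does not rule out that diagonal being accidentally constant at some $\gamma$. (The paper's argument is likewise only perturbative in $\gamma$, so it too is effective only for small $\gamma$.)
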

Since the relation 
$J_4[T_{\theta_0,\gamma}, (F_{j,\gamma})_{j=1,2,3}] 
\le J_4[T_{\theta_0,\gamma}^{\rm sym}, 
(F_{j,\gamma}^{\rm sym})_{j=1,2,3}] 
$ holds,
the above theorem means that 
the maximally entangled state on the symmetric subspace
is not the optimal input state.

\subsection{Methodology of numerical analysis}
Now we use $\rho' = \Lambda_{\theta,\gamma}^{\approx}(\rho)$ to approximate $\Lambda_{\theta,\gamma}(\rho)$ according to the following procedure:
Given any input state $\rho$, we define 
\begin{align}
\bar m_*(\rho ) = \min( 100, \min \{j : \| ({ \mathcal L}_{\theta,\gamma}^j    (\rho)/j!\| 
< 10^{-12}\}).\notag
\end{align}
Then we define 
\begin{align}
  \Lambda_{\theta,\gamma}^{\approx}
(\rho)
:= 
    \sum_{j=0}^{ m_*(\rho)}  \frac{ { \mathcal L}_{\theta,\gamma} ^j    (\rho )}{j!}.\notag
\end{align}
On the state $\rho_{AC}$, we also define 
\begin{align}
  \Lambda_{\theta,\gamma}^{\approx}
(\rho_{AC})
:= 
    \sum_{j=0}^{m_*(\rho_{AC})}  
    \frac{(\mathcal L_{\theta,\gamma} ^j\otimes \iota_C) (\rho_{AC} )}{j!}.\notag
\end{align}

We obtain approximations of $T_{\theta,\gamma}$ with
\begin{align}
  T_{\theta,\gamma}^{\approx}
  =  \Lambda_{\theta,\gamma}^{\approx}( (n+1) |\Phi\>\<\Phi| ).   \notag
\end{align}
We also obtain approximations of $F_1, F_2,F_3$ according to the formula
\begin{align}
F_{1,\gamma}^\approx &= 
\frac{T^{\approx}_{ (10^{-12},0,0),\gamma}
- T^{\approx}_{(0,0,0) ,\gamma}}{10^{-12}},\notag \\    
F_{2,\gamma}^\approx &= 
\frac{T^{\approx}_{ (0, 10^{-12},0) ,\gamma}
- T^{\approx}_{(0,0,0) ,\gamma}}{10^{-12}}, \notag\\    
F_{3,\gamma}^\approx &= 
\frac{T^{\approx}_{ (0,0,10^{-12}),\gamma}
- T^{\approx}_{(0,0,0) ,\gamma}}{10^{-12}}. \notag  
\end{align}
With the above, we approximate $J_{2,\gamma}, J_{4,\gamma},
J_{5,\gamma}$ with 
\begin{align}
J_{k,\gamma}^\approx
    = J_k[T^{\approx}_{(0,0,0),\gamma}, 
    (F_{j,\gamma}^\approx)_{j=1,2,3} ] .\notag
\end{align}
We also approximate
$P_2, P_4, P_5$ with 
\begin{align}
S_{k,\gamma}^{{\rm sym},\approx}
    = S_{k,\gamma}
    [T^{\approx}_{(0,0,0)/(n+1),\gamma}, 
    (F_{j,\gamma}^\approx/(n+1))_{j=1,2,3} ] .\notag
\end{align}

We approximate 
$\Lambda_{(0,0,0)}(\rho_{\rm 3DGHZ})$ according to formula
\begin{align}
Q_0 = \Lambda^\approx_{(0,0,0),\gamma}(\rho_{\rm 3DGHZ})\notag
\end{align}
and approximate $\frac{\partial}{\partial \theta^j}\Lambda_{\theta}(\rho_{\rm 3DGHZ})|_{\theta = (0,0,0),\gamma}$ according to 
\begin{align}
Q_1^\approx &= \frac{\Lambda^\approx_{(10^{-12},0,0),\gamma}(\rho_{\rm 3DGHZ})
 -
 \Lambda^\approx_{(0,0,0),\gamma}
 (\rho_{\rm 3DGHZ})}{10^{-12}}, \notag\\
 Q_2^\approx &= \frac{\Lambda^\approx_{(0,10^{-12},0),\gamma}(\rho_{\rm 3DGHZ})
 -
 \Lambda^\approx_{(0,0,0),\gamma}
 (\rho_{\rm 3DGHZ})}{10^{-12}}, \notag\\
 Q_3^\approx &= \frac{\Lambda^\approx_{(0,0,10^{-12}),\gamma}
 (\rho_{\rm 3DGHZ})
 -
 \Lambda^\approx_{(0,0,0),\gamma}(\rho_{\rm 3DGHZ})}{10^{-12}}.\notag
\end{align}
This allows us to approximate 
$S^{3D}_{2,\gamma},S^{3D}_{4,\gamma},S^{3D}_{5,\gamma}$ according to 
\begin{align}
    S_{k,\gamma}^{3D,\approx} 
    = S_k[ Q_0, (Q_1,Q_2,Q_3)].\notag
\end{align}   
In our numerical evaluations of 
$S_{k,\gamma}^{3D,\approx}, S_{k,\gamma}^{{\rm sym},\approx}, J_{k,\gamma}^\approx$,
we evaluate the semidefinite programs according to MatLab code given in Appendix \ref{sec:SDPs}.

\subsection{Proof of Theorem \ref{THNM}}
Next, we consider $\theta= (0,0,0)$, $\gamma > 0$, and where we fix the input probe state as the maximally entangled state $|\Phi\>\<\Phi|$.
In this case, after the channel acts on our input probe state, 
we have the output state 
$\rho_\gamma  :=(\Lambda_{0,\gamma} \otimes \iota_C )(|\Phi\>\<\Phi|)$.
Letting $F^u := E^u \otimes I_C$,
we define
\begin{align}
L_1 &:= L_1[\rho_\gamma , ( i [\rho_\gamma, F^1], i [\rho_\gamma, F^2],i [\rho_\gamma, F^3]) ]\notag\\
L_2 &:= L_2[\rho_\gamma , ( i [\rho_\gamma, F^1], i [\rho_\gamma, F^2],i [\rho_\gamma, F^3]) ].\notag
\end{align}
Then, we have the following lemma.
\begin{lemma}
\label{lem:trL1rhoL2-trL2rhoL1}
    \begin{align}
&\tr[L_1 \rho_\gamma L_2]
-
\tr[L_2 \rho_\gamma L_1]  \notag\\
&=
\frac{- 2 i n(n+2)}{3} (1 + n(n+2)/3)\gamma^2 
 + O(\gamma^4).\label{commute 12}
\end{align}
\end{lemma}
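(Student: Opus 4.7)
The plan is to apply cyclicity of the trace to rewrite
\begin{align}
\tr[L_1 \rho_\gamma L_2] - \tr[L_2 \rho_\gamma L_1] = \tr\bigl[[L_2, L_1]\,\rho_\gamma\bigr],\notag
\end{align}
and then expand the right-hand side in powers of $\gamma$. Since the noise part of the Lindbladian is of order $\gamma^2$, namely $\mathcal L_{0,\gamma} = \gamma^2 \mathcal N$ with $\gamma$-independent superoperator $\mathcal N(\rho) = (\tilde A_1\otimes I_C)\rho(\tilde A_1\otimes I_C)^\dagger - \tfrac{1}{2}\{\tilde A_1^\dagger \tilde A_1\otimes I_C,\rho\}$, we have $\rho_\gamma = \rho_0 + \gamma^2\sigma + O(\gamma^4)$ with $\rho_0 = |\Phi\>\<\Phi|$ and $\sigma = \mathcal N(\rho_0)$. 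Correspondingly I would expand the SLDs as $L_u = L_u^{(0)} + \gamma^2 L_u^{(2)} + O(\gamma^4)$.

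At order $\gamma^0$, since $\rho_0$ is pure, the standard pure-state formula gives $L_u^{(0)} = -2i[F^u,\rho_0]$, hence $L_u^{(0)}|\Phi\> = -2i F^u|\Phi\>$ because $\<\Phi|F^u|\Phi\> = \tfrac{1}{n+1}\tr E^u = 0$. At order $\gamma^2$, $L_u^{(2)}$ is determined by the Lyapunov equation
\begin{align}
\rho_0 L_u^{(2)} + L_u^{(2)}\rho_0 = -2i[F^u,\sigma] - \{L_u^{(0)},\sigma\},\notag
\end{align}
which, since $\rho_0$ has rank one, can be solved explicitly by block-decomposing operators against the projector $\rho_0$ and its complement.

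Expanding $\tr\bigl[[L_2,L_1]\rho_\gamma\bigr]$ order by order, the $O(1)$ term $\tr\bigl[[L_2^{(0)},L_1^{(0)}]\rho_0\bigr]$ collapses, via $L_u^{(0)}|\Phi\> = -2iF^u|\Phi\>$, to $-4\<\Phi|[F^2,F^1]|\Phi\> = 4i\<\Phi|F^3|\Phi\> = 0$. The $O(\gamma^2)$ coefficient has three contributions: $\tr\bigl[[L_2^{(0)},L_1^{(0)}]\sigma\bigr]$, $\tr\bigl[[L_2^{(2)},L_1^{(0)}]\rho_0\bigr]$ and $\tr\bigl[[L_2^{(0)},L_1^{(2)}]\rho_0\bigr]$. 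Using $\<\Phi|L_u^{(0)} = 2i\<\Phi|F^u$ together with cyclicity, each term involving $L_v^{(2)}$ can be rewritten as a scalar of the form $-2i\tr[L_v^{(2)}\{F^u,\rho_0\}]$, which is then eliminated by contracting against the Lyapunov equation to leave only expressions in $\rho_0$, $\sigma$, and the $F^u$. Throughout I would exploit the symmetric-subspace structure: $\mathcal H_{\mathrm{sym}}$ carries the spin-$j = n/2$ irrep of $\mathrm{SU}(2)$, so $[F^a,F^b] = i\epsilon^{abc}F^c$ and the Casimir satisfies $\sum_a (F^a)^2 = j(j+1) I = \tfrac{n(n+2)}{4}I$, which collapses the remaining index sums into the claimed closed form.

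The main obstacle is the combinatorial bookkeeping of the three order-$\gamma^2$ pieces together with the explicit solution of the Lyapunov equation for $L_u^{(2)}$ on the rank-one state $\rho_0$. I expect the factor $n(n+2)/3 = 4j(j+1)/3$ in the final answer to arise from averaging Casimir-type combinations such as $\sum_a \<\Phi|F^a F^b F^a|\Phi\>$ over the three spatial directions, while the additive $1$ in the prefactor $\bigl(1 + n(n+2)/3\bigr)$ should come from the purely commutator contribution $[F^2,F^1] = -iF^3$ probed against $\sigma$. A cleaner route may be to express everything in the angular-momentum basis, where $\tilde A_\gamma = \gamma J_-$ acts as a ladder operator on the $A$ leg while $|\Phi\>$ intertwines this action with the $C$ leg, and then to evaluate the remaining inner products using standard $\mathrm{SU}(2)$ matrix elements; the overall imaginary prefactor $-2i$ then naturally reflects the antisymmetry of the commutator $[L_2,L_1]$.
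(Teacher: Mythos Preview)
Your approach differs substantially from the paper's and, as written, has a genuine gap. The expansion $L_u = L_u^{(0)} + \gamma^2 L_u^{(2)} + O(\gamma^4)$ with the pure-state choice $L_u^{(0)} = -2i[F^u,\rho_0]$ is not valid: since $\rho_0 = |\Phi\rangle\langle\Phi|$ has rank one, the order-$\gamma^2$ Lyapunov equation $\rho_0 L_u^{(2)} + L_u^{(2)}\rho_0 = R_u$ is solvable only if $QR_uQ = 0$ with $Q = I - \rho_0$. A short computation gives
\[
QR_uQ \;=\; 2i\bigl[\,|\epsilon_1\rangle\langle\epsilon_1|\,,\, QF^uQ\,\bigr],
\qquad |\epsilon_1\rangle := (\tilde A_1\otimes I_C)|\Phi\rangle,
\]
and this commutator does not vanish for $u=1,2$. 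The obstruction reflects the rank jump of $\rho_\gamma$ at $\gamma=0$: a new eigenvalue $\lambda_2\sim\gamma^2$ emerges from the kernel, so the SLD is not analytic in $\gamma^2$ for the zeroth-order choice you made. You can repair this by using the gauge freedom of the SLD on $\ker\rho_0$ and adding a kernel-kernel block $M_u$ to $L_u^{(0)}$ satisfying $\{M_u,|\epsilon_1\rangle\langle\epsilon_1|\} = 2i[|\epsilon_1\rangle\langle\epsilon_1|,QF^uQ]$; but then your three order-$\gamma^2$ pieces are no longer as clean (for instance $\tr\bigl[[L_2^{(0)},L_1^{(0)}]\sigma\bigr]$ acquires a $\langle\epsilon_1|[M_2,M_1]|\epsilon_1\rangle$ contribution), and carrying this through is essentially tracking the rank-two structure by hand.

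The paper sidesteps this entirely by never expanding the SLD. It replaces $\rho_\gamma$ by the rank-two surrogate $\bar\rho_\gamma = |\phi\rangle\langle\phi| + |\epsilon\rangle\langle\epsilon|$ with explicit $\gamma$-dependent eigenvalues, applies a closed spectral formula for $\tr[\bar L_u\bar\rho_\gamma\bar L_v]$ in terms of those eigenvalues and eigenvectors, evaluates the handful of matrix elements $\langle\phi|F^{1,2}|\epsilon\rangle$, $\langle\phi|F^3|\phi\rangle$, $\langle\epsilon|F^3|\epsilon\rangle$ directly on the Dicke basis, and then shows via eigenvector perturbation bounds that passing from $\bar\rho_\gamma$ to the true $\rho_\gamma$ changes the commutator only by $O(\gamma^4)$.
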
 
Since $\tr[L_1 \rho_\gamma L_2]
-
\tr[L_2 \rho_\gamma L_1]$ is not equal to zero, this proves that the SLD bound $S^{\rm sym}_{k,\gamma} $ is not a tight bound.
That is, we obtain Theorem \ref{THNM}.

In the following, we show Lemma \ref{lem:trL1rhoL2-trL2rhoL1}. 
For this aim, we handle the symmetric subspace.
The symmetric subspace is described by using the Dicke basis $\{|D^n_0\> , \dots, |D^n_n\>\}$ where the Dicke states
\begin{align}
|D^{n}_{w}\> :=
\sum_{\substack{
(x_1,\dots, x_n) \in \{0,1\}^n \\
x_1 + \dots +  x_n = w
}} \frac{1}{\sqrt{\binom n w}} (|x_1\> \otimes \dots \otimes |x_n\> ) \notag
\end{align}
are uniform superpositions of all computational states in $\{|0\>,|1\>\}^{\otimes n}$ of constant Hamming weight $w$ with unit norm. 
Then, 
the maximally entangled state $|\Phi\>$ on the symmetric subspace has the form
$|\Phi\>= \frac{1}{\sqrt{n+1}}\sum_{u=0}^n |D^n_u\>\otimes |D^n_u\>$, and
\if0
we introduce vectors
\begin{align}
 |\Psi_+\> &:=
(  (I_A + \tilde{A}_\gamma - \frac{1}{2}  \tilde{A}_\gamma^\dagger{\tilde{A}_\gamma}) \otimes I_C)
|\Phi\>,
   \Label{BYU1}
   \\
 |\Psi_-\> &:=
(  (I_A - \tilde{A}_\gamma - \frac{1}{2}  {\tilde{A}_\gamma} ^\dagger{\tilde{A}_\gamma}) \otimes I_C)
 |\Phi\>.\notag
\end{align}
Also, 
\fi
we introduce two orthogonal vectors 
\begin{align}
 |\phi\> := &
( (I_A  - \frac{1}{2}  {\tilde{A}_\gamma} ^\dagger{\tilde{A}_\gamma})\otimes I_C) |\Phi\>\notag\\
=&    \frac{1}{\sqrt{n+1}}\sum_{u=0}^n  (I_A  - \frac{1}{2}  {\tilde{A}_\gamma} ^\dagger{\tilde{A}_\gamma}) |D^n_u\>\otimes |D^n_u\> ,\Label{BYU1}\\
 |\epsilon\> :=& 
( {\tilde{A}_\gamma}\otimes I_C)   |\Phi\>
= 
   \frac{1}{\sqrt{n+1}}\sum_{u=0}^n  {\tilde{A}_\gamma}  |D^n_u\>\otimes |D^n_u\>.
\Label{BYU2}
\end{align}
The vectors $ |\phi\>$ and $|\epsilon\>$ are orthogonal because of the following reason.
First, according to \eqref{tildeAgam sym}, when $u=1,\dots, n$, 
${\tilde{A}_\gamma}|D^n_u\> = \gamma \sqrt{(n-u+1)u} |D^n_{u-1}\>$, and when $u =0$, ${\tilde{A}_\gamma}|D^n_u\>=0,$
and hence ${\tilde{A}_\gamma}|D^n_u\> = a_u|D^n_{u-1}\>$ for some real number $a_u$.
Second, according to \eqref{tildeAgam sym}
$ (I_A  - \frac{1}{2}  {\tilde{A}_\gamma} ^\dagger{\tilde{A}_\gamma}) |D^n_u\> =  
b_u |D^n_u\>$ for some real number $b_u$.
Hence $\<\phi|\epsilon\> = \frac{1}{n+1}\sum_{u=1}^n a_u b_u \<D^n_{u-1}|D^n_u\>$. Since $|D^n_u\>$ and $|D^n_{u-1}\>$ are pairwise orthogonal for $u=1,\dots,n$, it follows that $\<\phi|\epsilon\>=0$, and hence $ |\phi\>$ and $|\epsilon\>$ are orthogonal.

It is easy to see that
\begin{align}
\lambda_1 :=&\<\phi|\phi\> 
=
\frac{1}{n+1} \sum_{w=0}^n (  1- \gamma^2 (n-w+1)w/2 )^2\label{pp}
\notag\\
=&1- \frac{\gamma^2}{6}n(n+2) + O(\gamma^4),\\
\lambda_2 :=&\<\epsilon|\epsilon\> 
= \frac{\gamma^2}{n+1} \sum_{w=0}^{n-1} (n-w)(w+1)
= \frac{\gamma^2}{6}n(n+2).\label{ee}
\end{align} 

Using 
$|\phi_1\> := |\phi\>/\sqrt{\lambda_1}$ and
$|\phi_2\> := |\epsilon\>/\sqrt{\lambda_2}$, 
we define
\begin{align}
 \bar \rho_\gamma := |\phi\>\<\phi| + |\epsilon\>\<\epsilon|  
 = \lambda_1 |\phi_1\>\<\phi_1| + \lambda_2|\phi_2\>\<\phi_2| . \Label{BYU3}
\end{align} 
Hence, 
the two largest eigenvalues of  $\bar\rho_\gamma$ are $\lambda_1$ and 
$\lambda_2$,
which are given by $1- \frac{\gamma^2}{6}n(n+2)+ O(\gamma^4)$ 
and $ \frac{\gamma^2}{6}n(n+2)$ respectively.
Using these, we define two quantities
\begin{align}
f(1,2) :=&
\frac
{ \lambda_1 -\lambda_2}
{ \lambda_1 + \lambda_2}
= 1- \frac{\gamma^2}{3}n(n+2) + O(\gamma^4),\\
g: =& f(1,2)^2-1.\notag
\end{align}
Then, we have
\begin{align}
g =
- \frac{2 \gamma^2 n(n+2)}{3} + O(\gamma^4)
 \label{gval}.
\end{align}

Introducing two operators
\begin{align}
\bar L_1 &:= L_1[\bar \rho_\gamma , ( i [\bar \rho_\gamma, F^1], i [\bar \rho_\gamma, F^2],i [\bar \rho_\gamma, F^3]) ]\notag\\
\bar L_2 &:= L_2[\bar \rho_\gamma , ( i [\bar \rho_\gamma, F^1], i [\bar \rho_\gamma, F^2],i [\bar \rho_\gamma, F^3]) ],\notag
\end{align}
we have the following two lemmas.
\begin{lemma}\label{L12}
Let $u,v = 1,2,3$.
We have
\begin{align}
 \tr (\bar L_u \bar \rho_\gamma \bar L_v ) 
=&
- 4\sum_{l=1}^2
\<\phi_{l}|{F^u}  |\phi_{l}\>
\<\phi_l|{F^v}|\phi_l\>  \notag\\
&+
4 g \lambda_2
\<\phi_{1}|{F^u}  |\phi_{2}\>
\<\phi_2|{F^v}|\phi_1\>  \notag\\
&+
4 g \lambda_1
\<\phi_{2}|{F^u}  |\phi_{1}\>
\<\phi_1|{F^v}|\phi_2\>  \notag\\
&+
4 \sum_{l=1}^2  \lambda_l \<\phi_{l}|{F^v} {F^u}|\phi_l\>.
\label{full-expression}
\end{align} 
\end{lemma}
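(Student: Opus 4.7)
The plan is to compute $\tr(\bar L_u \bar \rho_\gamma \bar L_v)$ directly from the spectral decomposition of $\bar \rho_\gamma$ on its rank-two support. First, I would extend the orthonormal pair $\{|\phi_1\rangle, |\phi_2\rangle\}$ to a complete orthonormal basis $\{|\phi_k\rangle\}_{k\geq 1}$ of $\mathcal H_A\otimes \mathcal H_C$ so that $\{|\phi_k\rangle\}_{k\geq 3}$ spans $\ker \bar \rho_\gamma$, and set $\lambda_k := 0$ for $k\geq 3$. Since the $u$-th derivative of the state model is $D_u = i[\bar\rho_\gamma, F^u]$, its matrix elements in this basis are $\langle\phi_k|D_u|\phi_l\rangle = i(\lambda_k - \lambda_l)\langle\phi_k|F^u|\phi_l\rangle$.

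Second, I would solve the SLD equation $\bar\rho_\gamma \circ \bar L_u = D_u$ at the level of matrix elements, which gives
\[
\langle\phi_k|\bar L_u|\phi_l\rangle = \frac{2i(\lambda_k - \lambda_l)}{\lambda_k + \lambda_l}\langle\phi_k|F^u|\phi_l\rangle
\]
whenever $\lambda_k + \lambda_l > 0$. Concretely, the diagonal support entries vanish, the $(1,2)$ entry carries a factor $2i f(1,2)$, the entries between support and kernel carry $\pm 2i$, and the kernel--kernel block may be set to zero without affecting the trace because $\bar\rho_\gamma$ annihilates the kernel.

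Third, I would evaluate the trace by cyclicity,
\[
\tr(\bar L_u \bar\rho_\gamma \bar L_v) = \sum_{l=1}^2 \lambda_l \langle\phi_l|\bar L_v \bar L_u|\phi_l\rangle = \sum_{l=1}^2 \lambda_l \sum_k \langle\phi_l|\bar L_v|\phi_k\rangle\langle\phi_k|\bar L_u|\phi_l\rangle,
\]
and split the inner sum into the support piece $k\in\{1,2\}$, which produces an $f(1,2)^2$ factor in the off-diagonal term, and the kernel piece $k\geq 3$, which I would close up via $\sum_k |\phi_k\rangle\langle\phi_k| = I$ to obtain $\langle\phi_l|F^v F^u|\phi_l\rangle$ minus the two support subtractions already counted. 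Using $f(1,2)^2 - 1 = g$, the surviving support subtraction combines with the $f(1,2)^2$ piece to produce the $4g\lambda_l$ off-diagonal contributions displayed in \eqref{full-expression}, while the pure-diagonal subtraction yields the first summand and the completed $F^v F^u$ bilinear yields the last.

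The calculation is essentially a bookkeeping exercise once the rank-two SLD formula is set up; the main thing to watch is the cross terms between the support block and the kernel block of $\bar L_u$ and $\bar L_v$, which must be collapsed via the completeness relation into the $\langle\phi_l|F^v F^u|\phi_l\rangle$ piece and the compensating support subtractions. No conceptual obstacle is anticipated beyond this algebraic regrouping and the check that the freedom on the kernel--kernel block is genuinely irrelevant to the trace.
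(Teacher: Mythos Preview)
Your approach is correct and essentially identical to the paper's: the paper first establishes a general SLD matrix-element formula (Lemma~\ref{lem:sld-unitary}) and a general trace identity (Lemma~\ref{lem:eval trace LjrhoLk}) for arbitrary rank $p$, then specializes to $p=2$, whereas you carry out the same computation directly in the rank-two case. Note that your calculation, like the paper's own derivation in \eqref{proof fullexp}, will produce a factor $\lambda_l$ in the first summand that is missing from the displayed statement of the lemma; this is a typo in the statement and is harmless in the application since that term vanishes.
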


\begin{lemma}
Let $\mu := n(n+2)/12$. Then
 \label{lem:E1E2-cross-products}
\begin{align}
\<\phi|F^1 |\epsilon\> &=  \mu \gamma +  O(\gamma^3),
 \label{p1e}
 \\
\<\phi|F^2 |\epsilon\> &=  i\mu \gamma +  O(\gamma^3), 
 \label{p2e}
 \\
   \<\phi| F^3 |\phi\>   &= \mu \gamma^2 ,\label{p3p}\\
      \<\epsilon| F^3 |\epsilon\>   &= \mu  \gamma^2   + O(\gamma^4) \label{e3e}.
\end{align}
\end{lemma}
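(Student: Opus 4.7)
The plan is to expand $|\phi\>$, $|\epsilon\>$, and each operator $F^u = E^u\otimes I_C$ explicitly in the Dicke basis, reducing each inner product to a finite polynomial sum in the Dicke label $u$ that evaluates in closed form.

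Since $\tilde A_\gamma|D^n_u\> = \gamma\sqrt{(n-u+1)u}\,|D^n_{u-1}\>$, the operator $\tilde A_\gamma^\dagger \tilde A_\gamma$ is diagonal in the Dicke basis with eigenvalues $\gamma^2(n-u+1)u$, so from \eqref{BYU1}--\eqref{BYU2} I would record
\begin{align*}
 |\phi\> &= \frac{1}{\sqrt{n+1}}\sum_{u=0}^n \bigl(1 - \tfrac{\gamma^2}{2}(n-u+1)u\bigr)\,|D^n_u\>\otimes |D^n_u\>, \\
 |\epsilon\> &= \frac{\gamma}{\sqrt{n+1}}\sum_{u=1}^n \sqrt{(n-u+1)u}\,|D^n_{u-1}\>\otimes |D^n_u\>.
\end{align*}
Restricted to the symmetric subspace, the collective spin operators coincide with the spin-$n/2$ generators: $E^3|D^n_u\> = (n/2-u)|D^n_u\>$, together with $E^1 = (J_+ + J_-)/2$, $E^2 = -i(J_+ - J_-)/2$ (the sign fixed by the paper's convention $\tau_2 = i\tau_1\tau_3$), and the ladder formulas $J_+|D^n_u\> = \sqrt{u(n-u+1)}|D^n_{u-1}\>$, $J_-|D^n_u\> = \sqrt{(n-u)(u+1)}|D^n_{u+1}\>$.

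For $\<\phi|F^u|\epsilon\>$ with $u\in\{1,2\}$, I would apply $E^u \otimes I_C$ to $|\epsilon\>$; because $|\phi\>$ is supported only on diagonal Dicke pairs $|D^n_v\>\otimes|D^n_v\>$, Dicke orthogonality on system $C$ annihilates every contribution except the $J_+$-piece of $E^u$ acting on $|D^n_{u-1}\>$, which produces $|D^n_u\>$ on system $A$. Both inner products then collapse to $\tfrac{\gamma}{2(n+1)}\sum_{u=1}^n (n-u+1)u$ times a phase ($1$ for $F^1$, $i$ for $F^2$), and the Faulhaber identity $\sum_{u=1}^n u(n-u+1) = n(n+1)(n+2)/6$ yields $\mu\gamma + O(\gamma^3)$ and $i\mu\gamma + O(\gamma^3)$ respectively, with the $\gamma^3$ correction coming from the $(1 - \gamma^2(n-u+1)u/2)$ factor in $|\phi\>$.

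For the $F^3$ inner products, diagonality of $E^3$ makes the computation direct. The first reduces to $\tfrac{1}{n+1}\sum_u (n/2-u)\bigl(1-\tfrac{\gamma^2}{2}(n-u+1)u\bigr)^2$; the $\gamma^0$ term vanishes by $\sum_{u=0}^n(n/2-u) = 0$, and the $\gamma^2$ coefficient $-\tfrac{1}{n+1}\sum_u (n/2-u)(n-u+1)u$ is evaluated by expanding the cubic and applying the standard power-sum identities for $\sum u, \sum u^2, \sum u^3$, producing $\mu\gamma^2$. The expression $\<\epsilon|F^3|\epsilon\>$ carries an overall $\gamma^2$ from $|\epsilon\>$, so it equals $\tfrac{\gamma^2}{n+1}\sum_{u=1}^n (n-u+1)u(n/2-u+1)$, and the same type of cubic sum manipulation yields $\mu\gamma^2 + O(\gamma^4)$. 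The only nontrivial point is bookkeeping: fixing the sign of $E^2$ consistently with $\tau_2 = i\tau_1\tau_3$, identifying which ladder-operator piece of each $E^u$ produces a Dicke-diagonal survivor in the off-diagonal inner products, and evaluating the resulting cubic sums via Faulhaber's formulas; no conceptual obstruction arises.
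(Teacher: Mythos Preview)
Your approach is essentially identical to the paper's: both expand $|\phi\>$ and $|\epsilon\>$ in the Dicke basis, apply the explicit matrix elements of $E^1,E^2,E^3$ on Dicke states, and reduce each inner product to the closed-form sum $\sum_{u}(n-u+1)u = n(n+1)(n+2)/6$ (or its cubic variants). One small slip: with your convention $J_+|D^n_u\>\propto|D^n_{u-1}\>$, the piece of $E^u$ acting on $|D^n_{u-1}\>$ that survives the system-$C$ orthogonality is the $J_-$ part (which raises the Dicke label to $u$), not $J_+$; your coefficients and phases are nonetheless correct, so the slip is purely in the labeling and does not affect the result.
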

Lemma \ref{L12} is shown in Appendix \ref{app:SLD calculations},
and Lemma \ref{lem:E1E2-cross-products} is shown in Appendix \ref{app:prove lemma 12}.

Using \eqref{full-expression} with $(u,v)=(1,2)$, 
we find that  
\begin{align}
&\tr[\bar L_1\bar \rho_\gamma \bar L_2]
-\tr[\bar L_2\bar \rho_\gamma \bar L_1]\notag\\
\stackrel{(a)}{=}&
4 g \lambda_2
\<\phi_{1}|{F^1}  |\phi_{2}\>
\<\phi_2|{F^2}|\phi_1\>   +
4 g \lambda_1
\<\phi_{2}|{F^1}  |\phi_{1}\>
\<\phi_1|{F^2}|\phi_2\>  \notag\\
&-4 g \lambda_2
\<\phi_{1}|{F^2}  |\phi_{2}\>
\<\phi_2|{F^1}|\phi_1\>  
-4 g \lambda_1
\<\phi_{2}|{F^2}  |\phi_{1}\>
\<\phi_1|{F^1}|\phi_2\>  \notag\\
&+
4 \sum_{l=1}^2  \lambda_l \<\phi_{l}|{F^2} {F^1}|\phi_l\>
-
4 \sum_{l=1}^2  \lambda_l \<\phi_{l}|{F^1} {F^2}|\phi_l\>
\notag\\
\stackrel{(b)}{=}&
4 g \lambda_2
\<\phi_{1}|{F^1}  |\phi_{2}\>
\<\phi_2|{F^2}|\phi_1\>   + 4g
 \lambda_1
\<\phi_{2}|{F^1}  |\phi_{1}\>
\<\phi_1|{F^2}|\phi_2\> \notag\\
&-4 g \lambda_2
\<\phi_{1}|{F^2}  |\phi_{2}\>
\<\phi_2|{F^1}|\phi_1\>  
-4 g \lambda_1
\<\phi_{2}|{F^2}  |\phi_{1}\>
\<\phi_1|{F^1}|\phi_2\>  \notag\\
&-
4 i ( \lambda_1 \< \phi_1 | F^3  |\phi_1 \> 
+ \lambda_2 \< \phi_2 | F^3  |\phi_2 \>)\notag\\
\stackrel{(c)}{=}& 
8 g (\lambda_2
\<\phi_{1}|{F^1}  |\phi_{2}\>
\<\phi_2|{F^2}|\phi_1\>   +
 \lambda_1
\<\phi_{2}|{F^1}  |\phi_{1}\>
\<\phi_1|{F^2}|\phi_2\>) \notag\\
&-
4 i ( \lambda_1 \< \phi_1 | E^3  |\phi_1 \> 
+ \lambda_2 \< \phi_2 | E^3  |\phi_2 \>).\label{BNGH3}
\end{align}
In $(a)$ we use  
$\<\phi_1|F^2 |\phi_1\> = 0$ and $\<\phi_2|F^2 |\phi_2\> = 0$ so that the first summation in \eqref{full-expression} vanishes.
In $(b)$ we use
$\< \phi_j |( F^2 F^1 - F^1 F^2 )  |\phi_j \>
 = -i \< \phi_j |F^3  |\phi_j \>$.
In $(c)$ we use the following. First $i\<\phi_1|F^2 |\phi_2\>\in \mathbb R$ implies that 
$\<\phi_1|F^2 |\phi_2\> = - \<\phi_2|F^2 |\phi_1\>$. 
Second, $\<\phi_1|F^1 |\phi_2\>\in \mathbb R$
implies that $\<\phi_1|F^1 |\phi_2\> = \<\phi_2|F^1 |\phi_1\>$.
Hence 
$\<\phi_2|F^1 |\phi_1\> \<\phi_1|F^2 |\phi_2\> = 
-\<\phi_1|F^1 |\phi_2\> \<\phi_2|F^2 |\phi_1\>$, which implies
that $\<\phi_2|F^1 |\phi_1\> \<\phi_1|F^2 |\phi_2\>-  \<\phi_1|F^1 |\phi_2\> \<\phi_2|F^2 |\phi_1\> = 2\<\phi_2|F^1 |\phi_1\> \<\phi_1|F^2 |\phi_2\>$, which implies $(c)$.

Then, we have   
\begin{align}
 &\tr[\bar L_1 \bar \rho_\gamma \bar  L_2]
-
\tr[\bar L_2  \bar \rho_\gamma \bar L_1]\notag\\
\stackrel{(a)}{=}&
8 g (
\<\phi|{F^1}  |\epsilon\>
\<\epsilon|{F^2}|\phi\> / \lambda_1   +
\<\epsilon|{F^1}  |\phi\>
\<\phi|{F^2}|\epsilon\>/\lambda_2) \notag\\
&-4i(\<\phi|{F^3}|\phi\> + \<\epsilon|{F^3}|\epsilon\>)\notag\\
\stackrel{(b)}{=}&
8g (-i\mu^2\gamma^2 + O(\gamma^4) )
+ 
8g (( i\mu^2\gamma^2+ O(\gamma^4)  )  /  (2\mu\gamma^2) )\notag\\
&-
4 i ( 2 \mu \gamma^2  + O(\gamma^4))\notag \\
\stackrel{(c)}{=}& 
4i \mu g  - 8 i \mu \gamma^2 + O(\gamma^4) \notag\\
\stackrel{(d)}{=}&
(-32 i \mu^2 - 8i \mu)\gamma^2 + O(\gamma^4) \notag\\
\stackrel{(e)}{=}&
-8 i (4\mu^2 + \mu) \gamma^2 + O(\gamma^4) \notag\\
\stackrel{(f)}{=}&
\frac{- 2 i n(n+2)}{3} (1 + n(n+2)/3)\gamma^2 + O(\gamma^4) \label{bar commutator}.
\end{align} 
In $(a)$, we use $|\phi\>=|\phi_1\>/\sqrt{\lambda_1}, |\epsilon\> =|\phi_2\>/\sqrt{\lambda_2}$, and
\eqref{BNGH3}.
In $(b)$, we use Lemma \ref{lem:E1E2-cross-products}, and the fact that $\lambda_1 = \<\phi|\phi\>, \lambda_2 = \<\epsilon|\epsilon\>$ with \eqref{pp} and \eqref{ee}.
In $(c)$, we collect terms of leading order in $\gamma$.
In $(d)$, we use \eqref{gval}, namely, $g= -8\mu \gamma^2 +O(\gamma^4)$.
In $(e)$, we substitute the value of $\mu = n(n+2)/12$.

In Appendix \ref{app:commutator perturbation}, we prove the following lemma.
\begin{lemma}
\label{lem:commutator perturbation}
Let $u,v = 1,2,3$. Then
\begin{align}
\tr[L_u \rho_\gamma L_v]  -
\tr[\bar L_u \bar \rho_\gamma \bar L_v]   = O(\gamma^4).\label{EE160}
\end{align}
\end{lemma}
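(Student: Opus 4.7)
The plan is a three-step perturbation argument. First, I will show that $\rho_\gamma - \bar\rho_\gamma = O(\gamma^4)$ as operators. Second, I will express $\tr[L_u \rho L_v]$ in the eigenbasis of $\rho$ as a single explicit spectral sum. Third, I will compare the sums for $\rho_\gamma$ and $\bar\rho_\gamma$ by decomposing according to the dominant two-dimensional block and bounding the perturbation block by block.

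For the first step, since $\tilde A_\gamma = \gamma S_-$, every application of the Lindbladian $\mathcal L_{0,\gamma}$ carries a factor $\gamma^2$. Taylor expanding $e^{\mathcal L_{0,\gamma}}(|\Phi\>\<\Phi|) = |\Phi\>\<\Phi| + \mathcal L_{0,\gamma}(|\Phi\>\<\Phi|) + \tfrac{1}{2}\mathcal L_{0,\gamma}^2(|\Phi\>\<\Phi|) + \cdots$, and multiplying out $\bar\rho_\gamma = |\phi\>\<\phi| + |\epsilon\>\<\epsilon|$ directly, both reduce to $|\Phi\>\<\Phi| + \mathcal L_{0,\gamma}(|\Phi\>\<\Phi|)$ modulo $O(\gamma^4)$; the only residual term in $\bar\rho_\gamma$ is $\tfrac{1}{4}\tilde A_\gamma^\dagger \tilde A_\gamma |\Phi\>\<\Phi| \tilde A_\gamma^\dagger \tilde A_\gamma = O(\gamma^4)$. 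Consequently $i[\rho_\gamma, F^u] - i[\bar\rho_\gamma, F^u] = O(\gamma^4)$ for each $u$, since $\|F^u\|$ is bounded.

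For the second step, substituting $D_u = i[\rho, F^u]$ into the SLD equation $\rho \circ L_u = D_u$ in the eigenbasis $\rho = \sum_k \mu_k |e_k\>\<e_k|$ gives $\<e_k | L_u | e_l\> = \tfrac{2i(\mu_k - \mu_l)}{\mu_k + \mu_l}\<e_k | F^u | e_l\>$ on $\operatorname{supp}(\rho)$, so
\[ \tr[L_u \rho L_v] = -4 \sum_{k,l:\, \mu_k + \mu_l > 0} \tfrac{\mu_l (\mu_l - \mu_k)^2}{(\mu_l + \mu_k)^2} \<e_l | F^v | e_k\>\<e_k | F^u | e_l\>. \]
In the third step I will apply this to both $\rho_\gamma$ and $\bar\rho_\gamma$ and split each sum according to whether the indices lie inside or outside the two-dimensional block spanned by $\{|\phi\>, |\epsilon\>\}$ where $\bar\rho_\gamma$ is supported. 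Outside the block the eigenvalues of $\rho_\gamma$ are bounded by $\|\rho_\gamma - \bar\rho_\gamma\| = O(\gamma^4)$; since each summand is bounded by $\mu_l \|F^u\|\|F^v\|$ and the sum is finite, the outside-block contribution is $O(\gamma^4)$. Inside the block, Weyl's inequalities yield eigenvalue shifts of $O(\gamma^4)$, with eigenvector shifts of $O(\gamma^4)$ for the $\lambda_1 \approx 1$ direction (gap of order $1$) and $O(\gamma^2)$ for the $\lambda_2 \approx \tfrac{n(n+2)}{6}\gamma^2$ direction (gap of order $\gamma^2$ to the outside-block spectrum), giving an $O(\gamma^4)$ change in each summand of the spectral formula.

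The main obstacle will be this last point: the $O(\gamma^2)$ eigenvector perturbation of $|\phi_2\>$ threatens to yield an $O(\gamma^2)$ error rather than the desired $O(\gamma^4)$. The key cancellation is that this perturbed eigenvector only enters the spectral sum through matrix elements whose accompanying weight $\mu_l$ is itself $O(\gamma^2)$, so the net change is $O(\gamma^2) \cdot O(\gamma^2) = O(\gamma^4)$. Once this suppression is verified across all index pairs in the dominant block and combined with the outside-block bound, summing gives $\tr[L_u \rho_\gamma L_v] - \tr[\bar L_u \bar\rho_\gamma \bar L_v] = O(\gamma^4)$ as claimed.
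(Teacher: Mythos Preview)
Your overall architecture matches the paper's: establish $\rho_\gamma-\bar\rho_\gamma=O(\gamma^4)$, write $\tr[L_u\rho L_v]$ as a spectral sum, and compare via Weyl and Davis--Kahan perturbation bounds. Your eigenvalue and eigenvector shift estimates ($O(\gamma^4)$ for $|\phi_1\rangle$, $O(\gamma^2)$ for $|\phi_2\rangle$) are correct. But the ``key cancellation'' you invoke does not hold in the spectral formulation you wrote down, and this is where the argument has a genuine gap.

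In your formula the summand carries the coefficient $\mu_l(\mu_l-\mu_k)^2/(\mu_l+\mu_k)^2 = \mu_l f(k,l)^2$. Look at $(k,l)=(2,1)$: the vector $|\phi_2\rangle$ appears through the $k$-slot, but the weight is $\lambda_1 f(2,1)^2 = \lambda_1(1+O(\gamma^2)) = O(1)$, \emph{not} $O(\gamma^2)$. So the $O(\gamma^2)$ perturbation of $|\phi_2\rangle$ produces an $O(\gamma^2)$ change in this term, contradicting your claim that the perturbed eigenvector ``only enters through matrix elements whose accompanying weight $\mu_l$ is itself $O(\gamma^2)$.'' The cross terms with $l\in\{1,2\}$ and $k\ge 3$ are also not covered: your bound $\mu_l\|F^u\|\|F^v\|$ only helps when $l$ is outside the block, and for $k\ge3$ the individual eigenvectors of $\rho_\gamma$ and $\bar\rho_\gamma$ are not comparable term by term anyway (no spectral gap there).

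The repair, which is exactly what the paper does, is to split off $4\,\tr(\rho F^vF^u)$ from the spectral sum: rewrite the coefficient as $\mu_l(f(k,l)^2-1)$ plus a piece that sums over all $k$ to $4\sum_l\mu_l\langle\phi_l|F^vF^u|\phi_l\rangle = 4\,\tr(\rho F^vF^u)$. The latter contributes $4\,\tr\big((\rho_\gamma-\bar\rho_\gamma)F^vF^u\big)=O(\gamma^4)$ directly. In the remaining double sum over the support, the $(2,1)$ coefficient becomes $\lambda_1(f^2-1)=\lambda_1 g = O(\gamma^2)$, so the $O(\gamma^2)$ eigenvector shift now yields $O(\gamma^4)$; and the extra cross terms in the $\rho_\gamma$ sum ($l\le 2$, $3\le k\le p'$) have coefficient $\lambda'_l(f'(k,l)^2-1)$ equal to $O(1)\cdot O(\gamma^4)$ for $l=1$ and $O(\gamma^2)\cdot O(\gamma^2)$ for $l=2$, both $O(\gamma^4)$. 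With this reformulation every piece of your block decomposition is individually $O(\gamma^4)$, and the rest of your outline goes through.
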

Then from \eqref{bar commutator} and Lemma \ref{lem:commutator perturbation}, we can get Lemma \ref{lem:trL1rhoL2-trL2rhoL1}.

\section{Discussions}
\label{sec:discussions}

We have shown that the channel estimation problems that correspond to $S_j$ admit corresponding conic optimizations $J_j$ that allow us to find the corresponding Cram\'er-Rao bound for the optimal entangled state. We illustrate the power of our framework with theoretical analysis on the scenario when the maximally entangled state is the optimal probe state, and also with numerical analysis for the often studied field sensing problem using quantum probe states.

We believe that the theory we develop here has applications that extend beyond the problem of field sensing using probe states. In fact, any problem where we estimate multiple incompatible parameters embedded within a quantum channel using entangled probe states stands to benefit from our theory. This encompasses many applications beyond that of quantum field sensing, such as in quantum imaging \cite{moreau2019imaging}.

In a recent paper \cite{ouyang2022quantum}, it was shown that field sensing using quantum probe states in the face of a linear rate of errors can approach the Heisenberg limit if we use quantum error correction on appropriate permutation-invariant codes \cite{ouyang2014permutation,ouyang2021permutation}. However, the corresponding question of what can be done in the setting of using entangled probe states was not considered, and remains an interesting open problem.

Next we like to discuss the efficiency of evaluating the various bounds $J_j$.
When $j=2,\dots, 5$, the optimizations for $J_j$ are efficiently solvable by SDPs. 
When $j=1$, although the conic programing $S_1$ 
cannot be considered as a SDP, 
it can be approximately calculated with SDP 
by employing the concept of
symmetric extension.
Originally, symmetric extension was introduced to considering the membership problem for the separability \cite{DPS1,DPS2}.
We consider the systems 
${\mathbb C^{d+1}}^{\otimes n} \otimes {\cal H}$,
${\mathbb C^{d+1}} \otimes {\cal H}^{\otimes n}$
and define 
\begin{align}
{\mathcal S}^1_{n}:=&
\{X \in
{\cal B}({\mathbb C^{d+1}}^{\otimes n},{\cal H})
| \Tr_{j^c} X = \Tr_{1^c} X  , X\ge 0
\}\notag  \\
\tilde{\mathcal S}^1_{n}:=&
\{X \in
{\cal B}({\mathbb C^{d+1}},{\cal H}^{\otimes n})
| \Tr_{j^c} X = \Tr_{1^c} X  , X\ge 0
\} ,\notag
\end{align}
where $\Tr_{j^c}$ expresses the partial trace 
except for the $j$-th system on 
${\mathbb C^{d+1}}^{\otimes n}$ or ${\cal H}^{\otimes n}$.
The minimizer $X_*$ of $S_{1}$ has a symmetric extension $X_{*,n}$
that belongs to
${\mathcal S}^1_{n}$ and $\tilde{\mathcal S}^1_{n}$
and satisfies $\Tr_{1^c} X_{*,n}= X_*$.
Due to the condition $\Tr_{1^c} X_{*,n}= X_* \in S_{1}$, 
we have the following lower bounds of $S_{1}$:
\begin{align}
S_{1,n}&:= \min_{X \in \mathcal S^1_{n}}
\{\Tr ( G \otimes \rho) \Tr_{1^c} X |
\Tr_{1^c} \hbox{ satisfies }\eqref{c1},\eqref{c2}.
\}\Label{o1-T1} \\
\tilde{S}_{1,n}&:= \min_{X \in \tilde{\mathcal S}^1_{n}}
\{\Tr ( G \otimes \rho) \Tr_{1^c} X |
\Tr_{1^c} \hbox{ satisfies }\eqref{c1},\eqref{c2}.
\}\Label{o1-T2}
\end{align} 
The above quantities can be calculated by SDP.
Since an element $X$ of ${\mathcal S}^1_{n}$ or $\tilde{\mathcal S}^1_{n}$ satisfies $\Tr_{1^c} X \in \mathcal S^2$,
we have
\begin{align}
S_{1} \ge S_{1,n+1} \ge S_{1,n}\ge S_{2},\quad
S_{1} \ge \tilde{S}_{1,n+1} \ge \tilde{S}_{1,n}\ge S_{2}.\notag
\end{align}

Also, 
\cite{DPS2} showed that 
for any non-separable state $\rho$, there exists an integer $n$
such that $\rho$ does not belong to $S_{1,n}$.
The speed of the convergence is studied in \cite{NOP,Fawzi}.
Therefore, we have
\begin{align}
S_{1} = \lim_{n\to \infty} {S}_{1,n}
=\lim_{n\to \infty}  \tilde{S}_{1,n}.\notag
\end{align}
In addition, 
this discussion can be applied to 
the approximate calculation of $J_1$ by SDP.

As an application of our theory, we considered the canonical problem of field sensing. We expect that our theory can lend insight to many other problems that can be phrased naturally in the channel estimation framework, such as for other applications in quantum imaging and sensing.

\section*{Acknowledgement}
MH is supported in part by the National Natural Science Foundation of China (Grant No. 62171212).
YO acknowledges support from EPSRC (Grant No. EP/W028115/1).

\bibliography{ref2}{}

\appendix

\begin{widetext}
\section{Proofs of Lemmas \ref{L1}, \ref{L3}, and \ref{L2}}
\label{sec:proofs}
\subsection{Proof of Lemma \ref{L1}}
Using the conditions \eqref{NB2} and \eqref{NB3},
for $k=1,2,3,4$, we can write
\begin{align}
&S_k[\Tr_A (T\otimes I_C) (I_B \otimes \rho_{AC}),(\Tr_A (F_j\otimes I_C)( I_B \otimes \rho_{AC}))_j ]\notag\\
=&
 \min_{X \in {\cal S}_{BC}^k}
\left\{\Tr ( G \otimes T\otimes I_C) (I_{RB} \otimes \rho_{AC})
(I_A \otimes X) \left|
\begin{array}{l}
I_{BC}= \Tr_R X |0\rangle \langle 0|\otimes I_{BC}  \\
\frac{1}{2}\Tr (I_A \otimes X ) 
((|0\rangle \langle j'|+|j'\rangle \langle 0|) \otimes F_j \otimes I_C)
(I_{RB}\otimes \rho_{AC})
=\delta_{j,j'}
\end{array}
\right.\right\}.\notag
\end{align}
For $k=5$, we have
\begin{align}
&
S_5
[\Tr_A (T\otimes I_C) (I_B \otimes \rho_{AC}),
(\Tr_A (F_j\otimes I_C)( I_B \otimes \rho_{AC}))_j ] \notag \\
=& 
 \min_{X \in {\cal S}_{BC}^4}
\left\{\Tr ( G \otimes T\otimes I_C) (I_{RB} \otimes \rho_{AC})
(I_A \otimes X) \left|
\begin{array}{l}
I_{BC}= \Tr_R X |0\rangle \langle 0|\otimes I_{BC}  \\
\frac{1}{2}\Tr (I_A \otimes X ) 
((|0\rangle \langle j'|+|j'\rangle \langle 0|) \otimes F_j \otimes I_C)
(I_{RB}\otimes \rho_{AC})
=\delta_{j,j'} \\
\Tr (I_A \otimes X ) 
((|i\rangle \langle j'|-|j'\rangle \langle i|) \otimes T \otimes I_C)
(I_{RB}\otimes \rho_{AC})
=0
\end{array}
\right.\right\}.\notag
\end{align} 

For $k=1,2,3,4$, we have
\eqref{MAIN1} as
\begin{align}
&\min_{\rho_{AC} \in {\cal S}({\cal H}_A\otimes {\cal H}_C)}
S_k
[\Tr_A (T\otimes I_C) (I_B \otimes \rho),(\Tr_A (F_j\otimes I_C)( I_B \otimes \rho))_j ]\notag\\
=&\min_{\rho_{AC} \in {\cal S}({\cal H}_A\otimes {\cal H}_C)}
 \min_{X \in {\cal S}_{BC}^k}
\left\{\Tr ( G \otimes T\otimes I_C) (I_{RB} \otimes \rho_{AC})
(I_A \otimes X) \left|
\begin{array}{l}
I_{BC}= \Tr_R X |0\rangle \langle 0|\otimes I_{BC} \\
\frac{1}{2}\Tr (I_A \otimes X ) 
((|0\rangle \langle j'|+|j'\rangle \langle 0|) \otimes F_j \otimes I_C)
(I_{RB}\otimes \rho_{AC})
=\delta_{j,j'}
\end{array}
\right.\right\}
\notag\\
=&\min_{\rho_{AC} \in {\cal S}({\cal H}_A\otimes {\cal H}_C)}
 \min_{X \in {\cal S}_{BC}^k}
\left\{\Tr ( G \otimes T) 
\Tr_{C}[ (I_{RB} \otimes \rho_{AC}) (I_A \otimes X) ]
\left|
\begin{array}{l}
I_{BC}= \Tr_R X |0\rangle \langle 0|\otimes I_{BC}  \\
\frac{1}{2}\Tr 
((|0\rangle \langle j'|+|j'\rangle \langle 0|) \otimes F_j )
\Tr_C[(I_{RB}\otimes \rho_{AC})(I_A \otimes X )]
=\delta_{j,j'}
\end{array}
\right.\right\}
\Label{E35B}
\\
\stackrel{(a)}{\ge}
&\min_{\rho_{AC} \in {\cal S}({\cal H}_A\otimes {\cal H}_C)}
 \min_{X \in {\cal S}_{BC}^k}
\left\{
\Tr ( G \otimes T) \Tr_{C}[ (I_{RB} \otimes \rho_{AC}) (I_A \otimes X) ]
\left|
\begin{array}{l}
\Tr_R \Tr_C[(I_{RB}\otimes \rho_{AC})(I_A \otimes X )]
(|0\rangle \langle 0|\otimes I_{AB})
=(\Tr_C\rho_{AC})\otimes I_B
\\
\frac{1}{2}\Tr 
((|0\rangle \langle j'|+|j'\rangle \langle 0|) \otimes F_j )
\Tr_C[(I_{RB}\otimes \rho_{AC})(I_A \otimes X )]
=\delta_{j,j'}
\end{array}
\right.\right\}
\Label{E35A}
\\
\stackrel{(b)}{\ge} &
 \min_{Y \in \mathcal S^k_{BA}}
\{\Tr Y G \otimes T|
\hbox{(i), (ii) hold} \}=J_k.
\Label{E35}
\end{align} 
Here, step $(a)$ is shown as follows.
The condition 
$I_{BC}= \Tr_R X |0\rangle \langle 0|\otimes I_{BC}  $ implies
the condition
\begin{align}
&\Tr_R \Tr_C[(I_{RB}\otimes \rho_{AC})(I_A \otimes X )]
(|0\rangle \langle 0|\otimes I_{AB})
=\Tr_{R C}[(I_{RB}\otimes \rho_{AC})(I_A \otimes X )
(|0\rangle \langle 0|\otimes I_{ABC})]\notag\\
=&\Tr_{C}(I_{B}\otimes \rho_{AC})
\Tr_{R}[(I_A \otimes X ) (|0\rangle \langle 0|\otimes I_{ABC})]\notag\\
=&\Tr_{C}(I_{B}\otimes \rho_{AC})
(I_A \otimes \Tr_{R} [X  (|0\rangle \langle 0|\otimes I_{BC})])
=\Tr_{C}(I_{B}\otimes \rho_{AC})
(I_A \otimes I_{BC})
=(\Tr_C\rho_{AC})\otimes I_B.\notag
\end{align} 
Hence, 
a pair $(\rho_{AC},X)$ in \eqref{E35B}
satisfies the condition in \eqref{E35A}.
Thus, for a pair $(\rho_{AC},X)$ in \eqref{E35B}, we have
$\Tr ( G \otimes T) \Tr_{C}[ (I_{RB} \otimes \rho_{AC}) (I_A \otimes X) ]
\ge \eqref{E35A}$, which implies $(a)$.

Step $(b)$ is shown as follows.
For a pair $(\rho_{AC},X)$ in \eqref{E35A}, 
we choose $Y$ 
to be $\Tr_C[(I_{RB}\otimes \rho_{AC})(I_A \otimes X )]$.
Since
\begin{align}
\Tr_R \Tr_C[(I_{RB}\otimes \rho_{AC})(I_A \otimes X )]
(|0\rangle \langle 0|\otimes I_{AB})
=(\Tr_C\rho_{AC})\otimes I_B\Label{NBVF},
\end{align}
$\rho_A(Y)$ is calculated to be
$\Tr_C\rho_{AC}$.
Then, the condition \eqref{NBVF} implies the condition (i).
The condition $
\frac{1}{2}\Tr 
((|0\rangle \langle j'|+|j'\rangle \langle 0|) \otimes F_j )
\Tr_C[(I_{RB}\otimes \rho_{AC})(I_A \otimes X )]
=\delta_{j,j'}$ implies the condition (ii).
Also, we have
$\Tr ( G \otimes T) \Tr_{C}[ (I_{RB} \otimes \rho_{AC}) (I_A \otimes X) ]
\ge \eqref{E35}$.
The remaining issue is to show $Y \in \mathcal S^k_{BA}$
for $k=1,2,3,4$.
For $k=4$, it is sufficient to show $Y \ge 0$ and $Y \in {\cal B}_{ABC}''$.
Since $X \in {\cal B}_{BC}''$ and $X \ge 0$,
we have
 $\sqrt{I_{RB}\otimes \rho_{AC}}(I_A \otimes X )\sqrt{I_{RB}\otimes \rho_{AC}} \ge 0$ belongs to 
${\cal B}_{ABC}''$.
 Taking the trace on $C$, we obtain $Y \ge 0$ and $Y \in {\cal B}_{ABC}''$.
For other cases, we need to show additional conditions.
For $k=1$, we need to show that $Y \ge 0$
has a separable form with the bipartition $R$ and $AB$.
Since
$\sqrt{I_{RB}\otimes \rho_{AC}}(I_A \otimes X )\sqrt{I_{RB}\otimes \rho_{AC}}$ has a separable form with the bipartition $R$ and $ABC$, $Y $ satisfies this condition.
For $k=2$, we need to show that $Y \in {\cal B}_{AB}$.
Since $X \in {\cal B}_{BC}$ implies
$\sqrt{I_{RB}\otimes \rho_{AC}}(I_A \otimes X )\sqrt{I_{RB}\otimes \rho_{AC}}
\in {\cal B}_{ABC}$, $Y $ satisfies this condition.
For $k=3$, we need to show the following.
When we apply the partial transpose on the system $R$ to $Y$, it is positive semi-definite. 
Since $\sqrt{I_{RB}\otimes \rho_{AC}}(I_A \otimes X )\sqrt{I_{RB}\otimes \rho_{AC}}$ satisfies this property,
$Y $ satisfies this condition.
Therefore, we obtain $(b)$.

For $k=5$, we have \eqref{MAIN1} as
\begin{align}
&\min_{\rho_{AC} \in {\cal S}({\cal H}_A\otimes {\cal H}_C)}
S_5
[\Tr_A (T\otimes I_C) (I_B \otimes \rho),(\Tr_A (F_j\otimes I_C)( I_B \otimes \rho))_j ]\notag\\
=&\min_{\rho_{AC} \in {\cal S}({\cal H}_A\otimes {\cal H}_C)}
 \min_{X \in {\cal S}_{BC}^4}
\left\{\Tr ( G \otimes T\otimes I_C) (I_{RB} \otimes \rho_{AC})
(I_A \otimes X) \left|
\begin{array}{l}
I_{BC}= \Tr_R X |0\rangle \langle 0|\otimes I_{BC}  \\
\frac{1}{2}\Tr (I_A \otimes X ) 
((|0\rangle \langle j'|+|j'\rangle \langle 0|) \otimes F_j \otimes I_C)
(I_{RB}\otimes \rho_{AC})
=\delta_{j,j'} \\
\Tr (I_A \otimes X ) 
((|i\rangle \langle j'|-|j'\rangle \langle i|) \otimes T \otimes I_C)
(I_{RB}\otimes \rho_{AC})
=0
\end{array}
\right.\right\}
\notag\\
=&\min_{\rho_{AC} \in {\cal S}({\cal H}_A\otimes {\cal H}_C)}
 \min_{X \in {\cal S}_{BC}^4}
\left\{\Tr ( G \otimes T) 
\Tr_{C}[ (I_{RB} \otimes \rho_{AC}) (I_A \otimes X) ]
\left|
\begin{array}{l}
I_{BC}= \Tr_R X |0\rangle \langle 0|\otimes I_{BC}  \\
\frac{1}{2}\Tr 
((|0\rangle \langle j'|+|j'\rangle \langle 0|) \otimes F_j)
\Tr_C[(I_{RB}\otimes \rho_{AC})(I_A \otimes X )]
=\delta_{j,j'}\\
\Tr 
((|i\rangle \langle j'|-|j'\rangle \langle i|) \otimes T )
\Tr_C[(I_{RB}\otimes \rho_{AC})(I_A \otimes X )]
=0
\end{array}
\right.\right\}\Label{ZDY}
\\
\stackrel{(a)}{\ge}
&\min_{\rho_{AC} \in {\cal S}({\cal H}_A\otimes {\cal H}_C)}
 \min_{X \in {\cal S}_{BC}^4}
\left\{\Tr ( G \otimes T) 
\Tr_{C}[ (I_{RB} \otimes \rho_{AC}) (I_A \otimes X) ]
\left|
\begin{array}{l}
\Tr_R \Tr_C[(I_{RB}\otimes \rho_{AC})(I_A \otimes X )]
(|0\rangle \langle 0|\otimes I_{B})
=(\Tr_C\rho_AC)\otimes I_B
\\
\frac{1}{2}\Tr 
((|0\rangle \langle j'|+|j'\rangle \langle 0|) \otimes F_j )
\Tr_C[(I_{RB}\otimes \rho_{AC})(I_A \otimes X )]
=\delta_{j,j'}
\\
\Tr 
((|i\rangle \langle j'|-|j'\rangle \langle i|) \otimes T )
\Tr_C[(I_{RB}\otimes \rho_{AC})(I_A \otimes X )]
=0
\end{array}
\right.\right\}
\notag\\
\stackrel{(b)}{\ge} &
 \min_{Y \in \mathcal S^5_{BA}(T)}
\{\Tr Y G \otimes T|
\hbox{(i), (ii) hold} \}=J_5.
\end{align} 
Here, steps $(a)$ and $(b)$ can be shown in the same way as
\eqref{E35}.

\subsection{Proof of Lemma \ref{L2}}
Since the proof of Lemma \ref{L2} is easier than
the proof of Lemma \ref{L3}, we show Lemma \ref{L2} here.
\if0
{\bf Step (i)}
The aim of Step (i) is to show
\begin{align}
J_k= \inf_{Y \in \mathcal S^k_{BA}}
\{\Tr Y (G \otimes T)|Y \hbox{ satisfies }
\hbox{(i), (ii)
and, }
\rho_A(Y)>0, i.e., \rho_A(Y) \hbox{ is full rank.}
 \}\Label{MMFB}
\end{align}
for $k=1,2,3,4$.
\fi
For $k=1,2,3,4$,
given $Y \in \mathcal S^k_{BA}$ satisfying the conditions (i), (ii)
and $\epsilon>0$, we define 
$Y_\epsilon' \in \mathcal S^k_{BA}$ as
\begin{align}
Y_\epsilon':= 
(( \sqrt{1-\epsilon} |0\rangle \langle 0|
+\sqrt{1-\epsilon}^{-1}(I_R-|0\rangle \langle 0|))\otimes I_{AB})
Y
(( \sqrt{1-\epsilon} |0\rangle \langle 0|
+\sqrt{1-\epsilon}^{-1}(I_R-|0\rangle \langle 0|))\otimes I_{AB}).
\Label{XBO}
\end{align}
Then, we define 
$Y_\epsilon \in \mathcal S^k_{BA}$ as
\begin{align}
Y_\epsilon:= Y_\epsilon'
+\frac{\epsilon}{d_A}|0\rangle \langle 0|\otimes I_{AB}.\notag
\end{align}
Here, we relax the condition (i) as follows.
\begin{description}
\item[(i'')]
There exists a positive semi-definite operator $\rho_A(Y)$ on ${\cal H}_A$ such that 
the relation \begin{align}
\Tr_{R} Y (|0\rangle \langle 0|\otimes I_{AB})
=
I_B \otimes \rho_A(Y)\Label{NMZA8}
\end{align}
holds as operators on $\cH_A\otimes \cH_B$.
Here, we drop the condition $\Tr \rho_A(Y)=1$.
\end{description}

Since $Y_\epsilon'$ 
and
$\frac{1}{d_A}|0\rangle \langle 0|\otimes I_{AB}$
satisfy the condition (i''),
$Y_\epsilon$ also satisfies the condition (i'').
We can define $\rho_A(Y_\epsilon')$,
$\rho_A(\frac{1}{d_A}|0\rangle \langle 0|\otimes I_{AB})$,
and $\rho_A(Y_\epsilon)$.
Since we have
$\rho_A(Y_\epsilon')= (1-\epsilon)\rho_A(Y)$
and $\rho_A(\frac{1}{d_A}|0\rangle \langle 0|\otimes I_{AB})
=\frac{1}{d_A}I_A$,
we have
$\rho_A(Y_\epsilon)= (1-\epsilon)\rho_A(Y)
+\frac{\epsilon}{d_A}I_A>0$.

Also, due to the definition \eqref{XBO},
for $j>0$, 
$|0\rangle \langle j|$ component of 
$Y_\epsilon'$ and
$|j\rangle \langle 0|$ component of 
$Y_\epsilon'$ are the same as 
$|0\rangle \langle j|$ component of 
$Y$ and
$|j\rangle \langle 0|$ component of 
$Y$, respectively.
Thus,
$|0\rangle \langle j|$ component of 
$Y_\epsilon$ and
$|j\rangle \langle 0|$ component of 
$Y_\epsilon$ are the same as 
$|0\rangle \langle j|$ component of 
$Y$ and
$|j\rangle \langle 0|$ component of 
$Y$, respectively.

Therefore, $Y_\epsilon$
satisfies (i), (ii)
and, $\rho_A(Y_\epsilon)>0$.
Also, we have
$\lim_{\epsilon\to 0}
\Tr Y_\epsilon (G \otimes T)=
\Tr Y (G \otimes T)$.
Therefore, we obtain \eqref{MMFB}.

For $k=5$, in the same way as \eqref{MMFB}, we can show
\eqref{MMFB5}.

\subsection{Proof of Lemma \ref{L3}}
For $k=1,2,3,4$,
we choose $Y \in \mathcal S^k_{BA}$ satisfying the conditions (i), (ii), and $\rho_A:=\rho_A(Y)>0$.
We diagonalize $\rho_A$ as
$\sum_{j=1}^{d_A} s_j |\phi_j\rangle \langle \phi_j|$.
Hence, we have $s_j >0$.

We choose a 
CONS (complete orthonormal system) 
$\{\psi_j\}$ of ${\cal H}_C$.
We define a unitary map
$U:\phi_j \mapsto \psi_j $ from $\cH_A$ to $\cH_C$.
We define the matrix
$\rho_A^{-1/2}:= \sum_{j=1}^{d_A} s_j^{-1/2} |\phi_j\rangle \langle \phi_j|$.
In a way similar to the discussion after \eqref{NBVF},
we can show that 
$X:= 
(U\otimes I_{RB}) (\rho_A^{-1/2}\otimes I_{RB}) Y(\rho_A^{-1/2} \otimes I_{RB} )(U^\dagger \otimes I_{RB}) $ belongs to 
${\cal S}_{BC}^k $ for $k=1,2,3,4$.
Notice that both sides in this definition
act on
${\cal H}_{R}\otimes {\cal H}_{B}\otimes {\cal H}_{C}$
because $U$ maps $\cH_A$ to $\cH_C$.

We choose $\rho_{AC}$ as the pure state
$\sum_{j=1}^{d_A}\sqrt{s_j}|\phi_j,\psi_j\rangle$, which is a purification of 
$\rho_A$.

Then,
\begin{align}
&\Tr_R X (|0\rangle \langle 0|\otimes I_{BC})
=\Tr_R 
(U\otimes I_{RB}) (\rho_A^{-1/2}\otimes I_{RB}) Y(\rho_A^{-1/2} \otimes I_{RB} )(U^\dagger \otimes I_{RB}) 
 |0\rangle \langle 0|\otimes I_{BC} \notag\\
=& 
(U\otimes I_{B}) (\rho_A^{-1/2}\otimes I_{B}) 
\Tr_R [Y
 |0\rangle \langle 0|\otimes I_{BA}]
(\rho_A^{-1/2} \otimes I_{B} )(U^\dagger \otimes I_{B}) \notag\\
= &
(U\otimes I_{B}) (\rho_A^{-1/2}\otimes I_{B}) 
I_{B}\otimes \rho_A
(\rho_A^{-1/2} \otimes I_{B} )(U^\dagger \otimes I_{B}) \notag\\
= &
(U\otimes I_{B})
I_{BA}(U^\dagger \otimes I_{B}) =I_{BC}.\Label{NMJ1}
\end{align}

For a matrix $Z$ on ${\cal H}_A$, we have
\begin{align}
\Tr_C[ \rho_{AC}(I_A \otimes 
(U \rho_A^{-1/2} Z(\rho_A^{-1/2} U^\dagger))
 ] =Z\Label{MXAH}.
\end{align}
This can be shown as follows.
\begin{align}
&\Tr_C[ \rho_{AC}(I_A \otimes 
(U \rho_A^{-1/2} Z(\rho_A^{-1/2} U^\dagger))
 ] \notag\\
=&\Tr_C\Big[ 
\sum_{i=1}^{d_A}\sqrt{s_i}|\phi_i,\psi_i\rangle
\sum_{i'=1}^{d_A}\sqrt{s_{i'}}\langle \phi_{i'},\psi_{i'}|
\Big(I_A \otimes 
\Big(\sum_{j=1}^{d_A} s_j^{-1/2}|\psi_j\rangle \langle \phi_j|
Z
\sum_{j'=1}^{d_A} s_{j'}^{-1/2}|\psi_{j'}\rangle \langle \phi_{j'}|\Big)
\Big)\Big] \notag\\
= &
\Big(\sum_{j=1}^{d_A'} |\phi_j\rangle \langle \phi_j|
Z
\sum_{j'=1}^{d_A} |\phi_{j'}\rangle \langle \phi_{j'}|\Big)
 =Z.
\end{align}

We write the matrix $Y$ as
$\sum_{j,j',b,b'} |j,b\rangle \langle j',b'|  \otimes Y_{j,b,j',b'} $
by using matrices $Y_{j,b,j',b'}$ on ${\cal H}_A$.
Here, $\{|b\rangle\}$ is a CONS of ${\cal H}_B$.
Applying \eqref{MXAH} to the matrix components 
$Y_{j,b,j',b'}$,
we have
\begin{align}
&\Tr_C[(I_{RB}\otimes \rho_{AC})(I_A \otimes X )]\notag\\
\stackrel{(a)}{=} &
\Tr_C \big[(I_{RB}\otimes \rho_{AC})\big(I_A \otimes 
\big(
(U\otimes I_{RB}) (\rho_A^{-1/2}\otimes I_{RB}) Y(\rho_A^{-1/2} \otimes I_{RB} )(U^\dagger \otimes I_{RB}) 
\big) \big)
 \big] \notag\\
=&
\Tr_C \big[(I_{RB}\otimes \rho_{AC})
\big(I_A \otimes \big(
(U\otimes I_{RB}) (\rho_A^{-1/2}\otimes I_{RB}) 
(\sum_{j,j',b,b'} |j,b\rangle \langle j',b'|  \otimes Y_{j,b,j',b'}  )
(\rho_A^{-1/2} \otimes I_{RB} )(U^\dagger \otimes I_{RB}) \big)\big)
 \big] \notag\\
=&
\sum_{j,j',b,b'} |j,b\rangle \langle j',b'|  \otimes 
\Tr_C\big[\rho_{AC} \big(I_A \otimes 
\big(U \rho_A^{-1/2} Y_{j,b,j',b'}
\rho_A^{-1/2} U^\dagger \big)
\big) \big] \notag\\
\stackrel{(b)}{=} &
\sum_{j,j',b,b'} |j,b\rangle \langle j',b'|  \otimes Y_{j,b,j',b'} 
=Y,\Label{NBT}
\end{align}
where $(a)$ follows from the definition of $X$,
and 
$(b)$ follows from \eqref{MXAH}.

Thus, we have
\begin{align}
\frac{1}{2}\Tr 
((|0\rangle \langle j'|+|j'\rangle \langle 0|) \otimes F_j )
\Tr_C[(I_{RB}\otimes \rho_{AC})(I_A \otimes X )]
\stackrel{(a)}{=} 
\frac{1}{2}\Tr 
((|0\rangle \langle j'|+|j'\rangle \langle 0|) \otimes F_j )
Y
\stackrel{(b)}{=} \delta_{j,j'},\Label{NMJ2}
\end{align}
where
$(a)$ follows from \eqref{NBT}, and
$(b)$ follows from the fact that 
$Y$ satisfies the condition (ii).

Therefore, 
the relations \eqref{NMJ1}, \eqref{NBT}, and \eqref{NMJ2}
guarantee that
$X\in {\cal S}_{BC}^k $ satisfies
the conditions \eqref{NB2}, \eqref{NB3}, and 
\eqref{MXI}, respectively.
Thus,
for $k=1,2,3,4$,
we have
\begin{align}
&\Tr Y (G \otimes T) \notag\\
\stackrel{(a)}{\ge } &
\min_{\rho_{AC} \in {\cal S}({\cal H}_A\otimes {\cal H}_C)}
 \min_{X \in {\cal S}_{BC}^k}
\left\{\Tr ( G \otimes T) 
\Tr_{C}[ (I_{RB} \otimes \rho_{AC}) (I_A \otimes X) ]
\left|
\begin{array}{l}
 \Tr_R X (|0\rangle \langle 0|\otimes I_{BC}) =I_{BC} \\
\frac{1}{2}\Tr 
((|0\rangle \langle j'|+|j'\rangle \langle 0|) \otimes F_j \otimes I_C)
\Tr_C[(I_{RB}\otimes \rho_{AC})(I_A \otimes X )]
=\delta_{j,j'}
\end{array}
\right.\right\} \notag\\
\stackrel{(b)}{=} &
\min_{\rho_{AC} \in {\cal S}({\cal H}_A\otimes {\cal H}_C)}
S_k
[\Tr_A (T\otimes I_C) (I_B \otimes \rho),(\Tr_A (F_j\otimes I_C)( I_B \otimes \rho))_j ],
\end{align}
where 
$(a)$ follows from 
the fact that
$X\in {\cal S}_{BC}^k $ satisfies
the conditions \eqref{NB2}, \eqref{NB3}, and \eqref{MXI},
and $(b)$ follows from \eqref{E35B}.
Therefore, we obtain \eqref{ZIT}.


\if0
{\bf Step (iii)}
The aim of Step (iii) is to show
the inequality $\ge$ of \eqref{MAIN}
for $k=5$.
In the same way as \eqref{MMFB}, we can show
\begin{align}
J_5= \inf_{Y \in \mathcal S^5_{BA}(T)}
\{\Tr Y (G \otimes T)|Y \hbox{ satisfies }
\hbox{(i), (ii)
and, }
\rho_A(Y) \hbox{ is full rank.}
 \}\Label{MMFB5}
\end{align}
\fi

For $k=5$, we choose $Y \in \mathcal S^5_{BA}(T)$ satisfying the conditions (i), (ii) and $ \rho_A(Y)>0$.
We choose $\rho$ and $X$ in the same way as the above.
Since $Y \in \mathcal S^5_{BA}(T)$ satisfies the conditions (i), (ii),
the relations \eqref{NMJ1}, \eqref{NMJ2} hold.
Since $Y \in \mathcal S^5_{BA}(T)$, in the same way as \eqref{NMJ2}, we have
\begin{align}
 \Tr 
((|i\rangle \langle j'|-|j'\rangle \langle i|) \otimes T \otimes I_C)
\Tr_C[(I_{RB}\otimes \rho_{AC})(I_A \otimes X )]
=
 \Tr 
((|i\rangle \langle j'|-|j'\rangle \langle i|) \otimes T )
Y
=0 .\Label{NMJ3}
\end{align}

Therefore, the relations \eqref{NMJ1}, \eqref{NBT}, \eqref{NMJ2}, and \eqref{NMJ3} guarantee that
$X\in {\cal S}_{BC}^4 $
satisfies the conditions \eqref{NB2}, \eqref{NB3},
\eqref{MXI2}, and \eqref{MXI}, respectively.
Thus,
we have
\begin{align}
&\Tr Y (G \otimes T) \notag\\
\stackrel{(a)}{\ge} &
\min_{\rho_{AC} \in {\cal S}({\cal H}_A\otimes {\cal H}_C)}
 \min_{X \in {\cal S}_{BC}^4}
\left\{\Tr ( G \otimes T) 
\Tr_{C}[ (I_{RB} \otimes \rho_{AC}) (I_A \otimes X) ]
\left|
\begin{array}{l}
 \Tr_R X (|0\rangle \langle 0|\otimes I_{BC}) =I_{BC} \\
\frac{1}{2}\Tr 
((|0\rangle \langle j'|+|j'\rangle \langle 0|) \otimes F_j \otimes I_C)
\Tr_C[(I_{RB}\otimes \rho_{AC})(I_A \otimes X )]
=\delta_{j,j'}\\
\Tr 
((|i\rangle \langle j'|-|j'\rangle \langle i|) \otimes T )
\Tr_C[(I_{RB}\otimes \rho_{AC})(I_A \otimes X )]
=0.
\end{array}
\right.\right\}\notag\\
\stackrel{(b)}{=} &
\min_{\rho_{AC} \in {\cal S}({\cal H}_A\otimes {\cal H}_C)}
S_5
[\Tr_A (T\otimes I_C) (I_B \otimes \rho),(\Tr_A (F_j\otimes I_C)( I_B \otimes \rho))_j ],\notag
\end{align}
where $(a)$ follows from 
the fact that
$X\in {\cal S}_{BC}^k $ satisfies
the conditions \eqref{NB2}, \eqref{NB3}, \eqref{MXI2}, and \eqref{MXI},
and $(b)$
follows from \eqref{ZDY}.
Therefore, we obtain \eqref{ZIT}.


\section{Proofs of Theorems \ref{NNT} and \ref{NNT2}} 
\subsection{Proof of Theorem \ref{NNT}}\Label{B1}
We consider the dual problem for the conic programing   
\eqref{NVF1} and \eqref{NVF2}.
For $(A,B,W)\in \mathbb{R}^{d \times d} \times
{\cal T}_{sa}(\cH_{AB})^d\times 
{\cal T}_{sa}(\cH_{AB})$, 
we define
\begin{align}
&\Pi (A,B,W)\nonumber \\
:=&  G \otimes T_N
-\frac{1}{2} \Big( \sum_{1 \le i,j\le d} A_i^j 
(|0\rangle \langle i|+|i\rangle \langle 0|)\otimes F_{j,N}\Big)
+\sqrt{-1}\sum_{i=1}^d(|0\rangle \langle i|-|i\rangle \langle 0|)\otimes B_{i}
-|0\rangle \langle 0| \otimes W.\notag
\end{align}

We have
\begin{align}
J_4=\max_{(A,B,W)\in \mathbb{R}^{d \times d} \times
{\cal T}_{sa}(\cH_{AB})^d\times 
({\cal T}_{sa}(\cH_{AB})\cap {\cal K})
}
\Big\{ \Tr W+\sum_{j=1}A_j^j \Big|  \Pi (A,B,W) \ge 0
\Big\}.\notag
\end{align}
Here, 
the matrix $A \in \mathbb{R}^{d \times d}$ corresponds to the condition (ii-N) in the primal problem.
The matrices $(B^1, \ldots, B^d) \in
{\cal T}_{sa}(\cH_{AB})^d$ correspond to the condition 
$X_{k,0}\in {\cal B}_{sa}(\cH_{A,B})$ for $k=1, \ldots, d$
in the primal problem.
This condition is rewritten as $X_{k,0}^\dagger=X_{k,0}$.

The condition (i'-N) is composed of \eqref{NBFY2-N}, \eqref{NBFY-N}, and \eqref{ZKT-N}.
We denote the variable corresponding to the condition
\eqref{NBFY2-N} with $b \in \{1,\ldots, d_B-1\}$ and
$b' \in \{2,\ldots, d_B-1\}$ with $b>b'$
by the matrix $W_{b,b'}$ on $\cH_A$.
We denote the variable corresponding to the condition
\eqref{NBFY-N} with $b \in \{1,\ldots, d_B-1\}$
by the Hermitian matrix $W_{b}$.
We denote the variable corresponding to the condition
\eqref{ZKT-N} 
by the real number $w$.
Then, we choose the Hermitian matrix $W$ on 
$\cH_A\otimes \cH_B$ as
\begin{align}
W= \sum_{b>b'} 
W_{b,b'} \otimes |b\rangle \langle b'|
+
W_{b,b'}^\dagger \otimes |b'\rangle \langle b|
+\sum_{b=1}^{d_B-1}
W_{b}\otimes 
(|b\rangle \langle b|-|b+1\rangle \langle b+1|)
+wI \otimes |1\rangle \langle 1|.\notag
\end{align}
The above Hermitian matrix on 
$\cH_A\otimes \cH_B$ belongs to the subset 
${\cal K}$,
and any element of 
${\cal K}$ can be written as the above form.
Thus,
The matrix $W \in ({\cal T}_{sa}(\cH_{AB})\cap {\cal K})
$ corresponds to 
the condition (i'-N) in the primal problem.

On the other hand, 
we have
\begin{align}
&S_4[T_{N},(F_{j,N})_j]\nonumber\\
=&\max_{(A,B,W)\in \mathbb{R}^{d \times d} \times
{\cal T}_{sa}(\cH_{AB})^d\times 
{\cal T}_{sa}(\cH_{AB})
}
\Big\{ \Tr W+\sum_{j=1}A_j^j \Big|  \Pi (A,B,W) \ge 0
\Big\}.\Label{MAX1}
\end{align}
In $S_4$, the condition for $W$ is relaxed as
$W \in ({\cal T}_{sa}(\cH_{AB})\cap {\cal K})$ 
because 
the condition (i'-N) is changed to $
\Tr_{R}[ Y_N (|0\rangle \langle 0|\otimes I_{AB})]
=I_{AB}$.

Therefore, 
if and only if  the maximizer $(A_*,B_*,W_*)$ in \eqref{MAX1}
satisfies the condition $W_* \in {\cal K}$, we have
\begin{align}
S_4[T_{N},(F_{j,N})_j]=J_4.\Label{NZI}
\end{align}

\begin{lemma}\Label{LE1}
(1) The minimization \eqref{MIN1} 
is attained when 
$Y^{i,j}_N=X_*^{i,j}:=L^i_* L^j_* $, where $L^0_*$ is defined to be $I_{AB}$.
(2) The maximization \eqref{MAX1} 
is attained when 
$A=A_*:=-{G} J_{\rm SLD}^{-1}$,
$B_j=B_{*,j}:=-\frac{\sqrt{-1}}{2}\sum_{i=1}^d( {G} J_{\rm SLD}^{-1})_j^i [L_{*,i}, T_N ]$
and $W$ is $-W_{\rm SLD}(T_N,\vec{L}_*)$.
\end{lemma}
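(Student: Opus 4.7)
The strategy is to exhibit both candidate solutions as feasible points for their respective programs \eqref{MIN1} and \eqref{MAX1}, and to show that both attain the SLD bound $\Tr[G J_{\rm SLD}^{-1}]$. Since \eqref{MIN1} and \eqref{MAX1} are primal--dual pairs whose common value is the SLD bound $S_4[T_N,(F_{j,N})_j]$ (as reviewed earlier via \cite{HO}), matching the values on both sides forces each candidate to be optimal by weak duality.

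For part (1), the key observation is that the proposed $Y_N$ admits the Gram-matrix factorization $Y_N = |v\rangle\langle v|$ with $|v\rangle := \sum_{i=0}^d |i\rangle \otimes (L^i_*)^\dagger$. This immediately gives $Y_N \succeq 0$ with the correct block Hermiticity $Y_N^{i,j}=(Y_N^{j,i})^\dagger$, placing $Y_N$ in $\mathcal S^4_{BA}$. Condition (i$''$) collapses to $Y_N^{0,0} = L^0_* L^0_* = I_{AB}$, which holds by the convention $L^0_*:=I_{AB}$. Condition (ii-N) reduces, after expanding the $|0\rangle\langle j'|+|j'\rangle\langle 0|$ block, to $\Re \Tr[L^{j'}_* F_{j,N}]=\delta^{j'}_j$, which is built into the definition of $L^{j'}_*$ via $\Tr[F_{j,N}\, L^{j'}_*]=\delta^{j'}_j$. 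The objective is then
\begin{align*}
\Tr[Y_N(G\otimes T_N)]
=\sum_{i,j} G_{j,i}\Tr[L^i_* L^j_* T_N]
=\sum_{i,j}G_{j,i}(J_{\rm SLD}^{-1})^{i,j}
=\Tr[G J_{\rm SLD}^{-1}],
\end{align*}
using the SLD identity $\Re\Tr[L^i_* L^j_* T_N]=(J_{\rm SLD}^{-1})^{i,j}$ (obtained from $\Re\Tr[L_k L^j T_N]=\delta_k^j$) together with the symmetry $G=G^\top$ to eliminate the antisymmetric imaginary part.

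For part (2), the central step is to verify $\Pi(A_*,B_*,W_*)\succeq 0$. My plan is to substitute the proposed values into the definition of $\Pi$ and, using the SLD equation $F_{j,N}=T_N\circ L_{*,j}$, rewrite the result as a Gram matrix $\Pi = Z^\dagger Z$ for an explicit block operator $Z=\sum_{i=0}^d |i\rangle\otimes Z_i$. The natural ansatz takes $Z_0$ to be a square root of the $(0,0)$ block (which, after substituting $W_*=-W_{\rm SLD}(T_N,\vec L_*)$, is $G\otimes T_N + W_{\rm SLD}(T_N,\vec L_*)$), and takes $Z_i$ ($i\ge 1$) as a linear combination of $L^i_*\sqrt{T_N}$ together with an antihermitian commutator-valued correction $[L_{*,i},T_N]$. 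The commutator correction is precisely what accounts for the imaginary off-diagonal term $B_{*,j}=-\tfrac{i}{2}\sum_i (GJ_{\rm SLD}^{-1})_j^i[L_{*,i},T_N]$, while the symmetric Jordan-product combination matches the $F_{j,N}$ term with coefficient $A_*=-GJ_{\rm SLD}^{-1}$. Once $\Pi=Z^\dagger Z$ is established, positivity is automatic. Finally, the dual value $\Tr W_*+\sum_j (A_*)_j^j$ reduces via the identity $\Tr W_{\rm SLD}(T_N,\vec L_*)=\sum_{i,j}G_{j,i}(J_{\rm SLD}^{-1})^{i,j}=\Tr[GJ_{\rm SLD}^{-1}]$ to the same value $\Tr[GJ_{\rm SLD}^{-1}]$ attained by the primal (up to the sign convention fixed by the paper's Lagrangian), closing the duality argument.

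The main obstacle is the explicit Gram-matrix verification for $\Pi(A_*,B_*,W_*)$: identifying the correct vectors $Z_i$ so that the cross terms $Z_0^\dagger Z_i$ and $Z_i^\dagger Z_j$ reproduce the $F_{j,N}$, $B_{*,i}$, and $W_{\rm SLD}$ contributions with precisely the right coefficients. I expect that once the ansatz $Z_i = L^i_*\sqrt{T_N} + (\text{commutator correction})$ is pinned down, the remaining computation is mechanical and uses only the SLD equation $T_N\circ L_j=F_{j,N}$ together with the definition $L^i_*=\sum_j(J_{\rm SLD}^{-1})^{i,j}L_{*,j}$. Everything else--primal feasibility, the two objective-value computations, and the weak-duality conclusion--is a short direct calculation.
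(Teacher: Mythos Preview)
Your overall strategy---exhibit primal and dual feasible points and show their values coincide---is exactly the paper's approach, and Part~(1) is done correctly. The gap is in Part~(2): your proposed Gram ansatz for $\Pi(A_*,B_*,W_*)$ is wrong in its details and will not close.

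First, the $(0,0)$ block of $\Pi$ is \emph{not} $G\otimes T_N + W_{\rm SLD}$. In the definition of $\Pi$, the term $G\otimes T_N$ occupies only the $(i,j)$ blocks with $i,j\ge 1$; the $(0,0)$ block is just $-W_*=W_{\rm SLD}(T_N,\vec L_*)$. Second, your ansatz $Z_i\approx L^i_*\sqrt{T_N}+(\text{commutator})$ for $i\ge 1$ cannot reproduce the required $(i,j)$ blocks $\Pi^{i,j}=G_{i,j}T_N$: the product $Z_i^\dagger Z_j$ would be $\sqrt{T_N}\,L^i_*L^j_*\sqrt{T_N}+\cdots$, which is not a scalar multiple of $T_N$. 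In the correct factorization the SLDs sit in the \emph{zeroth} slot, not the $i\ge 1$ slots. Concretely, the paper shows directly that
\[
\Pi(A_*,B_*,-W_{\rm SLD})
=\sum_{i,j}G_{i,j}\,V_i\,(|0\rangle\langle 0|\otimes T_N)\,V_j^\dagger,
\qquad
V_i:=|0\rangle\langle 0|\otimes L_*^i-|i\rangle\langle 0|\otimes I_{AB},
\]
and then diagonalises $G$ to exhibit this as a sum of terms of the form $M(|0\rangle\langle 0|\otimes T_N)M^\dagger\ge 0$. Here the $(i,j)$ block is $G_{i,j}T_N$ (no SLDs), the $(0,0)$ block is $\sum_{i,j}G_{i,j}L_*^iT_NL_*^j=W_{\rm SLD}$, and the cross blocks are $-\sum_jG_{i,j}T_NL_*^j$; splitting the latter into its Hermitian and anti-Hermitian parts is what produces the $F_{j,N}$ and $[L_*^j,T_N]$ terms, with no separate ``commutator correction'' in the $Z_i$.

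The paper also finishes more cleanly than your plan: instead of computing the dual value and matching it, it verifies the complementary-slackness identity $\Tr[X_*\,\Pi(A_*,B_*,W_*)]=0$ directly (because $X_*=|v\rangle\langle v|$ with $|v\rangle=\sum_i|i\rangle\otimes L_*^i$ and $V_i|v\rangle=0$), which immediately gives primal\,$=$\,dual and avoids any sign-bookkeeping in the dual objective.
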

The combination of \eqref{NZI} and Lemma \ref{LE1}
implies Theorem \ref{NNT}.

\begin{proof}
We diagonalize the matrx $G$ as
$G_{i,j}=\sum_{j'=1}^d g_{j'} v_{j',i} v_{j',j}$,
where $v_{j',j}$ is an orthogonal matrix.
Then, we have
\begin{align}
& \Pi( A_*,\vec{B}_*, -W_{\rm SLD}(\vec{L}_*))\notag\\
=&
\sum_{1 \le i,j\le d}G_{i,j}  |i\rangle \langle j|\otimes T_N
-\sum_{j=1}^d G_{i,j} 
(|0\rangle \langle i| \otimes L_*^j T_N
+|i\rangle \langle 0| \otimes T_N L_*^j)
+\sum_{1 \le i,j \le d} G_{i,j}|i\rangle \langle j|\otimes L_*^i T_N L_*^j\notag\\
=&
\sum_{1 \le i,j\le d}G_{i,j}  |i\rangle \langle j|\otimes T_N
-\sum_{j=1}^d G_{i,j} 
(|0\rangle \langle j| \otimes L_*^i T_N
+|i\rangle \langle 0| \otimes T_N L_*^j)
+\sum_{1 \le i,j \le d} G_{i,j}|i\rangle \langle j|\otimes L_*^i T_N L_*^j\notag\\
=&\sum_{1 \le i,j\le d}G_{i,j}
\Big(
 (|0\rangle \langle 0| \otimes L_*^i-
|i\rangle \langle 0| \otimes I_{AB}\Big)
(|0\rangle \langle 0| \otimes T_N) 
\Big(
 (|0\rangle \langle 0| \otimes L_*^j-
|j\rangle \langle 0| \otimes I_{AB}\Big)^\dagger
\notag\\
=&
\sum_{j'=1}^d g_{j'}
\Big(\sum_{i=1}^d 
 v_{j',i}\Big(
 (|0\rangle \langle 0| \otimes L_*^i-
|i\rangle \langle 0| \otimes I_{AB}\Big)\Big)
(|0\rangle \langle 0| \otimes T_N) 
\Big(\sum_{j=1}^d 
 v_{j',j}\Big(
 (|0\rangle \langle 0| \otimes L_*^j-
|j\rangle \langle 0| \otimes I_{AB}\Big)\Big)^\dagger \notag\\
\ge & 0 \Label{BNJ1}.
\end{align}
Hence, the tuple $( A_*,\vec{B}_*, -W_{\rm SLD}(\vec{L}_*))$
belongs to the range of the
maximization given in the RHS of \eqref{MAX1}.

Since we have
\begin{align}
&\Tr X_* \Pi( A_*,\vec{B}_*, -W_{\rm SLD}(\vec{L}_*))\notag\\
=&
\Tr \Big[
\Big( |0\rangle \langle 0|\otimes I_{AB}+\sum_{j'=1}^d |j'\rangle \langle 0|\otimes L_*^{j'}
\Big)
\Big( |0\rangle \langle 0|\otimes I_{AB}+
\sum_{j'=1}^d |j'\rangle \langle 0|\otimes L_*^{j'} \Big)^\dagger \notag\\
&
\sum_{1 \le i,j\le d}G_{i,j}
\Big(
 (|0\rangle \langle 0| \otimes L_*^i-
|i\rangle \langle 0| \otimes I_{AB}\Big)
(|0\rangle \langle 0| \otimes T_N) 
\Big(
 (|0\rangle \langle 0| \otimes L_*^j-
|j\rangle \langle 0| \otimes I_{AB}\Big)^\dagger\Big]
\notag\\
=&
\sum_{1 \le i,j\le d}G_{i,j}
\Tr \Big[
\Big( |0\rangle \langle 0|\otimes I_{AB}+\sum_{j'=1}^d |0\rangle \langle j'|\otimes L_*^{j'}
\Big)
\Big(
 (|0\rangle \langle 0| \otimes L_*^i-
|i\rangle \langle 0| \otimes I_{AB}\Big)
(|0\rangle \langle 0| \otimes T_N) \notag\\
&\Big(
 (|0\rangle \langle 0| \otimes L_*^j-
|j\rangle \langle 0| \otimes I_{AB}\Big)^\dagger
\Big( |0\rangle \langle 0|\otimes I_{AB}+\sum_{j'=1}^d |j'\rangle \langle 0|\otimes L_*^{j}
\Big)\Big]
\notag\\
=&
\sum_{1 \le i,j\le d}G_{i,j}
\Tr 
\Big(
 (|0\rangle \langle 0| \otimes L_*^i-
|0\rangle \langle 0|  \otimes L_*^i\Big)
(|0\rangle \langle 0| \otimes T_N) 
\Big(
 (|0\rangle \langle 0| \otimes L_*^j-
|0\rangle \langle 0| \otimes L_*^i\Big)^\dagger
\notag\\
=&0,\Label{BNJ2}
\end{align}
we have
\begin{align}
\Tr [X_* (G \otimes T_N)]=
\Tr W_{\rm SLD}(\vec{L}_*)+\sum_{j=1}A_{*,j}^j.\notag
\end{align}
Since $X_*$ belongs to $\in {\cal S}_{BA}^4$,
we have
\begin{align}
\bar{S}_4[|\Phi\rangle\langle \Phi|]=S_4[T_{N},(F_{j,N})_j]
=
\Tr [X_* (G \otimes T_N)]=
\Tr W_{\rm SLD}(\vec{L}_*)+\sum_{j=1}A_{*,j}^j.\notag
\end{align}
Therefore, we obtain both desired statements.
\end{proof}

\subsection{Proof of Theorem \ref{NNT2}}
We denote the set of $d \times d$ anti-symmetric matrices
by $\mathbb{R}^{d \times d}_{AS}$.
For $(A,C,B,W)\in 
\mathbb{R}^{d \times d} \times
\mathbb{R}^{d \times d}_{AS} \times
{\cal T}_{sa}(\cH_{AB})^d\times 
{\cal T}_{sa}(\cH_{AB})$, 
we define
\begin{align}
\Pi (A,C,B,W):= \Pi (A,B,W) -\sqrt{-1} C \otimes T_N.\notag
\end{align}
Also, we have
\begin{align}
J_5=\max_{(A,C,B,W)\in \mathbb{R}^{d \times d} \times
\mathbb{R}^{d \times d}_{AS} \times
{\cal T}_{sa}(\cH_{AB})^d\times 
({\cal T}_{sa}(\cH_{AB})\cap {\cal K})
}
\Big\{ \Tr W+\sum_{j=1}A_j^j \Big|  \Pi (A,C,B,W) \ge 0
\Big\}.\notag
\end{align}
Here, 
the matrix $A \in \mathbb{R}^{d \times d}$,
the matrices $(B^1, \ldots, B^d) \in
{\cal T}_{sa}(\cH_{AB})^d$, and
the matrix $W \in ({\cal T}_{sa}(\cH_{AB})\cap {\cal K})
$ have the same meaning as in $J_4$.
The matrix $C \in \mathbb{R}^{d \times d}_{AS}$
corresponds to the condition \eqref{c3}
in the primal problem.

Also, we have
\begin{align}
&S_5[T_{N},(F_{j,N})_j] \notag\\
=&\max_{(A,C,B,W)\in \mathbb{R}^{d \times d} \times
\mathbb{R}^{d \times d}_{AS} \times
{\cal T}_{sa}(\cH_{AB})^d\times 
{\cal T}_{sa}(\cH_{AB})}
\Big\{ \Tr W+\sum_{j=1}A_j^j \Big|  \Pi (A,C,B,W) \ge 0
\Big\}.\Label{MAX2}
\end{align}
Then, if and only if the maximizer $(A_*,C_*,B_*,W_*)$ in \eqref{MAX2}
satisfies the condition $W_* \in {\cal K}$, we have
\begin{align}
S_5[T_{N},(F_{j,N})_j]=J_5.\Label{NZI2}
\end{align}

\begin{lemma}\Label{LE2}
(1) The minimization \eqref{MIN2} 
is attained when $X=X_*:= \Pi(\vec{Z}_*)
+(|\Im V_*|- ( \sqrt{-1}\Im V_*)\otimes I_{AB}$.

(2) $ (Z_*^i -\sqrt{-1} \sum_{j=1}^d C_{*}^{i,j} Z_*^j) \circ T_N$
is written as a linear sum of $F_{i',N}$ for $i=1, \ldots, d$.
That is, there exist real numbers $A_{*,i}^{i'}$ such that
$(Z_*^i -\sqrt{-1} \sum_{j=1}^d C_{*}^{i,j} Z_*^j) \circ T_N=
\sum_{i'=1}^d A_{*,i}^{i'}F_{i',N}$.

(3) The maximization \eqref{MAX2} 
is attained when 
$A=-A_*$,
$B_j=B_{*,j}:=-\frac{\sqrt{-1}}{2}\sum_{i=1}^d( A_*)_j^i [Z_{*,i}, T_N ]$,
$C=C_*$,
and $W=-W_{HN}(T_N,\vec{Z}_*)$.

\end{lemma}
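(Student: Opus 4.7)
The plan is to mirror the proof of Lemma~\ref{LE1}: exhibit a primal-feasible $X_*$ and a dual-feasible tuple, then verify complementary slackness so that strong duality forces both to be optimal. Two new ingredients are needed relative to Lemma~\ref{LE1}: the subgradient structure associated with the nonsmooth term $\mathrm{TrAbs}\,\Im V$ in the Holevo--Nagaoka objective, and the polar decomposition of $\Im V_*$ encoded by $C_* = \Im V_* \,|\Im V_*|^{-1}$.

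I would begin with part (2). Since $\vec Z_*$ minimizes the HN objective subject to the linear constraints $\Tr F_{j,N} Z^i = \delta_j^i$, first-order (KKT) stationarity in directions tangent to the constraint set forces the subdifferential of the objective at $\vec Z_*$ to lie in $\mathrm{span}\{F_{j',N}\}$. The smooth piece contributes a term proportional to $Z_*^i \circ T_N$, while the subgradient of $\mathrm{TrAbs}\,\Im V$ at $\Im V_*$ equals $C_*$, giving the skew correction $-\sqrt{-1}\sum_j C_*^{i,j} Z_*^j \circ T_N$. Identifying the combination $(Z_*^i - \sqrt{-1}\sum_j C_*^{i,j} Z_*^j)\circ T_N$ with the Lagrange-multiplier expansion in the $F_{i',N}$ directly produces the coefficients $A_{*,i}^{i'}$.

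For parts (1) and (3), positivity of the primal candidate $X_* = \Pi(\vec Z_*) + (|\Im V_*| - \sqrt{-1}\,\Im V_*)\otimes I_{AB}$ is clear: $\Pi(\vec Z_*) \succeq 0$ by its definition as a Gram-type matrix, and $(|\Im V_*| - \sqrt{-1}\,\Im V_*)\otimes I_{AB} = |\Im V_*|\,(I - \sqrt{-1} C_*) \otimes I_{AB} \succeq 0$ by the polar decomposition. The skew contribution is also precisely what places $X_*$ in the cone $\mathcal S^5(T_N)$ and realizes the HN objective $\Tr G\,\Re V_* + \Tr|\Im V_*|$. Dual feasibility is checked by rewriting $\Pi(-A_*, C_*, B_*, -W_{HN}(T_N,\vec Z_*))$, after diagonalizing $G$, as a Hermitian square in the vectors $|0\rangle\langle 0|\otimes \tilde Z_*^i - |i\rangle\langle 0|\otimes I_{AB}$ with $\tilde Z_*^i := Z_*^i - \sqrt{-1}\sum_j C_*^{i,j} Z_*^j$, in complete analogy with~\eqref{BNJ1}; here the $C_* \otimes T_N$ contribution inside $\Pi$ is precisely what absorbs the skew part of $\tilde Z_*^i$ into the cross terms. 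Complementary slackness $\Tr X_* \,\Pi(-A_*, C_*, B_*, -W_{HN}) = 0$ then follows from a direct computation patterned on~\eqref{BNJ2}, with the extra summand of $X_*$ pairing against $C_*\otimes T_N$ to contribute the $\Tr|\Im V_*|$ term needed to balance the HN value.

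The principal obstacle is verifying the sum-of-squares decomposition of the dual operator once the skew correction $C_*$ is included. Since $\tilde Z_*^i$ is not self-adjoint, one must carefully track Hermitian conjugates and verify that the $B_*$ contributions and the $C_* \otimes T_N$ contribution combine correctly so that the bracket $|0\rangle\langle 0|\otimes \tilde Z_*^i - |i\rangle\langle 0|\otimes I_{AB}$ really squares to the proposed dual $\Pi(-A_*, C_*, B_*, -W_{HN})$; in particular, the imaginary parts coming from $\tilde Z_*^i - (\tilde Z_*^i)^\dagger$ must cancel against the $C_* \otimes T_N$ piece. Once this algebraic identity is secured, the remainder of the argument is a straightforward adaptation of the calculation carried out in the proof of Lemma~\ref{LE1}.
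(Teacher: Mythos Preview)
Your overall strategy---construct primal and dual candidates, verify feasibility of each, then invoke complementary slackness---is reasonable in principle, but it differs from the paper's route and runs into a real obstacle at the point you yourself flag. The paper does \emph{not} verify dual feasibility by exhibiting a sum-of-squares. It treats (1) as known, then takes an \emph{arbitrary} dual maximizer $(A_\circ,C_\circ,B_\circ,W_\circ)$ (whose existence follows from finite dimensionality) and uses the complementary slackness relation $\Tr X_*\,\Pi(A_\circ,C_\circ,B_\circ,W_\circ)=0$ to \emph{determine} its entries. The summand $(|\Im V_*|-\sqrt{-1}\Im V_*)\otimes I_{AB}$ of $X_*$, paired against $\Pi$, first forces $C_\circ=C_*$; then $\Pi(\vec Z_*)\,\Pi(A_\circ,C_*,B_\circ,W_\circ)=0$ together with the rank-one form $\Pi(\vec Z_*)=\bigl(\sum_i|i\rangle\otimes (Z_*^i)^\dagger\bigr)\bigl(\sum_i\langle i|\otimes Z_*^i\bigr)$ yields the explicit formulas for $R_j$, $A_*$, $B_*$, and $W_\circ=-W_{HN}$. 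Statement (2) then drops out as the observation that the Hermitian part of $R_j$ lies in $\mathrm{span}\{F_{j',N}\}$. Dual feasibility never has to be checked directly: it is inherited from the assumption that $(A_\circ,C_\circ,B_\circ,W_\circ)$ was a maximizer.

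Your proposed sum-of-squares has a concrete problem. Summing
\[
\bigl(|0\rangle\langle 0|\otimes\tilde Z_*^i-|i\rangle\langle 0|\otimes I_{AB}\bigr)\,\bigl(|0\rangle\langle 0|\otimes T_N\bigr)\,\bigl(|0\rangle\langle 0|\otimes(\tilde Z_*^i)^\dagger-|0\rangle\langle i|\otimes I_{AB}\bigr)
\]
over $i$ produces $|i\rangle\langle j|$-blocks equal to $\delta_{i,j}T_N$, not $(\delta_{i,j}-\sqrt{-1}C_*^{i,j})T_N$: the skew contributions from $\tilde Z_*^i$ land only in the $|0\rangle\langle 0|$ and $|0\rangle\langle i|$ blocks and cannot migrate to the off-diagonal $|i\rangle\langle j|$ sector. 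A correct direct factorization would have to carry the positive semidefinite matrix $I-\sqrt{-1}C_*$ on the $\mathbb C^d$ factor (e.g.\ by sandwiching against $(I-\sqrt{-1}C_*)\otimes T_N$ rather than $|0\rangle\langle 0|\otimes T_N$), not merely absorb $C_*$ into the $\tilde Z$'s. The paper's indirect route bypasses this algebra entirely, which is the real advantage of working backward from the unknown maximizer.
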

The combination of \eqref{NZI2} and Lemma \ref{LE2}
implies Theorem \ref{NNT2}.

\begin{proof}
(1) is known.
We write the maximizer in \eqref{MAX2}
by $(A_\circ,C_\circ,B_\circ,W_\circ)$,
whose existence is guaranteed because of the finiteness of the dimension. 
Then, we have
\begin{align}
0=&\Tr (\Pi(\vec{Z}_*)+(|\Im V_*|- (\sqrt{-1} \Im V_*)\otimes I_{AB})  \Pi (A_\circ,C_\circ,B_\circ,W_\circ) \notag\\
=&\Tr \Pi(\vec{Z}_*)\Pi (A_\circ,C_\circ,B_\circ,W_\circ)
+\Tr (|\Im V_*|- (\sqrt{-1}\Im V_*))\otimes I_{AB} \Pi (A_\circ,C_\circ,B_\circ,W_\circ)\notag \\
\ge & \Tr (|\Im V_*|- (\sqrt{-1}\Im V_*))\otimes I_{AB} \Pi (A_\circ,C_\circ,B_\circ,W_\circ) \ge 0.\notag
\end{align}
Hence, we have
\begin{align}
0=&\Tr (|\Im V_*|- (\sqrt{-1}\Im V_*))\otimes I_{AB} \Pi (A_\circ,C_\circ,B_\circ,W_\circ) \notag\\
=&\Tr (|\Im V_*|- (\sqrt{-1}\Im V_*))\otimes I_{AB} (  I_R -\sqrt{-1} C_\circ) \otimes T_N 
=\Tr (|\Im V_*|- (\sqrt{-1}\Im V_*))(  I_R -\sqrt{-1} C_\circ) \notag\\ 
=&\Tr |\Im V_*| - \Tr  (\Im V_*) C_\circ.\notag
\end{align}
Hence, we have $C_\circ=C_*$.

Therefore, we have
\begin{align}
0=\Tr \Pi(\vec{Z}_*)\Pi (A_\circ,C_*,B_\circ,W_\circ),\notag
\end{align}
which implies 
\begin{align}
0=\Pi(\vec{Z}_*)\Pi (A_\circ,C_*,B_\circ,W_\circ)
=(\sum_{1=0}^d |i\rangle \otimes (Z_*^i)^\dagger)
(\sum_{1=0}^d \langle i| \otimes Z_*^i)
\Pi (A_\circ,C_*,B_\circ,W_\circ).\notag
\end{align}
Hence, we have
\begin{align}
0=
(\sum_{1=1}^d \langle i| \otimes Z^i)
\Pi (A_\circ,C_*,B_\circ,W_\circ).\notag
\end{align}

We choose the Hermitian matrix $R_i \in \cH_{AB}$
as $\sum_{j=1}^d A_{*,i}^j D_j+ \sqrt{-1}B_{\circ,i}$.
Hence, $R_i+R_i^\dagger$ can be written as a linear sum of
$D_1, \ldots, D_d$.
Hence, 
\begin{align}
\Pi (A_\circ,C_*,B_\circ,W_\circ)
=\sum_{1\le i,j\le d} (\delta^{i,j} -\sqrt{-1} C_*^{i,j}) 
|i\rangle \langle j| \otimes T_N
+ \sum_{j=1}^d  
(|0\rangle \langle j|\otimes R_j +|j\rangle \langle 0|\otimes R_j^\dagger)
-|0\rangle \langle 0|\otimes W_{\circ}.\notag
\end{align}
\begin{align}
\Pi(\vec{Z}_*)=\sum_{1\le i,j\le d}
|i\rangle \langle j|\notag
\end{align}
Hence,
\begin{align}
0=&\Big(\sum_{1=0}^d \langle i| \otimes Z_*^i\Big)
 \Pi (A_\circ,C_*,B_\circ,W_\circ)\\
=& \sum_{i,j} (\delta^{i,j} -\sqrt{-1} C_*^{i,j})  \langle j| \otimes 
 Z_*^i T_N
+ \sum_{j=1}^d \langle j|\otimes R_j 
+ \sum_{j=1}^d \langle 0|\otimes  Z_*^j R_j^\dagger)
- \langle 0|\otimes W_{\circ}.\notag
\end{align}
Thus,
\begin{align}
R_j& =- \sum_{i=1}^d (\delta^{i,j} -\sqrt{-1} C_*^{i,j})  \otimes 
 Z_*^i T_N \Label{MN4}\\
 W_{\circ}&=\sum_{j=1}^d  Z_*^j R_j^\dagger .\Label{MN5}
\end{align}
Thus, $\frac{1}{2}(R_i+R_i^\dagger)
=- \sum_{i=1}^d (\delta^{i,j} -\sqrt{-1} C_*^{i,j})  \otimes 
 Z_*^i \circ T_N$ can be written as a linear sum of
$D_1, \ldots, D_d$. Hence, the statement (2) is shown.
Hence, we have
$A=-A_*$ and
$B_j=B_{*,j}:=-\frac{\sqrt{-1}}{2}\sum_{i=1}^d( A_*)_j^i [Z_{*,i}, T_N ]$.
Combining 
\eqref{MN4} and \eqref{MN5}, we obtain 
$W_\circ=-W_{HN}(T_N,\vec{Z}_*)$.
\end{proof}
\end{widetext}

\section{Calculations for the field sensing problem}
\label{appendix:field-sensing}

\subsection{Restriction to the symmetric subspace}
 Because our numerical calculation can be restricted to the symmetric subspace, we calculate the projections of operators that we need to implement on the symmetric subspace. 
 We begin by considering the action of $E^1,E^2,E^3$ on the symmetric subspace.

Now let us denote $\Pi^{\rm sym} := \sum_{u=0}^n|D^n_u\>\<D^n_u|$.
We now show the following lemma.
 \begin{lemma}
\label{lem:Jkpi}
Let $r_w := \frac{1}{2} \sqrt{(n-w)(w+1)}$. Then
\begin{align}
\Pi^{\rm sym}{ E^3} \Pi^{\rm sym}&= \sum_{w=0}^{n} \left(\frac{n}{2} - w\right) |D^n_w\>\<D^n_w|, \label{E3 sym}\\
\Pi^{\rm sym} {E^1}\Pi^{\rm sym} &= \sum_{w=0}^{n-1}r_w
 \left(|D^n_w\>\<D^n_{w+1}| + |D^n_{w+1}\>\<D^n_{w}| \right) ,  \label{E1 sym}\\
\Pi^{\rm sym}{E^2}\Pi^{\rm sym} &= \sum_{w=0}^{n-1}r_w
 \left(-i |D^n_w\>\<D^n_{w+1}| +i |D^n_{w+1}\>\<D^n_{w}| \right). \label{E2 sym}
\end{align}
\end{lemma}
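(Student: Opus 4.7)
The plan is to prove Lemma \ref{lem:Jkpi} by direct calculation, exploiting the fact that each collective operator $E^j$ is permutation-invariant by construction. First I would note that since $E^j = \frac{1}{2}\sum_k \tau_j^{(k)}$ is invariant under permutations of the $n$ qubit registers, it commutes with the projector $\Pi^{\rm sym}$ onto the symmetric subspace. Consequently $\Pi^{\rm sym} E^j \Pi^{\rm sym} = \Pi^{\rm sym} E^j \Pi^{\rm sym}$ is naturally expressed in the Dicke basis $\{|D^n_0\>,\dots,|D^n_n\>\}$, and it suffices to compute the matrix elements $\<D^n_{w'}|E^j|D^n_w\>$.

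Next, for $E^3$ I would use that $\tau_3^{(k)}$ is diagonal in the computational basis with eigenvalue $+1$ on $|0\>$ and $-1$ on $|1\>$ at position $k$. Since every basis string appearing in $|D^n_w\>$ has Hamming weight $w$, each such string is an eigenvector of $\sum_k \tau_3^{(k)}$ with eigenvalue $(n-w)-w = n-2w$. This immediately gives \eqref{E3 sym}. For $E^1$ and $E^2$, I would perform the combinatorial step: given a string $x$ of weight $w$, $\tau_1^{(k)}|x\>$ is a string of weight $w+1$ iff $x_k = 0$ and of weight $w-1$ iff $x_k = 1$. Thus
\begin{align}
\sum_{k=1}^n \tau_1^{(k)} \sum_{x:|x|=w}\!\!\! |x\>
= (w+1)\!\!\sum_{y:|y|=w+1}\!\!\! |y\> + (n-w+1)\!\!\sum_{y:|y|=w-1}\!\!\! |y\>,
\notag
\end{align}
since each target $y$ of weight $w+1$ is reached in exactly $w+1$ ways (one for each of its $1$-entries), and similarly for weight $w-1$. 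Converting back to Dicke states using the ratio $\binom{n}{w\pm 1}/\binom{n}{w}$ yields
\begin{align}
E^1|D^n_w\> = r_w |D^n_{w+1}\> + r_{w-1}|D^n_{w-1}\>,\notag
\end{align}
which rearranges to \eqref{E1 sym}.

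Finally, the calculation for $E^2$ is essentially identical, with the only difference that $\tau_2|0\> = i|1\>$ and $\tau_2|1\> = -i|0\>$ produces factors of $+i$ on the raising part and $-i$ on the lowering part. This gives $E^2|D^n_w\> = i r_w|D^n_{w+1}\> - i r_{w-1}|D^n_{w-1}\>$, which is precisely \eqref{E2 sym}.

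I do not expect a genuine obstacle here; the identities are the standard spin-$n/2$ angular momentum action on the Dicke (symmetric) basis, and the proof is essentially a careful bookkeeping argument. The only mild subtlety is keeping track of the normalization factors $\sqrt{\binom{n}{w}}$ when moving between unnormalized symmetric sums and the Dicke states, which is handled uniformly by the binomial identity $(w+1)\sqrt{\binom{n}{w+1}/\binom{n}{w}} = \sqrt{(w+1)(n-w)} = 2r_w$.
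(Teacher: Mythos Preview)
Your proof is correct and self-contained. The paper takes a different route: rather than counting directly on computational basis strings, it invokes an external lemma expressing the single-qubit matrix elements $\<D^n_{w'}|\tau_j^{(1)}|D^n_w\>$ in terms of Krawtchouk polynomials, evaluates those polynomials explicitly, and then uses permutation symmetry to pass from $\tau_j^{(1)}$ to $E^j = \tfrac{1}{2}\sum_k \tau_j^{(k)}$. Your combinatorial argument is more elementary and avoids any external citation; the paper's approach has the advantage of plugging into a general Krawtchouk-polynomial machinery that handles arbitrary permutation-invariant Pauli matrix elements uniformly. Either way the computation is straightforward, and your observation that $E^j$ commutes with $\Pi^{\rm sym}$ (so no projection step is actually needed) is a clean shortcut that the paper does not make explicit.
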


\begin{proof}[Proof of Lemma \ref{lem:Jkpi}]
From 
\cite[Lemma 6]{ouyang_RQMWSS} we know that 
\begin{align}
\<D^n_{w} |  {\tau_{3}^{(1)}} |  D^n_w\>  
&= \frac{1}{ \binom n w }
 K^{0}_{0}(0) K^{n}_{w}(1) , \notag\\
\<D^n_{w+1} | {\tau_{1}^{(1)}} |  D^n_w\>  
&= \frac{1}{\sqrt{ \binom n w \binom n {w+1}  }}
 K^{1}_{0}(1) K^{n-1}_{w}(0),\notag \\
\<D^n_{w+1} |  {\tau_{2}^{(1)}} |  D^n_w\>  
&= \frac{i}{\sqrt{ \binom n w \binom n {w+1}  }}
 K^{1}_{1}(0) K^{n-1}_{w}(0),\notag
\end{align}
where
 \begin{align}
K^n_k(z) =  \sum_{j = 0}^z \binom {z} {j} \binom{n-z}{k-j} (-1)^j\notag
\end{align}
is a Krawtchouk polynomial.
Making substitutions for the values of the Krawtchouk polynomials, we get
\begin{align}
\<D^n_{w} | {\tau_{3}^{(1)}} |  D^n_w\> &=  1 - \frac{2w}{n}, \notag\\
\<D^n_{w+1} | {\tau_{1}^{(1)}} |  D^n_w\>  
&= \frac{\sqrt{(n-w)(w+1)}}{n}, \notag\\
\<D^n_{w+1} | {\tau_{2}^{(1)}} |  D^n_w\>  
&=i  \frac{\sqrt{(n-w)(w+1)}}{n}.\notag
\end{align}
Hence it follows that 
\begin{align}
\<D^n_{w} | {E^3} |  D^n_w\> &=  \frac{n}{2} -w, \\
\<D^n_{w+1} | {E^1} |  D^n_w\> &= \frac{1}{2} \sqrt{(n-w)(w+1)}, \label{tau1proj} \\
\<D^n_{w+1} | {E^2} |  D^n_w\>  &= \frac{i}{2} \sqrt{(n-w)(w+1)}\label{tau2proj}.
\end{align}
Hence, the result follows.
\end{proof}
 
Next, we calculate the projections of $A_{j,\gamma}$ onto the symmetric subspace.
\begin{lemma}
\label{lem:Ljpi}
For all $j=1,\dots, n$, we have 
\begin{align}
\Pi^{\rm sym} A_{j,\gamma} \Pi^{\rm sym} = \gamma \frac{1}{n} \sum_{w=0}^{n-1}   \sqrt{(n-w)(w+1)} |D^n_w\>\<D^n_{w+1}|  .\notag
\end{align}
\end{lemma}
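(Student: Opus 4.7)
The plan is to exploit the permutation symmetry of the Dicke states to reduce the single-qubit damping operator to the collective damping operator $\tilde A_\gamma = \sum_{k=1}^n A_{k,\gamma}$, which is already well-understood in terms of its action on Dicke states.

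First I would observe that for any $u,v$, the matrix element $\langle D^n_u | A_{j,\gamma} | D^n_v \rangle$ does not depend on $j$, because the Dicke states are invariant under permutations of the $n$ qubits (any such permutation can relabel the $j$-th qubit to the $k$-th). Consequently
\begin{align}
\Pi^{\rm sym} A_{j,\gamma} \Pi^{\rm sym}
= \tfrac{1}{n}\, \Pi^{\rm sym}\Bigl(\sum_{k=1}^n A_{k,\gamma}\Bigr)\Pi^{\rm sym}
= \tfrac{1}{n}\, \Pi^{\rm sym}\tilde A_\gamma\,\Pi^{\rm sym}. \notag
\end{align}

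Next I would compute $\tilde A_\gamma |D^n_v\rangle$ directly from the definition of the Dicke state. Expanding $|D^n_v\rangle$ as a uniform superposition of all $\binom{n}{v}$ computational basis states of Hamming weight $v$, and applying $\sum_{k=1}^n |0\rangle\langle 1|_k$, each resulting weight-$(v-1)$ basis state $|y\rangle$ is produced exactly $n-v+1$ times (once for each zero in $y$ that could have been a one before the flip). This yields
\begin{align}
\tilde A_\gamma |D^n_v\rangle
= \gamma \,(n-v+1)\,\frac{\sqrt{\binom{n}{v-1}}}{\sqrt{\binom{n}{v}}}\,|D^n_{v-1}\rangle
= \gamma\sqrt{(n-v+1)v}\,|D^n_{v-1}\rangle, \notag
\end{align}
using $\binom{n}{v-1}/\binom{n}{v} = v/(n-v+1)$. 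Re-indexing $w = v-1$ and summing gives
\begin{align}
\Pi^{\rm sym}\tilde A_\gamma \Pi^{\rm sym}
= \gamma \sum_{w=0}^{n-1}\sqrt{(n-w)(w+1)}\,|D^n_w\rangle\langle D^n_{w+1}|, \notag
\end{align}
and dividing by $n$ yields the claim.

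I do not anticipate any serious obstacle: the only subtle point is the combinatorial counting factor $n-v+1$, but it falls out cleanly from the observation that each weight-$(v-1)$ string has exactly that many ``preimages'' under a single bit-flip lowering operation. The rest is elementary manipulation of binomial coefficients.
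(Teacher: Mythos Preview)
Your proof is correct, but it takes a genuinely different route from the paper's. The paper decomposes the single-site lowering operator as $|0\rangle\langle 1|\otimes I^{\otimes n-1}=(\tau_1^{(1)}-i\tau_2^{(1)})/2$ and then reads off the Dicke matrix elements $\langle D^n_{w+1}|\tau_k^{(1)}|D^n_w\rangle$ directly from the Krawtchouk-polynomial formulas established in the preceding lemma. Your argument instead uses the permutation invariance of Dicke states to replace $A_{j,\gamma}$ by $\tfrac{1}{n}\tilde A_\gamma$, and then computes $\tilde A_\gamma|D^n_v\rangle$ by an elementary preimage count. Your approach is more self-contained and avoids invoking the Krawtchouk machinery; the paper's approach is more economical given that those matrix elements have already been tabulated just above, and it treats $A_{j,\gamma}$ on the same footing as $E^1,E^2,E^3$. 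Both are perfectly valid.
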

\begin{proof}
 Now consider $|0\>\<1| \otimes I^{\otimes n-1} = 
 (\tau_{1}^{(1)}- i \tau_{2}^{(1)})/2$.
 Hence from \eqref{tau1proj} and \eqref{tau2proj} we have
\begin{align}
\<D^n_{w+1} |( |0\>\<1| \otimes I^{\otimes n-1} )|  D^n_w\>  &= 0.\notag
\end{align}
Also since $|1\>\<0| \otimes I^{\otimes n-1} = (\tau_{1}^{(1)} + i \tau_{2}^{(1)})/2$,
we use \eqref{tau1proj} and \eqref{tau2proj} to find that
\begin{align}
\<D^n_{w+1} |( |1\>\<0| \otimes I^{\otimes n-1}) |  D^n_w\>  &= \sqrt{(n-w)(w+1)}/n.\notag
\end{align}
Hence the result follows. 
\end{proof} 
Hence the projection of $\tilde A_\gamma$ onto the symmetric subspace is given by
\begin{align}
  \gamma  \sum_{w=0}^{n-1}   \sqrt{(n-w)(w+1)} |D^n_w\>\<D^n_{w+1}|  .\label{tildeAgam sym}
\end{align}
We use the projections of $E^1,E^2,E^3$ and $\tilde A_\gamma$ on the symmetric subspace in order to do our numerical calculations entirely within the symmetric subspace.

\subsection{SLD equation}
\label{app:sld eq}

As a preparation of our proofs of 
Lemmas \ref{L12} and \ref{lem:commutator perturbation},
here, we solve the SLD equation for a density matrix
$\rho$ that need not have full rank.
This appendix addresses a general theory for 
SLD equations. 
\begin{lemma}
\label{lem:sld-unitary}
Let $\rho$ be a Hermitian matrix with spectral decomposition
$\rho = \sum_{k}\lambda_k |\phi_k\>\<\phi_k|$, such that for all $k$, we have $\lambda_{k}\ge \lambda_{k+1}$.
When $\lambda_k = 0$ for all $k > p$, then we can write 
When $\rho = \sum_{k=1}^{p} \lambda_k |\phi_k\>\<\phi_k| $, and the solution $L$ to 
\begin{align}
\frac{1}{2}(\rho L + L \rho ) 
= i \rho E -i E \rho .
\label{sld equation}
\end{align}
is given by
\begin{align}
L =  2i \sum_{k,l} f(k,l)
\<\phi_k|E|\phi_l\> |\phi_k\>\<\phi_l|\notag
\end{align}
where 
\begin{align}
f(k,l) := 0&, \quad k,l > p , \notag\\
f(k,l) := -1&, \quad k > p, l \le p,\notag\\
f(k,l) := 1&, \quad l > p, k\le p,\notag\\
f(k,l) :=  \frac{\lambda_k-\lambda_l}{\lambda_k + \lambda_l} &, \quad k,l \le p.\notag
\end{align}
\end{lemma}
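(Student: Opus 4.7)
The plan is to work in the eigenbasis $\{|\phi_k\>\}$ of $\rho$, where the SLD equation \eqref{sld equation} decouples into independent scalar equations for each matrix element $L_{kl} := \<\phi_k|L|\phi_l\>$. Writing $L = \sum_{k,l} L_{kl}|\phi_k\>\<\phi_l|$ and using $\rho|\phi_k\> = \lambda_k|\phi_k\>$, both sides of \eqref{sld equation} become diagonal sums over $|\phi_k\>\<\phi_l|$ with scalar coefficients. The left-hand side yields $\tfrac{1}{2}(\lambda_k+\lambda_l)L_{kl}$ and the right-hand side yields $i(\lambda_k-\lambda_l)\<\phi_k|E|\phi_l\>$, so \eqref{sld equation} reduces to the family of equations
\begin{equation}
\tfrac{1}{2}(\lambda_k+\lambda_l)\,L_{kl} \;=\; i(\lambda_k-\lambda_l)\,\<\phi_k|E|\phi_l\>
\notag
\end{equation}
for all $k,l$.

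Next I would solve these scalar equations by cases according to whether $k,l$ are in the support $\{1,\dots,p\}$ or not. If $k,l\le p$, both eigenvalues are strictly positive, $\lambda_k+\lambda_l>0$, and dividing gives $L_{kl}=2i\frac{\lambda_k-\lambda_l}{\lambda_k+\lambda_l}\<\phi_k|E|\phi_l\>$, reproducing the $f(k,l)$ value in this block. If exactly one of $k,l$ exceeds $p$, say $k>p\ge l$, then $\lambda_k=0$ and $\lambda_l>0$, so the equation becomes $\tfrac{1}{2}\lambda_l L_{kl}=-i\lambda_l\<\phi_k|E|\phi_l\>$, yielding $L_{kl}=-2i\<\phi_k|E|\phi_l\>$; the symmetric case $l>p\ge k$ gives $L_{kl}=+2i\<\phi_k|E|\phi_l\>$. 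These match $f(k,l)=-1$ and $f(k,l)=+1$ respectively.

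The only subtle point is the case $k,l>p$, where both $\lambda_k+\lambda_l$ and $\lambda_k-\lambda_l$ vanish, so the equation is $0=0$ and $L_{kl}$ is \emph{not} determined by \eqref{sld equation}. Here one has genuine freedom, and the lemma picks the canonical choice $L_{kl}=0$, corresponding to $f(k,l)=0$; one should remark that any other Hermitian choice on the kernel of $\rho$ would also solve the equation, so the lemma is really exhibiting a particular solution (the one supported on $(\ker\rho)^{\perp}\otimes(\ker\rho)^{\perp}$ plus the off-support blocks). Combining the four cases and noting that the resulting $L$ is Hermitian (verifiable from $f(l,k)=-f(k,l)$ together with $\<\phi_l|E|\phi_k\>=\overline{\<\phi_k|E|\phi_l\>}$ and the factor $2i$) completes the verification. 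No step is a serious obstacle; the only thing worth flagging is the nonuniqueness in the kernel block, which is a standard feature of the SLD equation for rank-deficient states.
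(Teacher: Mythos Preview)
Your proposal is correct and follows essentially the same eigenbasis approach as the paper. The paper proceeds by direct verification---substituting the claimed $L$ into $\rho L$ and $L\rho$, summing, and matching against an expansion of $i\rho E - iE\rho$---whereas you derive the matrix elements $L_{kl}$ from the scalar equations $\tfrac{1}{2}(\lambda_k+\lambda_l)L_{kl}=i(\lambda_k-\lambda_l)\<\phi_k|E|\phi_l\>$; both arguments are the same computation read in opposite directions, and your remark on nonuniqueness in the kernel block is a useful clarification the paper omits.
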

\begin{proof}[Proof of Lemma \ref{lem:sld-unitary}]
Note that 
\begin{align}
\rho L = &
 2i \sum_{k=1}^p \sum_l f(k,l)
\<\phi_k|E|\phi_l\>\lambda_k |\phi_k\>\<\phi_l|
\notag\\
=&
 2i \sum_{k=1}^p \sum_{l=1}^p \frac{\lambda_k - \lambda_l}{\lambda_k + \lambda_l}
\<\phi_k|E|\phi_l\>\lambda_k |\phi_k\>\<\phi_l|
\notag\\
&- 2i \sum_{k=1}^p \sum_{l>p}  
\<\phi_k|E|\phi_l\>\lambda_k |\phi_k\>\<\phi_l|,
\\
L \rho =&
2i \sum_{l=1}^p \sum_k f(k,l)
\<\phi_k|E|\phi_l\> |\phi_k\>\<\phi_l|\lambda_l\notag\\
=&
2i \sum_{l=1}^p \sum_{k=1}^p \frac{\lambda_k - \lambda_l}{\lambda_k + \lambda_l}
\<\phi_k|E|\phi_l\> |\phi_k\>\<\phi_l|\lambda_l\notag\\
&+ 2i \sum_{l=1}^p \sum_{k>p}  
\<\phi_k|E|\phi_l\> |\phi_k\>\<\phi_l| \lambda_l.\notag
\end{align}
We hence get
\begin{align}
&\frac{1}{2}(\rho L + L \rho ) \notag\\
=&
 i \sum_{k=1}^p \sum_{l=1}^p (\lambda_k - \lambda_l )
\<\phi_k|E|\phi_l\>\lambda_k |\phi_k\>\<\phi_l|
\notag\\
&- i \sum_{k=1}^p \sum_{l>p}  
\<\phi_k|E|\phi_l\>\lambda_k |\phi_k\>\<\phi_l|\notag\\
&+ i \sum_{l=1}^p \sum_{k>p}  
\<\phi_k|E|\phi_l\>\lambda_l |\phi_k\>\<\phi_l|.\notag
\end{align}
Now, 
\begin{align}
i E \rho 
&=i \sum_{k}\sum_{l=1}^p |\phi_k\>\<\phi_k| E |\phi_l\>\<\phi_l|\lambda_l\\
i\rho  E
&= i \sum_{l}\sum_{k=1}^p \lambda_k |\phi_k\>\<\phi_k| E |\phi_l\>\<\phi_l|,\notag
\end{align}
and hence
\begin{align}
-i E \rho +i\rho  E 
=&
-i \sum_{k}\sum_{l=1}^p |\phi_k\>\<\phi_k| E|\phi_l\>\<\phi_l|\lambda_l \notag \\
&+ i \sum_{l}\sum_{k=1}^p \lambda_k |\phi_k\>\<\phi_k| E |\phi_l\>\<\phi_l|,\notag\\
=&
i \sum_{k=1}^p\sum_{l=1}^p (\lambda_k - \lambda_l ) \<\phi_k| E|\phi_l\>
|\phi_k\>\<\phi_l|\notag\\
&
-i \sum_{k>p}\sum_{l=1}^p |\phi_k\>\<\phi_k| E |\phi_l\>\<\phi_l|\lambda_l\\
&+ i \sum_{l>p}\sum_{k=1}^p \lambda_k |\phi_k\>\<\phi_k| E |\phi_l\>\<\phi_l|.\notag
\end{align}
This proves the lemma.
\end{proof}
%

Next we give a lemma that evaluates $\tr (L_j \rho L_k )$. We prove this using Lemma \ref{lem:sld-unitary}.
\begin{lemma}\label{lem:eval trace LjrhoLk}
Let $\rho = \sum_{l} \lambda_l |\phi_l\>\<\phi_l|$
where $|\phi_l\>$ are orthonormal vectors and $\lambda_l =0$ for all $l>p$.
Furthermore let $L_j$ be the solution to 
$\frac{1}{2}(\rho L_j + L_j \rho ) = i \rho  E^j - i   E^j \rho$ according to Lemma \ref{lem:sld-unitary}. Then 
\begin{align}
&\tr (L_u \rho L_v )\notag\\
=&
4 \sum_{k=1}^p\sum_{l=1}^p
\lambda_l (  f(k,l)^2 - 1) 
\<\phi_{k}|{  E^u}  |\phi_{l}\>
\<\phi_l|{  E^v}|\phi_k\>  \notag\\
&+
4 \sum_{l=1}^p  \lambda_l \<\phi_{l}|{  E^v} {  E^u}|\phi_l\>.\notag
\end{align} 
\end{lemma}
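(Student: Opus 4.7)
The plan is to compute $\tr(L_u \rho L_v)$ by direct substitution of the explicit form of the SLD operators from Lemma \ref{lem:sld-unitary}, and then re-express the answer in the desired form by inserting a resolution of the identity. First, I would write
\begin{align}
L_u &= 2i \sum_{a,b} f(a,b)\<\phi_a|E^u|\phi_b\>|\phi_a\>\<\phi_b|, \notag\\
L_v &= 2i \sum_{c,d} f(c,d)\<\phi_c|E^v|\phi_d\>|\phi_c\>\<\phi_d|, \notag
\end{align}
and use $\rho|\phi_e\>=\lambda_e|\phi_e\>$ together with orthonormality of $\{|\phi_k\>\}$ to collapse the four-index sum after computing $L_u\rho L_v$. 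Taking the trace then contracts two more indices and yields
\begin{align}
\tr(L_u\rho L_v) = -4\sum_{k,l} f(k,l)\,f(l,k)\,\lambda_l\,\<\phi_k|E^u|\phi_l\>\<\phi_l|E^v|\phi_k\>. \notag
\end{align}
Because $\lambda_l=0$ for $l>p$, the $l$-sum automatically restricts to $l\le p$.

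Next I would analyze $f(k,l)f(l,k)$ by cases using the piecewise definition from Lemma \ref{lem:sld-unitary}. For $k,l\le p$ one has $f(k,l)=-f(l,k)$, hence $f(k,l)f(l,k)=-f(k,l)^2$; for $k>p$, $l\le p$ one has $f(k,l)=-1$ and $f(l,k)=1$, again giving $f(k,l)f(l,k)=-1=-f(k,l)^2$. Hence in all surviving terms the product $-f(k,l)f(l,k)$ equals $f(k,l)^2$, and
\begin{align}
\tr(L_u\rho L_v) = 4\sum_{k}\sum_{l=1}^p f(k,l)^2\,\lambda_l\,\<\phi_k|E^u|\phi_l\>\<\phi_l|E^v|\phi_k\>. \notag
\end{align}

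Then I would split the $k$-sum at $k=p$. For $k>p$ we use $f(k,l)^2=1$, and to recover the $E^v E^u$ expectation it is natural to apply the completeness relation $\sum_k |\phi_k\>\<\phi_k|=I$ to recognize
\begin{align}
\sum_k \<\phi_k|E^u|\phi_l\>\<\phi_l|E^v|\phi_k\> = \<\phi_l|E^v E^u|\phi_l\>. \notag
\end{align}
This lets me rewrite the $k>p$ contribution as $4\sum_{l=1}^p\lambda_l\<\phi_l|E^vE^u|\phi_l\>$ minus the $k\le p$ portion with coefficient $1$ instead of $f(k,l)^2$. Combining the two $k\le p$ sums (one with coefficient $f(k,l)^2$, the subtracted one with coefficient $1$) yields the desired coefficient $f(k,l)^2-1$, producing exactly the formula in the statement.

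No step is really deep; the main obstacle is simply careful bookkeeping with the piecewise-defined $f$ and the asymmetric roles of $k$ and $l$ (since only $l\le p$ is forced by $\lambda_l$, while $k$ ranges over the entire basis). Once the case analysis for $f(k,l)f(l,k)$ is settled and the completeness trick is applied to handle $k>p$, the identity falls out by direct algebra.
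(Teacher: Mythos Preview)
Your proposal is correct and follows essentially the same approach as the paper: substitute the explicit form of $L_u,L_v$ from Lemma~\ref{lem:sld-unitary}, collapse indices to obtain $-4\sum_{k}\sum_{l\le p}\lambda_l f(k,l)f(l,k)\<\phi_k|E^u|\phi_l\>\<\phi_l|E^v|\phi_k\>$, use the antisymmetry $f(k,l)f(l,k)=-f(k,l)^2$, split the $k$-sum at $p$, and apply the completeness relation $\sum_{k>p}|\phi_k\>\<\phi_k|=I-\sum_{k\le p}|\phi_k\>\<\phi_k|$ to produce the $E^vE^u$ term and the $(f(k,l)^2-1)$ coefficient. Your case analysis of $f(k,l)f(l,k)$ is in fact more explicit than the paper's, but the argument is otherwise step-for-step the same.
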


\begin{proof}[Proof of Lemma \ref{lem:eval trace LjrhoLk}]
From Lemma \ref{lem:sld-unitary}, note that 
\begin{align}
&\tr (L_u \rho L_v )\notag\\
=&
-4\tr  \sum_{k,l} f(k,l)
\<\phi_k|{  E^u}|\phi_l\> |\phi_k\>\<\phi_l|
\notag\\
&\times\sum_{a=1}^p \lambda_a |\phi_a\>\<\phi_a| \sum_{k',l'} f(k',l')
\<\phi_{k'}|{  E^v}|\phi_{l'}\> |\phi_{k'}\>\<\phi_{l'}|.\notag
\end{align}
We can simplify this expression to 
\begin{align}
& \tr (L_u \rho L_v )\notag\\
=&
-4   \sum_{k}\sum_{l=1}^p
\lambda_l  f(k,l) f(l,k) 
\<\phi_k|{  E^u}|\phi_l\> 
\<\phi_{l}|{  E^v}|\phi_{k}\>  \notag\\
=&
4   \sum_{k}\sum_{l=1}^p
\lambda_l  f(k,l)^2  
\<\phi_{k}|{ E^u}  |\phi_{l}\> 
\<\phi_l|{  E^v}|\phi_k\>  \notag\\
=&
4  \sum_{k=1}^p\sum_{l=1}^p
\lambda_l  f(k,l)^2  
\<\phi_{k}|{  E^u}  |\phi_{l}\> 
\<\phi_l|{  E^v}|\phi_k\>  \notag\\
&+
4  \sum_{l=1}^p \lambda_l
  \<\phi_{l}|{  E^v} (  \sum_{k>p}  |\phi_{k}\> \<\phi_k|){  E^u}|\phi_l\>  \notag\\
=&
4   \sum_{k=1}^p\sum_{l=1}^p
\lambda_l  f(k,l)^2  
\<\phi_{k}|{  E^u}  |\phi_{l}\> 
\<\phi_l|{  E^v}|\phi_k\>  \notag\\
&+
4 \sum_{l=1}^p  \lambda_l
  \<\phi_{l}|{  E^v} ( I - \sum_{k=1}^p  |\phi_{k}\> \<\phi_k|) { E^u}|\phi_l\>\notag\\
=&
4  \sum_{k=1}^p\sum_{l=1}^p
\lambda_l (  f(k,l)^2 - 1)  
\<\phi_{k}|{  E^u}  |\phi_{l}\>
\<\phi_l|{  E^v}|\phi_k\>  \notag\\
&+
4 \sum_{l=1}^p  \lambda_l \<\phi_{l}|{  E^v} {  E^u}|\phi_l\>.\notag
\end{align}
\end{proof}

\

\subsection{Proof of Lemma \ref{L12}}
\label{app:SLD calculations}
The goal of this section is to show Lemma \ref{L12}, i.e., \eqref{full-expression}, which expresses 
$\tr[\bar L_u \bar \rho_\gamma \bar L_v]$ in terms of the eigenvectors and eigenvalues of $\rho_\gamma$ and the operators $E^u$ and $E^v$. 
To achieve this goal, 
we treat \eqref{full-expression} as 
a special case of Lemma \ref{lem:eval trace LjrhoLk}.
Then,
 we substituting $\rho$ for $\bar \rho_\gamma$, and note that $\bar \rho_\gamma$, has two non-zero eigenvalues, so we can set $p=2$. 
Then we get 
\begin{align}
&\tr (\bar L_u \bar\rho_\gamma \bar L_v )\notag\\
\stackrel{(a)}{=}&
4 \sum_{k=1}^2\sum_{l=1}^2
\lambda_l (f(k,l)^2-1) 
\<\phi_{k}|{  F^u}  |\phi_{l}\>
\<\phi_l|{  F^v}|\phi_k\>  \notag\\
&+
4 \sum_{l=1}^2  \lambda_l \<\phi_{l}|{  F^v} {  F^u}|\phi_l\>\\
\stackrel{(b)}{=}&
4 (f(1,1)^2-1) 
\lambda_1 
\<\phi_{1}|{  F^u}  |\phi_{1}\>
\<\phi_1|{  F^v}|\phi_1\>  \notag\\
&+
4 (f(2,2)^2-1)  \lambda_2 
\<\phi_{2}|{  F^u}  |\phi_{2}\>
\<\phi_2|{  F^v}|\phi_2\>  \notag\\
&+
4 g  \lambda_1 
\<\phi_{2}|{  F^u}  |\phi_{1}\>
\<\phi_1|{  F^v}|\phi_2\>  \notag\\
&+
4g \lambda_2 
\<\phi_{1}|{  F^u}  |\phi_{2}\>
\<\phi_2|{  F^v}|\phi_1\>  \notag\\
&+
4 \sum_{l=1}^2  \lambda_l \<\phi_{l}|{  F^v} {  F^u}|\phi_l\>\\
\stackrel{(c)}{=}&
-4
\lambda_1 
\<\phi_{1}|{  F^u}  |\phi_{1}\>
\<\phi_1|{  F^v}|\phi_1\>  \notag\\
&-
4  \lambda_2 
\<\phi_{2}|{  F^u}  |\phi_{2}\>
\<\phi_2|{  F^v}|\phi_2\>  \notag\\
&+
4 g  \lambda_1 
\<\phi_{2}|{  F^u}  |\phi_{1}\>
\<\phi_1|{  F^v}|\phi_2\>  \notag\\
&+
4g \lambda_2 
\<\phi_{1}|{  F^u}  |\phi_{2}\>
\<\phi_2|{  F^v}|\phi_1\>  \notag\\
&+
4 \sum_{l=1}^2  \lambda_l \<\phi_{l}|{  F^v} {  F^u}|\phi_l\>.\label{proof fullexp}
\end{align}
where in (a) we use $\bar \rho_\gamma = \lambda_1 |\phi_1\>\<\phi_1| + \lambda_2 |\phi_2\>\<\phi_2|$
and  replace $\rho$ with $\bar \rho_\gamma$, $L^u$ with $\bar L^u$, and $E^j$ with $F^j$ in the SLD equation $\frac{1}{2}(\rho L_j + L_j \rho ) = i \rho  E^j - i   E^j \rho$,
in (b) we use $f(1,2)^2=f(2,1)^2$ and $g=f(1,2)^2-1$,
in (c) we use $f(1,1) = f(2,2) = 0$. 
Since we show \eqref{proof fullexp}, \eqref{full-expression} follows.

\subsection{Proof of Lemma \ref{lem:E1E2-cross-products}}
\label{app:prove lemma 12}
Here, we prove Lemma \ref{lem:E1E2-cross-products}.
Note that  
\begin{align}
&\<\phi| F^1 | \epsilon\>\notag\\
\stackrel{(a)}{=}&
\frac{1}{n+1}
\sum_{u=0}^n 
 \<D^n_u|
(I-\frac{1}{2}{\tilde{A}_\gamma}^\dagger {\tilde{A}_\gamma})    {E^1} {\tilde{A}_\gamma}
 |D^n_u\>\notag\\
\stackrel{(b)}{=}&
\frac{\gamma}{n+1}
\sum_{u=1}^n 
 \<D^n_u|
(I-\frac{1}{2}{\tilde{A}_\gamma}^\dagger {\tilde{A}_\gamma})   {E^1} \sqrt{(n-u+1)u}
 |D^n_{u-1}\>\notag\\
\stackrel{(c)}{=}&
\frac{\gamma}{2(n+1)}
\sum_{u=1}^n 
 \<D^n_u|
(I-\frac{1}{2}{\tilde{A}_\gamma}^\dagger {\tilde{A}_\gamma})  (n-u+1)u
 |D^n_{u}\>\notag\\
\stackrel{(d)}{=}&
\frac{\gamma}{2(n+1)}
\sum_{u=1}^n 
 \<D^n_u|
(1-\frac{\gamma^2}{2} (n-u+1)u )  
(n-u+1)u
 |D^n_{u}\>\notag\\
\stackrel{(e)}{=}&
\frac{\gamma}{2(n+1)}
\sum_{u=1}^n 
(1-\frac{\gamma^2}{2} (n-u+1)u )  
(n-u+1)u
\notag\\
\stackrel{(f)}{=}&
 \frac{n\gamma}{12}(n+2) 
 -
 \frac{n\gamma^3}{120}(n+2)(n^2+2n+2).\notag
\end{align}
In (a) we use the definition of $|\phi\>$ and $|\epsilon\>$.
In (b) we use \eqref{tildeAgam sym}.
In (c) we use \eqref{E1 sym}.
In (d) we use \eqref{tildeAgam sym}.
In (e) we use the orthonormality of Dicke states.
In (f) we perform the summation over $u$.

We can similarly calculate $\<\phi| F^2 | \epsilon\>
$.
Furthermore, using the same ideas, we can calculate
\begin{align}
    \<\phi| F^3 |\phi\>
    &= 
    \<\Phi|F^3 |\Phi\>
    -  \<\Phi|\bar E^3 (A^\dagger A \otimes I_C) |\Phi\>
    + O(\gamma^4)
    \notag\\
    &=
   0 - \gamma^2
    \sum_{a=1}^n (n-a+1)a (n/2-a)  
    + O(\gamma^4)\notag\\
    &=
    \gamma^2 n(n+2)/12 + O(\gamma^4),
    \\
    \<\epsilon| F^3 |\epsilon\>
    &= 
    \frac{1}{n+1}\sum_{a=0}^{n-1}(n-a)(a+1)(n/2-a)
    \notag\\
    &
    =\gamma^2 n(n+2)/12.\notag
\end{align}
 
 \subsection{Proof of Lemma \ref{lem:commutator perturbation}}
 \label{app:commutator perturbation}
  Here we prove Lemma \ref{lem:commutator perturbation}.
First, we consider the spectral decomposition of $\rho_\gamma $ given by
 \begin{align}
\rho_\gamma = \sum_{j=1}^{(n+1)^2} \lambda'_j |\phi'_j\>\<\phi'_j|,\notag
 \end{align}
 where $\lambda'_j  \ge \lambda'_{j+1}$. 
 Recall that the spectral decomposition of $\bar \rho_\gamma$ is given by 
 \begin{align}
\bar \rho_\gamma =  \lambda_1 |\phi_1\>\<\phi_1|+   \lambda_2|\phi_2\>\<\phi_2|. \notag
 \end{align}

 \begin{lemma}\label{LL23}
For eigenvalues 
$\lambda_1, \lambda_2$ and
$\lambda_1', \lambda_2', \ldots, \lambda_{n+1}'$, we have
 \begin{align}
 \lambda'_1 - \lambda_1 &= O(\gamma^4),\quad
 \lambda'_2 - \lambda_2 = O(\gamma^4),\notag\\
 \lambda'_j &= O(\gamma^4),\quad \mbox{for all $j \ge 3$}\label{lam perturbations},
 \end{align}
and
\begin{align}
  \frac{(\lambda_2-\lambda_1)^2}{(\lambda_2 + \lambda_1)^2} - 1
&   =c_2 \gamma^2 + O(\gamma^4),
   \label{gval2} \\
 \frac{(\lambda'_2-\lambda'_1)^2}{(\lambda'_2 + \lambda'_1)^2} - 1
& = c_2 \gamma^2 + O(\gamma^4),
 \label{g12' value}  \\
   \frac{(\lambda'_k-\lambda'_1)^2}{(\lambda'_k + \lambda'_1)^2} - 1  
   &= O(\gamma^4), 
   \quad \mbox{for all $k \ge 3$,}
    \label{g1k' value} \\
   \frac{(\lambda'_k-\lambda'_2)^2}{(\lambda'_k + \lambda'_2)^2} - 1 
   & = O(\gamma^2), 
   \quad \mbox{for all $k \ge 3$,}
    \label{g2k' value}
\end{align} 
where $c_2 =  2 n(n+2)/3$.
\end{lemma}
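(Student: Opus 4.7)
\textbf{Proof plan for Lemma \ref{LL23}.} The strategy is to first show that $\bar\rho_\gamma$ is a good approximation of $\rho_\gamma$ at order $\gamma^2$, namely $\|\rho_\gamma-\bar\rho_\gamma\|=O(\gamma^4)$, and then to extract all four eigenvalue statements by combining this operator bound with Weyl's inequality and the closed-form expressions \eqref{pp}, \eqref{ee} for $\lambda_1,\lambda_2$.

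First I would expand $\rho_\gamma=(e^{\mathcal L_{0,\gamma}}\otimes\iota_C)(|\Phi\rangle\langle\Phi|)$ as a power series in $\mathcal L_{0,\gamma}$. Since the dissipator \eqref{BNJ} with $\theta=0$ carries one factor of $\tilde A_\gamma$ or $\tilde A_\gamma^\dagger\tilde A_\gamma$ on each side, every application of $\mathcal L_{0,\gamma}$ contributes a factor $O(\gamma^2)$ in operator norm; hence
\begin{align*}
\rho_\gamma=|\Phi\rangle\langle\Phi|+(\mathcal L_{0,\gamma}\otimes\iota_C)(|\Phi\rangle\langle\Phi|)+O(\gamma^4).
\end{align*}
A direct expansion of $\bar\rho_\gamma=|\phi\rangle\langle\phi|+|\epsilon\rangle\langle\epsilon|$ from \eqref{BYU1}--\eqref{BYU3} gives
\begin{align*}
\bar\rho_\gamma=|\Phi\rangle\langle\Phi|-\tfrac{1}{2}\{\tilde A_\gamma^\dagger\tilde A_\gamma\otimes I_C,|\Phi\rangle\langle\Phi|\}+(\tilde A_\gamma\otimes I_C)|\Phi\rangle\langle\Phi|(\tilde A_\gamma^\dagger\otimes I_C)+O(\gamma^4),
\end{align*}
which matches the first-order term above. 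Consequently $\rho_\gamma-\bar\rho_\gamma=O(\gamma^4)$.

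Next I would invoke Weyl's inequality for Hermitian matrices: with eigenvalues sorted in decreasing order, $|\lambda'_j-\mu_j|\le\|\rho_\gamma-\bar\rho_\gamma\|$, where $\mu_1=\lambda_1,\mu_2=\lambda_2,\mu_j=0$ for $j\ge 3$. This immediately yields \eqref{lam perturbations}. For \eqref{gval2}, I would simply substitute the expansions $\lambda_1=1-\tfrac{\gamma^2}{6}n(n+2)+O(\gamma^4)$ and $\lambda_2=\tfrac{\gamma^2}{6}n(n+2)$ from \eqref{pp}--\eqref{ee}, note that $\lambda_1+\lambda_2=1+O(\gamma^4)$ and $\lambda_1-\lambda_2=1-\tfrac{\gamma^2}{3}n(n+2)+O(\gamma^4)$, square the ratio, and read off the constant. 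Equation \eqref{g12' value} then follows from \eqref{gval2} together with the perturbation bounds $\lambda'_1-\lambda_1,\lambda'_2-\lambda_2=O(\gamma^4)$, since the denominator $(\lambda_1+\lambda_2)^2=1+O(\gamma^4)$ is bounded away from zero.

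For \eqref{g1k' value} and \eqref{g2k' value}, I would use the identity $\tfrac{(\lambda'_k-\lambda'_m)^2}{(\lambda'_k+\lambda'_m)^2}-1=-\tfrac{4\lambda'_k\lambda'_m}{(\lambda'_k+\lambda'_m)^2}$. For $m=1$, $\lambda'_1=1+O(\gamma^2)$ and $\lambda'_k=O(\gamma^4)$ give the ratio equal to $-4\lambda'_k(1+O(\gamma^2))=O(\gamma^4)$. For $m=2$, since $\lambda'_2\ge\lambda'_k$ (by the sorted ordering) and $\lambda'_2=\tfrac{\gamma^2}{6}n(n+2)+O(\gamma^4)$ is of order $\gamma^2$, the ratio is bounded in absolute value by $4\lambda'_k/\lambda'_2=O(\gamma^2)$. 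The only subtle point in the whole argument is keeping track of the $\gamma^4$ remainder when expanding $\bar\rho_\gamma$ through quadratic products of $\tilde A_\gamma^\dagger\tilde A_\gamma$; this is the main (though routine) obstacle, and I would handle it by collecting those terms explicitly using \eqref{tildeAgam sym} on the symmetric subspace before taking norms.
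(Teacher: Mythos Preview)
Your proposal is correct and follows essentially the same strategy as the paper: first establish the operator bound $\rho_\gamma-\bar\rho_\gamma=O(\gamma^4)$ (which the paper isolates as a separate Lemma~\ref{LL20}, with the same Dyson-series expansion you sketch), then read off the eigenvalue perturbations, and finally manipulate the four ratio expressions. Two minor differences in tooling: the paper appeals to the Gersgorin circle theorem for \eqref{lam perturbations} where you use Weyl's inequality (your choice is arguably more direct given the operator-norm bound), and for \eqref{g1k' value}--\eqref{g2k' value} you use the identity $\tfrac{(a-b)^2}{(a+b)^2}-1=-\tfrac{4ab}{(a+b)^2}$, which is a little cleaner than the paper's direct substitution of the eigenvalue asymptotics. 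For \eqref{g12' value} the paper writes out an explicit telescoping difference, but your one-line perturbation argument (numerator and denominator each shift by $O(\gamma^4)$, denominator bounded away from zero) is equivalent.
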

\begin{proof}
The relations \eqref{lam perturbations} follow from the Gersgorin circle theorem \cite{varga-GCT}, 

The relation \eqref{gval2} follows from \eqref{gval}.
Now, we have
\begin{align}
&\left( \frac{(\lambda'_2-\lambda'_1)^2}{(\lambda'_2 + \lambda'_1)^2} - 1\right)
-\left(\frac{(\lambda_2-\lambda_1)^2}{(\lambda_2 + \lambda_1)^2} - 1\right)
\notag\\
\stackrel{(a)}{=}&
 \frac{(\lambda'_2-\lambda'_1)^2}{(\lambda'_2 + \lambda'_1)^2} 
-\frac{(\lambda'_2-\lambda'_1)^2}{(\lambda_2 + \lambda_1)^2} 
+
 \frac{(\lambda'_2-\lambda'_1)^2}{(\lambda_2 + \lambda_1)^2} 
-\frac{(\lambda_2-\lambda_1)^2}{(\lambda_2 + \lambda_1)^2} 
\notag\\
\stackrel{(b)}{=}&
(\lambda'_2-\lambda'_1)^2 ( (\lambda'_2+\lambda'_1 )^{-2} - (\lambda_2+\lambda_1 )^{-2} )
\notag\\
&
+
( (\lambda'_2-\lambda'_1)^2  -  (\lambda_2-\lambda_1)^2) ( \lambda_2+\lambda_1 )^{-2}
\notag\\
\stackrel{(c)}{=}&
(\lambda'_2-\lambda'_1)^2 O(\gamma^4) 
+
( (\lambda'_2-\lambda'_1)^2  -  (\lambda_2-\lambda_1)^2) ( 1+ O(\gamma^4) )
\notag\\
\stackrel{(d)}{=}&
(1+O(\gamma^2)) O(\gamma^4) 
+
O(\gamma^4) ( 1+ O(\gamma^4) ) =   O(\gamma^4)  ,
\label{g12 perturb}
 \end{align}
 where in (a) we use a telescoping sum,
 in (b) we collect terms in the telescoping sum,
 in (c) we use $\lambda_1+ \lambda_2 = 1+O(\gamma^4)$,
 $\lambda'_1+ \lambda'_2 = 1+O(\gamma^4)$,
 which implies that 
 $(\lambda_1+ \lambda_2)^{-2} = 1+ O(\gamma^4)$
 and
 $(\lambda'_1+ \lambda'_2)^{-2} = 1+ O(\gamma^4)$,
 in (d) we use 
 $(\lambda'_2 - \lambda'_1)^2
 =(\lambda_2 - \lambda_1)^2 + O(\gamma^4)$ because of \eqref{lam perturbations}
 and also $\lambda'_2-\lambda_1' =2c_2\gamma^2-1 + O(\gamma^4)$.
Hence, from \eqref{gval2} and \eqref{g12 perturb}, we have
\eqref{g12' value}.  

Since $\lambda'_1 =  1- c_2\gamma^2 + O(\gamma^4)$ and $\lambda'_k = O(\gamma^4)$ for all $k\ge 3$, we have 
\eqref{g1k' value}.
Since $\lambda'_2 = c_2\gamma^2 + O(\gamma^4)$ and $\lambda'_k = O(\gamma^4)$ for all $k\ge 3$, we have 
\eqref{g2k' value}.
\end{proof}

We prepare several lemmas.

\begin{lemma}\label{LL20}
We have 
\begin{align}
\rho_\gamma- \bar \rho_\gamma =
O(\gamma^4) .\notag
\end{align}
\end{lemma}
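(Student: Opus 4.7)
The plan is to show that $\rho_\gamma$ and $\bar\rho_\gamma$ have identical Taylor expansions in $\gamma$ through order $\gamma^2$, so that their difference is automatically $O(\gamma^4)$. Two ingredients drive this: at the true parameter $\theta_0=(0,0,0)$ the Hamiltonian part of $\mathcal L_{\theta,\gamma}$ vanishes, and the jump operator $\tilde A_\gamma$ is exactly linear in $\gamma$.

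First, I would expand $\rho_\gamma$ using the series \eqref{NMR5}. With $H_{\theta_0}=0$, the generator reduces to the purely dissipative
\[
\mathcal L_{0,\gamma}(\rho)=\tilde A_\gamma\rho\tilde A_\gamma^\dagger-\tfrac12\bigl(\tilde A_\gamma^\dagger\tilde A_\gamma\,\rho+\rho\,\tilde A_\gamma^\dagger\tilde A_\gamma\bigr).
\]
Since $\|\tilde A_\gamma\|=O(\gamma)$ in our fixed-$n$ setting, submultiplicativity gives $\|\mathcal L_{0,\gamma}^k\|=O(\gamma^{2k})$, and consequently
\[
\rho_\gamma=|\Phi\rangle\langle\Phi|+(\mathcal L_{0,\gamma}\otimes\iota_C)(|\Phi\rangle\langle\Phi|)+O(\gamma^4).
\]

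Second, I would expand $\bar\rho_\gamma=|\phi\rangle\langle\phi|+|\epsilon\rangle\langle\epsilon|$ directly from \eqref{BYU1}--\eqref{BYU2}. Abbreviating $N:=\tilde A_\gamma\otimes I_C=O(\gamma)$ and $M:=N^\dagger N=O(\gamma^2)$, one obtains
\begin{align*}
|\phi\rangle\langle\phi|&=(I-\tfrac12 M)|\Phi\rangle\langle\Phi|(I-\tfrac12 M)\\
&=|\Phi\rangle\langle\Phi|-\tfrac12 M|\Phi\rangle\langle\Phi|-\tfrac12|\Phi\rangle\langle\Phi|M+O(\gamma^4),\\
|\epsilon\rangle\langle\epsilon|&=N|\Phi\rangle\langle\Phi|N^\dagger,
\end{align*}
where the discarded term $\tfrac14 M|\Phi\rangle\langle\Phi|M$ is $O(\gamma^4)$. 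Adding the two lines and comparing with the explicit form of $\mathcal L_{0,\gamma}$ shows that the right-hand side equals $|\Phi\rangle\langle\Phi|+(\mathcal L_{0,\gamma}\otimes\iota_C)(|\Phi\rangle\langle\Phi|)+O(\gamma^4)$, which is exactly the expansion of $\rho_\gamma$ above. Subtracting yields $\rho_\gamma-\bar\rho_\gamma=O(\gamma^4)$.

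There is no real obstacle here: conceptually $|\phi\rangle$ and $|\epsilon\rangle$ are precisely the second-order-accurate ``no-jump'' and ``one-jump'' Kraus images of $|\Phi\rangle$ under the collective amplitude-damping dynamics, so matching through $\gamma^2$ is automatic. The only item that needs a little care is the uniform remainder estimate $\|\mathcal L_{0,\gamma}^k\|\le C_{n,k}\gamma^{2k}$ used to truncate the exponential series, but this is a standard finite-dimensional operator-norm bound following from the linear scaling of $\tilde A_\gamma$ in $\gamma$.
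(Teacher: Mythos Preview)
Your proposal is correct and follows essentially the same approach as the paper: expand $\rho_\gamma=e^{\mathcal L_{0,\gamma}\otimes\iota_C}(|\Phi\rangle\langle\Phi|)$ to first order in $\mathcal L_{0,\gamma}$ (using that $\tilde A_\gamma$ is $O(\gamma)$ so higher powers contribute $O(\gamma^4)$), then recognise that $|\phi\rangle\langle\phi|+|\epsilon\rangle\langle\epsilon|$ reproduces exactly $|\Phi\rangle\langle\Phi|+(\mathcal L_{0,\gamma}\otimes\iota_C)(|\Phi\rangle\langle\Phi|)$ up to the $O(\gamma^4)$ cross term $\tfrac14 M|\Phi\rangle\langle\Phi|M$. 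The paper's write-up adds this cross term back in to complete the square before identifying the result with $\bar\rho_\gamma$, but the logic is the same as yours.
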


\begin{proof}
We have
\begin{align}
\rho_\gamma  
\stackrel{(a)}{=}&
(\Lambda_{0,\gamma} \otimes \iota_C )(|\Phi\>\<\Phi|)
\notag\\
\stackrel{(b)}{=}&
(e^{\mathcal L_{0}} \otimes \iota_C) (|\Phi\>\<\Phi|) \notag\\
\stackrel{(c)}{=}&
|\Phi\>\<\Phi|  + (\tilde{A}_\gamma \otimes I_C) |\Phi\>\<\Phi|    (\tilde{A}_\gamma \otimes I_C)^\dagger 
\notag\\
&
- \frac{1}{2}    (\tilde{A}_\gamma^\dagger \tilde{A}_\gamma \otimes I_C)   |\Phi\>\<\Phi|
- \frac{1}{2}  |\Phi\>\<\Phi| (\tilde{A}_\gamma^\dagger \tilde{A}_\gamma \otimes I_C)  
+ O(\gamma^4) \notag\\
\stackrel{(d)}{=}&
|\Phi\>\<\Phi|  + (\tilde{A}_\gamma \otimes I_C) |\Phi\>\<\Phi|    (\tilde{A}_\gamma \otimes I_C)^\dagger 
\notag\\
&
- \frac{1}{2}    (\tilde{A}_\gamma^\dagger \tilde{A}_\gamma \otimes I_C)   |\Phi\>\<\Phi|
- \frac{1}{2}  |\Phi\>\<\Phi| (\tilde{A}_\gamma^\dagger \tilde{A}_\gamma \otimes I_C)  
\notag\\
&+ \frac{1}{4}   (\tilde{A}_\gamma^\dagger \tilde{A}_\gamma \otimes I_C) 
 |\Phi\>\<\Phi| (\tilde{A}_\gamma^\dagger \tilde{A}_\gamma \otimes I_C)  
+ O(\gamma^4) \notag\\
\stackrel{(e)}{=}&
 |\phi\>\<\phi| + 
|\epsilon\>\<\epsilon| +  O(\gamma^4) 
\stackrel{(f)}{=}
\bar \rho_\gamma + O(\gamma^4),\label{HH2}
\end{align}
where the derivation of each step are given as follows.
Step $(a)$ follows from the definition of $\rho_\gamma  $.
Step $(b)$ follows from \eqref{NMR5}.
Step $(c)$ follows from \eqref{BNJ} and \eqref{NMR5}.
Step $(d)$ follows from the fact that
$  (\tilde{A}_\gamma^\dagger \tilde{A}_\gamma \otimes I_C) 
 |\Phi\>\<\Phi| (\tilde{A}_\gamma^\dagger \tilde{A}_\gamma \otimes I_C)  $ is $O(\gamma^4)$.
Step  $(e)$ follows from \eqref{BYU1} and \eqref{BYU2}.
Step  $(f)$ follows from \eqref{BYU3}.
\end{proof}

\if0
 Now given a complex vector ${\bf v} = (v_1,\dots, v_d) $, for any $q\ge 1$, let us define its $q$-norm as $\|{\bf v}\|_q =(\sum_{j=1}^d |v_j|^q)^{1/q}$, and define $\|{\bf v}\|_q = \lim_{q\to \infty}\|{\bf v}\|_q$.
For any unitary matrix $U$, we have $\| U{\bf v}\|_q = \|{\bf v}\|_q$, and hence the value of the $q$-norm of ${\bf v}$ does not depend on the choice of basis.
For our application, we will use the basis $\{ |D^n_u> \otimes |D^n_v> ,
 u,v=0,\dots, n\}.$
\fi

 Next, we have the perturbation bound for the eigenvectors of $\rho_\gamma$.

 \begin{lemma}\label{lem:eigv perturbations}
 \begin{align}
 \ \| |\phi'_1  \> \<\phi'_1 |- |\phi_1  \> \<\phi_1 | \|_1 &= O(\gamma^4)\label{a1 perturb}\\
 \ \| |\phi'_2  \> \<\phi'_2 |- |\phi_2  \> \<\phi_2 | \|_1
 &= O(\gamma^2) \label{a2 perturb}.
 \end{align}
 \end{lemma}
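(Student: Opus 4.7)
The plan is to apply the Davis--Kahan sin--$\Theta$ theorem using the operator-norm estimate $\|\rho_\gamma - \bar\rho_\gamma\|_\infty = O(\gamma^4)$ from Lemma \ref{LL20}, together with the spectral-gap information already collected in Lemma \ref{LL23}. For any two rank-one orthogonal projectors $P = |\phi\>\<\phi|$ and $\tilde P = |\tilde\phi\>\<\tilde\phi|$ on a Hilbert space, the difference $P - \tilde P$ has rank at most $2$, so
\begin{align*}
\|P - \tilde P\|_1 \le 2\|P - \tilde P\|_\infty .
\end{align*}
Hence it suffices to prove the claimed bounds in operator norm, and then promote to trace norm at no extra cost.

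For \eqref{a1 perturb}, I will invoke Davis--Kahan applied to the one-dimensional spectral subspace corresponding to the top eigenvalue. The unperturbed top eigenvalue $\lambda_1 = 1 - \gamma^2 n(n+2)/6 + O(\gamma^4)$ is separated from every other eigenvalue of $\bar\rho_\gamma$ (namely $\lambda_2 = \Theta(\gamma^2)$ and $0$ of multiplicity $(n+1)^2 - 2$) by a gap of $\Theta(1)$. By Lemma \ref{LL23} the perturbed top eigenvalue $\lambda_1'$ is also within $O(\gamma^4)$ of $1$, and all other $\lambda_j'$ are $O(\gamma^2)$ or smaller, so the gap condition survives the perturbation. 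Davis--Kahan then yields
\begin{align*}
\bigl\| |\phi_1'\>\<\phi_1'| - |\phi_1\>\<\phi_1| \bigr\|_\infty
\;\le\; \frac{\|\rho_\gamma - \bar\rho_\gamma\|_\infty}{\Theta(1)}
\;=\; O(\gamma^4) ,
\end{align*}
which combined with the rank-$2$ reduction gives \eqref{a1 perturb}.

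For \eqref{a2 perturb}, the situation is more delicate because the relevant eigenvalue $\lambda_2 = \Theta(\gamma^2)$ is itself small. I will again apply Davis--Kahan, now isolating $\lambda_2$ from the spectrum above (where $\lambda_1 = 1 - O(\gamma^2)$, a gap of $\Theta(1)$) and from the spectrum below (where $\bar\rho_\gamma$ has $0$, a gap of $\Theta(\gamma^2)$). From Lemma \ref{LL23}, the perturbed eigenvalue $\lambda_2'$ agrees with $\lambda_2$ up to $O(\gamma^4)$, while $\lambda_j' = O(\gamma^4)$ for $j \ge 3$, so the binding lower gap between $\lambda_2'$ and the rest of the spectrum of $\rho_\gamma$ remains $\Theta(\gamma^2)$. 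Davis--Kahan therefore gives
\begin{align*}
\bigl\| |\phi_2'\>\<\phi_2'| - |\phi_2\>\<\phi_2| \bigr\|_\infty
\;\le\; \frac{\|\rho_\gamma - \bar\rho_\gamma\|_\infty}{\Theta(\gamma^2)}
\;=\; O(\gamma^2),
\end{align*}
and the same rank-$2$ bound transfers this to the trace norm.

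The only real obstacle is making sure the gap estimates used in Davis--Kahan are uniform in the relevant regime: the $\lambda_2$-case is where the perturbation size $O(\gamma^4)$ must be small enough relative to the shrinking gap $\Theta(\gamma^2)$ to keep $\lambda_2'$ isolated from $\{\lambda_j'\}_{j\ge 3}$. This is precisely what \eqref{lam perturbations} in Lemma \ref{LL23} guarantees for small enough $\gamma$, so the argument is essentially bookkeeping once the perturbation and gap orders are recorded. No further spectral data is needed beyond what Lemmas \ref{LL20} and \ref{LL23} already supply.
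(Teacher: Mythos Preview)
Your argument is correct, and for \eqref{a2 perturb} it is essentially the paper's argument rephrased in Davis--Kahan language: the paper invokes an eigenvector perturbation bound with the minimum spectral gap $g(\bar\rho_\gamma)=\Theta(\gamma^2)$ and the perturbation $\|\rho_\gamma-\bar\rho_\gamma\|=O(\gamma^4)$ from Lemma~\ref{LL20} to get $O(\gamma^2)$.

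For \eqref{a1 perturb}, however, your route is genuinely different and considerably shorter. The perturbation bound the paper cites uses the \emph{minimum} consecutive gap $g(A)=\min_j(a_j-a_{j+1})$, which here is only $\Theta(\gamma^2)$; applied directly to $\bar\rho_\gamma$ this would give only $O(\gamma^2)$ for the top eigenvector. To recover $O(\gamma^4)$ the paper therefore builds a next-order approximation $\rho_\gamma^{(2)}=|\phi^{(2)}\>\<\phi^{(2)}|+|\epsilon^{(2)}\>\<\epsilon^{(2)}|$ with $\|\rho_\gamma-\rho_\gamma^{(2)}\|=O(\gamma^6)$, applies the same bound to get $\||\phi_1^{(2)}\>-|\phi_1'\>\|=O(\gamma^6/\gamma^2)=O(\gamma^4)$, and then separately verifies by explicit computation that $\||\phi_1^{(2)}\>-|\phi_1\>\|=O(\gamma^4)$, combining the two via the triangle inequality. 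Your Davis--Kahan argument sidesteps all of this by exploiting the \emph{local} gap $\lambda_1-\lambda_2=\Theta(1)$ directly, which immediately gives $O(\gamma^4)/\Theta(1)=O(\gamma^4)$. The paper's approach has the minor virtue of producing an explicit higher-order eigenvector approximation, but that approximation is not used elsewhere, so your argument is simply the cleaner one.
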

 \begin{proof}\par
 
{\bf Step 1:\quad } 
First, we show \eqref{a2 perturb}.
Let us use the following argument according to \cite[Eq (7)]{fan2018eigenvector}.
 Let $A$ be a Hermitian matrix with nonzero eigenvalues $a_1, \dots
a_r$ with corresponding eigenvectors $v_1, \dots , v_r$, where $a_j
 \ge a_{j+1}$. Define the gap of $A$ as
\begin{align}
 g(A) = \min_{j=1,\dots, r} {a_j - a_{j+1}}\notag
\end{align}
Let $A+E$ be a Hermitian matrix with eigenvalues $\mu_j$, and eigenvectors $w_j$,
 where $\mu_j \ge  \mu_{j+1}.$
 Then for all $j=1,\dots ,r$, we have
\begin{align}
\| |v_j\> -| w_j\> \| \le  2 \sqrt 2 \| E \|_2 / g(A),\label{BAY3}
\end{align}
where $\| \cdot \|_2$ is the Hilbert-Schmidt norm. 
Now the smallest non-zero eigenvalue of $\rho_\gamma $ is $c\gamma^2$ for some $c>  0$ (see \eqref{ee}) and 
Lemma \ref{LL20}, i.e., 
$\rho_\gamma - \bar \rho_\gamma =O(\gamma^4)$.
Hence, 
\begin{align}
&\| |\phi_2\rangle \langle \phi_2|-
|\phi'_2\rangle \langle \phi'_2|\|_1 
\le \sqrt{ 1-|\langle \phi'_2|\phi_2\rangle|^2} \notag\\
\le & \| |\phi_2\rangle-|\phi'_2\rangle\|\notag \\
\le & 2 \sqrt 2 \| \rho_\gamma - \bar \rho_\gamma \|_2 / g(\rho_\gamma) 
=O(\gamma^4 / \gamma^2)  = O(\gamma^2),\notag
\end{align}
\if0
Hence it follows from \cite[Eq (7)]{fan2018eigenvector} (and noting that the $\infty$-norm is no greater than the 2-norm) that 
$ \| |\phi'_2 \> - |\phi_2\> \|_\infty \le 
O(\gamma^4 / \gamma^2)  = O(\gamma^2)$, 
\fi
which shows \eqref{a2 perturb}.

{\bf Step 2:\quad } 
Toward the proof of \eqref{a1 perturb}, 
for simplicity in notation,
we define $B \coloneqq \widetilde A_\gamma \otimes I_C$ and
 \begin{align}
 |\phi^{(2)}\> &\coloneqq  (I_{AC} - \frac{1}{2} B^\dagger B + \frac{1}{8}B^\dagger B B^\dagger B)|\Phi\>\notag\\
 |\epsilon^{(2)}\> &\coloneqq  
 (B - \frac{1}{4} BB^\dagger B - \frac{1}{4} B^\dagger BB )|\Phi\>.\notag
 \end{align}
The vectors $ |\phi^{(2)}\>$ and $ |\epsilon^{(2)}\>$ are orthogonal for the same reason that 
$ |\phi\>$ and $ |\epsilon\>$ are orthogonal.
Let us define 
 \begin{align}
  \rho_\gamma^{(2)} :=  |\phi^{(2)}\> \<\phi^{(2)}|
  + |\epsilon^{(2)}\> \<\epsilon^{(2)}|.\notag
 \end{align}
 The aim of this step is to show the equation 
 \begin{align}
\rho_\gamma =  \rho_\gamma^{(2)} + O(\gamma^6).\label{NM76}
 \end{align}

Since $\theta = 0$, we have 
 \begin{align}
 \rho_\gamma = |\Phi\>\<\Phi| + \sum_{j=1}^\infty \frac{1}{j!}(\mathcal L_{\theta,\gamma} \otimes \iota_C)^j(|\Phi\>\<\Phi|).\notag
 \end{align}
 We note that 
 \begin{align}
 \rho_\gamma &= |\Phi\>\<\Phi| + \sum_{j=2}^\infty \frac{1}{j!}(\mathcal L_{\theta,\gamma} \otimes \iota_C)^j(|\Phi\>\<\Phi|) + O(\gamma^6).
\label{NH13}
 \end{align}

Now
 \begin{align}
& (\mathcal L_{\theta,\gamma} \otimes \iota_C)^2(|\Phi\>\<\Phi|)\notag\\
 =& BB  |\Phi\>\<\Phi|  B^\dagger B^\dagger 
 - \frac{1}{2}(BB^\dagger B|\Phi\>\<\Phi| B^\dagger + B |\Phi\>\<\Phi| B^\dagger B B^\dagger) \notag\\
&- \frac{1}{2} B^\dagger B B |\Phi\>\<\Phi| B^\dagger + \frac{1}{4} B^\dagger B B^\dagger B |\Phi\>\<\Phi| +  \frac{1}{4} B^\dagger B |\Phi\>\<\Phi| B^\dagger B
 \notag\\
 & 
 -\frac{1}{2} B |\Phi\>\<\Phi| B^\dagger B^\dagger B + \frac{1}{4} B^\dagger B |\Phi\>\<\Phi| B^\dagger B + \frac{1}{4} |\Phi\>\<\Phi|  B^\dagger B B^\dagger B
 \notag\\
=&  |\phi^{(2)}\> \<\phi^{(2)}|
  + |\epsilon^{(2)}\> \<\epsilon^{(2)}|=
\rho_\gamma^{(2)} . \label{NH12}
 \end{align}
The combination of \eqref{NH13} and \eqref{NH12}
implies \eqref{NM76}.

{\bf Step 3:\quad } 
Now, define 
 \begin{align}
 \lambda_1 ^{(2)}  &\coloneqq   \<\phi^{(2)} |\phi^{(2)}\> \notag\\
 \lambda_2 ^{(2)} & \coloneqq  \<\epsilon^{(2)} |\epsilon^{(2)}\>\notag\\
|\phi^{(2)}_1\> & \coloneqq |\phi^{(2)} \>/ \sqrt{ \lambda_1 ^{(2)} } \notag\\
|\phi^{(2)}_2\> & \coloneqq |\epsilon^{(2)} \>/ \sqrt{ \lambda_2 ^{(2)} } ,\notag
 \end{align}
The aim of this step is to show
\begin{align}
\| |\phi^{(2)}_1\> - |\phi _1\>\|
&=O(\gamma^4)  .\label{phi2 phi1}
\end{align}

 Now we have
 \begin{align}
\lambda_1 / \lambda_1^{(2)} 
&= \lambda_1 / ( \lambda_1 +  \lambda_1^{(2)}  - \lambda_1) \notag\\
&= (1+ ( \lambda_1^{(2)}  - \lambda_1)/\lambda_1 )^{-1}\notag\\
&\stackrel{(a)}{=} (1+  O(\gamma^4) (1+O(\gamma^2) )^{-1}\notag\\
&= (1+ O(\gamma^4  ))^{-1}\notag\\
&= 1+ O(\gamma^4)^{-1}\label{lam12 ratio}
 \end{align}
where in (a) we use $( \lambda_1^{(2)}  - \lambda_1) = O(\gamma^4)$ and $\lambda_1 = 1 + O(\gamma^2)$.

Since $|\phi_1\> = |\phi\>/\sqrt{\lambda_1}$
where $|\phi\>$ is defined in \eqref{BYU1},
we see that 
\begin{align}
\sqrt{\lambda_1}   |\phi_1\> &= 
( (I_A  - \frac{1}{2}  {\tilde{A}_\gamma} ^\dagger{\tilde{A}_\gamma})\otimes I_C)
|\Phi\>
\notag\\
&\stackrel{(a)}{=}
 (I_A\otimes I_C  - \frac{1}{2}  (\tilde{A}_\gamma \otimes I_C) ^\dagger
(\tilde{A}_\gamma \otimes I_C))
|\Phi\>\notag\\
&\stackrel{(b)}{=}
( I_{AC} - \frac{1}{2} B ^\dagger B)
|\Phi\>,
\label{phibb}
\end{align}
where
(a) is because we distribute the tensor product,
and (b) is because 
$I_{A} \otimes I_C = I_{AC}$ 
and 
$B =\tilde A_\gamma \otimes I_C$.

Now we can write 
 \begin{align}
|\phi^{(2)}_1\> &= 
(\lambda_1^{(2)})^{-1/2}
 (I_{AC} - \frac{1}{2} B^\dagger B + \frac{1}{8}B^\dagger B B^\dagger B)|\Phi\>
 \notag\\
 &\stackrel{(a)}{=}
 (\lambda_1^{(2)})^{-1/2}
 \sqrt{\lambda_1}|\phi_1\> + (\lambda_1^{(2)})^{-1/2} \frac{1}{8}B^\dagger B B^\dagger B|\Phi\>   \notag\\
 &\stackrel{(b)}{=}
(  1+ O(\gamma^4)  ) |\phi_1\> + (1+ O(\gamma^2)) \frac{1}{8}B^\dagger B B^\dagger B|\Phi\>  \notag 
 \end{align}
 where in (a) we 
use \eqref{phibb} which tells us that 
$\sqrt{\lambda_1} |\phi_1\> = (I_{AC}  - \frac{1}{2}B^\dagger B)|\Phi\>$, 
and
in (b) we use \eqref{lam12 ratio} and 
$(\lambda_1^{(2)})^{-1/2} =  1+ O(\gamma^2)$.
Hence 
\begin{align}
\| |\phi^{(2)}_1\> - |\phi _1\>\|
&=\| O(\gamma^4)    |\phi_1\> + (1+ O(\gamma^2)) \frac{1}{8}B^\dagger B B^\dagger B|\Phi\>\|\notag\\
&=O(\gamma^4)  .\notag
\end{align}

{\bf Step 4:\quad } 
The aim of this step is to derive \eqref{a1 perturb}
by using \eqref{NM76} and \eqref{phi2 phi1}.

First, we notice the spectral decomposition 
  \begin{align}
  \rho_\gamma^{(2)} =  \lambda_1 ^{(2)} |\phi^{(2)}_1\> \<\phi^{(2)}_1| 
  + \lambda_2 ^{(2)} |\phi^{(2)}_2\> \<\phi^{(2)}_2|.\notag
 \end{align}

Since the smallest non-zero eigenvalue of $\rho_\gamma^{(2)} = c\gamma^2 + O(\gamma^4)$ for some positive $c$, and 
\eqref{NM76},
we can use \eqref{BAY3} to find that 
\begin{align}
&\| |\phi^{(2)}_1\> - |\phi' _1\>\| 
\le  2 \sqrt 2 \| \rho_\gamma - \bar \rho_\gamma^{(2)} \|_2 / 
g(\rho_\gamma^{(2)}) \notag\\
= &O (\gamma^6 ) / ( c\gamma^2 + O(\gamma^4)  ) = O(\gamma^4) \label{phi2 alpha}.
\end{align}
Hence combining \eqref{phi2 phi1} with \eqref{phi2 alpha} with the triangle inequality gives
\begin{align}
& \| |\phi_1\>\< \phi_1| - |\phi' _1\>\< \phi' _1|\|_1
\le \| |\phi_1\> - |\phi' _1\>\| \notag\\
\le & \| |\phi_1^{(2)}\> - |\phi' _1\>\| + \| |\phi_1^{(2)}\> - |\phi _1\>\| 
= O(\gamma^4).\notag
\end{align}
  \end{proof}

Next, using Lemmas \ref{lem:eval trace LjrhoLk}
and \ref{LL20}, and defining $Q: = {  F^v} {  F^u}-F^u F^v$, we prove the following lemma.
\begin{lemma}\label{LL22}
Let $u,v=1,2,3$. Then we have
\begin{align}
&\tr (L_u \rho_\gamma L_v ) - \tr (L_v \rho_\gamma L_u )
\notag\\
=&
4 \sum_{\substack{1\le k,l\le p'\\ k\neq l}}
\lambda'_l \left(  \frac{(\lambda'_k-\lambda'_l)^2}{(\lambda'_k + \lambda'_l)^2} - 1\right) 
\<\phi'_{k}|{  F^u}  |\phi'_{l}\>
\<\phi'_l|{  F^v}|\phi'_k\>  \notag\\
&+
4 \sum_{l=1}^{p'}  \lambda'_l \<\phi'_{l}|Q|\phi'_l\>,\label{true commutator}
\end{align}  
and
\begin{align}  
|\<\phi'_{1}|Q|\phi'_1\> -\<\phi_{1}|Q|\phi_1\>|
&\le  O(\gamma^4),\label{BN1}\\
|\<\phi'_{2}|Q|\phi'_2\> -\<\phi_{2}|Q|\phi_2\>|
&\le O(\gamma^2),\label{BN2} \\
4 \sum_{l=1}^{p'}  \lambda'_l \<\phi'_{l}|Q|\phi'_l\>
-
4 \sum_{l=1}^{2}  \lambda_l \<\phi_{l}|Q|\phi_l\> 
&=  O(\gamma^4).
   \label{C75} 
\end{align}  
For $(k,l)  \in \{(1,2), (2,1) \}$, we have
\begin{align}  
&|\<\phi'_{k}|{  F^u}  |\phi'_{l}\>
\<\phi'_l|{  F^v}|\phi'_k\> 
- 
\<\phi_{k}|{  F^u}  |\phi_{l}\>
\<\phi_l|{  F^v}|\phi_k\> |\le
O(\gamma^2).
\label{innerprod perturb}
\end{align}  
\end{lemma}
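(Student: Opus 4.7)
The plan is to prove each of the four parts of Lemma~\ref{LL22} separately, combining Lemma~\ref{lem:eval trace LjrhoLk} with the eigenvalue and eigenvector perturbation estimates of Lemmas~\ref{LL23} and \ref{lem:eigv perturbations}. For the commutator identity \eqref{true commutator}, I would apply Lemma~\ref{lem:eval trace LjrhoLk} twice to the spectral decomposition $\rho_\gamma = \sum_{j=1}^{p'} \lambda'_j |\phi'_j\>\<\phi'_j|$: once with the ordered pair $(u,v)$ and once with $(v,u)$, then subtract. The operator-product term becomes $4\sum_l \lambda'_l \<\phi'_l|(F^v F^u - F^u F^v)|\phi'_l\> = 4\sum_l \lambda'_l \<\phi'_l|Q|\phi'_l\>$, which is exactly the second summand of the claim. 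For the double sum, the diagonal $k=l$ pieces cancel in the difference (their summand is symmetric under $u \leftrightarrow v$), so only $k\neq l$ terms survive; a single dummy-index relabelling in one of the two sums then consolidates the result into the stated single-weight form.

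For \eqref{BN1} and \eqref{BN2}, I would use the identity
\[
\<\phi'_k|Q|\phi'_k\> - \<\phi_k|Q|\phi_k\> = \tr\bigl[Q\,(|\phi'_k\>\<\phi'_k|-|\phi_k\>\<\phi_k|)\bigr]
\]
together with H\"older's inequality $|\tr(AB)|\le \|A\|_\infty \|B\|_1$. Since $Q$ has operator norm bounded uniformly in $\gamma$, the trace-norm bounds of Lemma~\ref{lem:eigv perturbations} immediately yield $O(\gamma^4)$ for $k=1$ and $O(\gamma^2)$ for $k=2$.

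For \eqref{C75}, I would partition the sum into indices $l\in\{1,2\}$ and $l\ge 3$. By Lemma~\ref{LL23}, $\lambda'_l = O(\gamma^4)$ for $l\ge 3$, so those contributions are directly $O(\gamma^4)$. For $l=1,2$ the decomposition
\[
\lambda'_l\<\phi'_l|Q|\phi'_l\> - \lambda_l\<\phi_l|Q|\phi_l\> = (\lambda'_l-\lambda_l)\<\phi'_l|Q|\phi'_l\> + \lambda_l\bigl(\<\phi'_l|Q|\phi'_l\>-\<\phi_l|Q|\phi_l\>\bigr)
\]
reduces the estimate to a product of eigenvalue and matrix-element perturbations. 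For $l=1$ both summands are $O(\gamma^4)$ using Lemma~\ref{LL23} and \eqref{BN1}; for $l=2$ the key cancellation is that $\lambda_2 = O(\gamma^2)$ multiplies the $O(\gamma^2)$ bound of \eqref{BN2} to give $O(\gamma^4)$. Finally, for \eqref{innerprod perturb}, I would rewrite $\<\phi'_k|F^u|\phi'_l\>\<\phi'_l|F^v|\phi'_k\> = \tr[F^u |\phi'_l\>\<\phi'_l| F^v |\phi'_k\>\<\phi'_k|]$ and telescope by swapping the primed projectors for unprimed ones one at a time. Each telescoping step is bounded via H\"older by $\|F^u\|_\infty \|F^v\|_\infty$ times the trace-norm distance of a single projector pair; for $(k,l)\in\{(1,2),(2,1)\}$ one distance is $O(\gamma^4)$ and the other $O(\gamma^2)$ by Lemma~\ref{lem:eigv perturbations}, giving the claimed $O(\gamma^2)$.

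The main obstacle is the first part: deriving \eqref{true commutator} in the displayed form requires a careful index-swap in the difference of the two Lemma~\ref{lem:eval trace LjrhoLk} expansions in order to consolidate mixed weights of the type $\lambda'_l$ and $\lambda'_k$ into a single-weight sum, and to identify the remaining piece as $\<\phi'_l|Q|\phi'_l\>$. Once that identity is established, the remaining three bounds reduce to routine H\"older and telescoping arguments, provided one keeps careful track of the differing perturbation scales ($O(\gamma^4)$ for $|\phi_1\>$ and $O(\gamma^2)$ for $|\phi_2\>$), which is precisely what is needed for the $\lambda_2\cdot O(\gamma^2) = O(\gamma^4)$ cancellation in \eqref{C75}.
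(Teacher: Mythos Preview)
Your proposal is correct and follows essentially the same route as the paper: Lemma~\ref{lem:eval trace LjrhoLk} applied to the spectral decomposition of $\rho_\gamma$ gives \eqref{true commutator} after subtracting the $(u,v)$ and $(v,u)$ versions; the H\"older bound $|\tr[Q(|\phi'_k\>\<\phi'_k|-|\phi_k\>\<\phi_k|)]|\le \|Q\|\,\||\phi'_k\>\<\phi'_k|-|\phi_k\>\<\phi_k|\|_1$ combined with Lemma~\ref{lem:eigv perturbations} yields \eqref{BN1}--\eqref{BN2}; splitting off $l\ge 3$ via $\lambda'_l=O(\gamma^4)$ and then decomposing the $l=1,2$ terms exactly as you describe gives \eqref{C75}; and the telescoping argument with H\"older for \eqref{innerprod perturb} is precisely what the paper does. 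Your concern about the ``consolidation'' into single-weight form is the one place where the paper is terser than you are---it simply asserts \eqref{true commutator} directly from \eqref{C70} without spelling out the relabelling---so you are if anything being more careful than the paper at that step.
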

\begin{proof}
Now from Lemma \ref{lem:eval trace LjrhoLk} 
it follows that 
\begin{align}
&\tr (L_u \rho_\gamma L_v )\notag\\
=&
4 \sum_{k=1}^{p'}\sum_{l=1}^{p'}
\lambda'_l \left(  \frac{(\lambda'_k-\lambda'_l)^2}{(\lambda'_k + \lambda'_l)^2} - 1\right) 
\<\phi'_{k}|{  F^u}  |\phi'_{l}\>
\<\phi'_l|{  F^v}|\phi'_k\>  \notag\\
&+
4 \sum_{l=1}^{p'}  \lambda'_l \<\phi'_{l}|{  F^v} {  F^u}|\phi'_l\>.
\label{C70}
\end{align}
where $p'$ is the number of non-zero eigenvalues of $\rho_\gamma$.
Then from \eqref{C70}, we get \eqref{true commutator} as
\begin{align}
&\tr (L_u \rho_\gamma L_v ) - \tr (L_v \rho_\gamma L_u )
\notag\\
=&
4 \sum_{\substack{1\le k,l\le p'\\ k\neq l}}
\lambda'_l \left(  \frac{(\lambda'_k-\lambda'_l)^2}{(\lambda'_k + \lambda'_l)^2} - 1\right) 
\<\phi'_{k}|{  F^u}  |\phi'_{l}\>
\<\phi'_l|{  F^v}|\phi'_k\>  \notag\\
&+
4 \sum_{l=1}^{p'}  \lambda'_l \<\phi'_{l}|({  F^v} {  F^u}-F^u F^v)|\phi'_l\>.\notag
\end{align}

Lemma \ref{lem:eigv perturbations} implies \eqref{BN1} and \eqref{BN2} as
\begin{align}
|\<\phi'_{1}|Q|\phi'_1\> -\<\phi_{1}|Q|\phi_1\>|
&\le \|Q\|  \cdot \| |\phi'_1  \> \<\phi'_1 |- |\phi_1  \> \<\phi_1 | \|_1 \notag\\
&=\|Q\|   O(\gamma^4)=  O(\gamma^4)
\\
|\<\phi'_{2}|Q|\phi'_2\> -\<\phi_{2}|Q|\phi_2\>|
&\le \|Q\|  \cdot \| |\phi'_2  \> \<\phi'_2 |- |\phi_2  \> \<\phi_2 | \|_1\notag\\
&=\|Q\|   O(\gamma^2)=O(\gamma^2).\notag
\end{align}

From \eqref{lam perturbations} and 
it follows that 
\begin{align}
&4 \sum_{l=1}^{p'}  \lambda'_l \<\phi'_{l}|Q|\phi'_l\>
-
4 \sum_{l=1}^{2}  \lambda_l \<\phi_{l}|Q|\phi_l\>\notag\\
\stackrel{(a)}{=} &
4    \lambda'_1 \<\phi'_{1}|Q|\phi'_1\>
-
4 \lambda_l \<\phi_{1}|Q|\phi_1\>
\notag\\&
+
4 \sum_{l=2}^{p'}  \lambda'_l \<\phi'_{l}|Q|\phi'_l\>
-
4    \lambda_2 \<\phi_2|Q|\phi_2\>
\notag\\
\stackrel{(b)}{=} &
4    \lambda'_1 \<\phi'_{1}|Q|\phi'_1\>
-
4 \lambda_1 \<\phi_{1}|Q|\phi_1\>\notag\\
&+
4  \lambda'_2 \<\phi'_{2}|Q|\phi'_2\>
-
4   \lambda_2 \<\phi_{2}|Q|\phi_2\>+O(\gamma^4)
\notag\\ 
\stackrel{(c)}{=} &
4    \lambda'_1 \<\phi_{1}|Q|\phi_1\> + \lambda'_1 O(\gamma^4)
-
4 \lambda_1 \<\phi_{1}|Q|\phi_1\>\notag\\
&+
4    \lambda'_2 \<\phi_2|Q|\phi_2\> + \lambda'_2 O(\gamma^2)
-
4   \lambda_2 \<\phi_2|Q|\phi_2\>
\notag\\ 
\stackrel{(d)}{=} &
4  (  \lambda_1 + O(\gamma^4)) \<\phi_{1}|Q|\phi_1\> + O(\gamma^4)
-
4 \lambda_1 \<\phi_{1}|Q|\phi_1\>\notag\\
&+
4    (\lambda_2 + O(\gamma^4) )\<\phi_2|Q|\phi_2\> + O(\gamma^2)
-
4    \lambda_2 \<\phi_2|Q|\phi_2\> \notag
\end{align}
where in (a) we separate terms in the summation, 
in (b) we use \eqref{lam perturbations} which means that $\lambda'_l = O(\gamma^4)$ when $l \ge 3$,
in (c) we use 
\eqref{BN1} and \eqref{BN2},
in (d) we use \eqref{lam perturbations}.
Simplifying the above, we find \eqref{C75}
\if0
\begin{align}
&4 \sum_{l=1}^{p'}  \lambda'_l \<\phi'_{l}|Q|\phi'_l\>
-
4 \sum_{l=1}^{2}  \lambda_l \<\phi_{l}|Q|\phi_l\> =  O(\gamma^4).\notag
\end{align}  
\fi

Next note that for $(k,l)  \in \{(1,2), (2,1) \}$, we have
\begin{align}
&|\<\phi'_{k}|{  F^u}  |\phi'_{l}\>
\<\phi'_l|{  F^v}|\phi'_k\> 
- 
\<\phi_{k}|{  F^u}  |\phi_{l}\>
\<\phi_l|{  F^v}|\phi_k\> |\notag\\
\stackrel{(a)}{=}&
|\tr( {  F^u}  |\phi'_{l}\>
\<\phi'_l|{  F^v}|\phi'_k\> \<\phi'_{k}|)
- 
\tr( {  F^u}  |\phi_{l}\>
\<\phi_l|{  F^v}|\phi_k\>\<\phi_{k}| )|
\notag\\
\stackrel{(b)}{=}&
|\tr( {  F^u}  |\phi'_{l}\>
\<\phi'_l|{  F^v}|\phi'_k\> \<\phi'_{k}|)
- 
\tr( {  F^u}  |\phi'_{l}\>
\<\phi'_l|{  F^v}|\phi_k\>\<\phi_{k}| )|
\notag\\
&
+
|\tr( {  F^u}  |\phi'_{l}\>
\<\phi'_l|{  F^v}|\phi_k\> \<\phi_{k}|)
- 
\tr( {  F^u}  |\phi_{l}\>
\<\phi_l|{  F^v}|\phi_k\>\<\phi_{k}| )|
\notag\\
\stackrel{(c)}{=}&
O(\| |\phi'_k\>\<\phi'_k| - |\phi_k\>\<\phi_k|  \|_1 )
+ O(\| |\phi'_l\>\<\phi'_l| - |\phi_l\>\<\phi_l|  \|_1 )
\notag\\
\stackrel{(d)}{=}&
O(\gamma^2).\notag
\end{align}
where (a) follows by the cyclicity of the trace,
(b) uses a telescoping sum with the triangle inequality,
(c) follows 
because 
\begin{align}
&|\tr(F^u |\phi'_l\>\<\phi'_l|F^v) ( |\phi'_k\>\<\phi'_k| - |\phi_k\>\<\phi_k|  )  |\notag\\
\le& 
\| F^u |\phi'_l\>\<\phi'_l|F^v  \| \cdot \| |\phi'_k\>\<\phi'_k| - |\phi_k\>\<\phi_k|  \|_1 \notag \\
\le& 
\| F^u \| \|F^v  \| \cdot \| |\phi'_k\>\<\phi'_k| - |\phi_k\>\<\phi_k|  \|_1,\notag
\end{align}
and 
\begin{align}
&|\tr ( |\phi'_l\>\<\phi'_l| - |\phi_l\>\<\phi_l|  ) (F^v |\phi_k\>\<\phi_k|F^u   )|\notag\\
\le &
\|  |\phi'_l\>\<\phi'_l| - |\phi_l\>\<\phi_l| \|_1 
\cdot\|F^v |\phi_k\>\<\phi_k|F^u \| \notag\\
\le &
\| F^u \| \|F^v  \| \cdot \| |\phi'_k\>\<\phi'_k| - |\phi_k\>\<\phi_k|  \|_1,\notag
\end{align}
and (d) follows from Lemma \ref{lem:eigv perturbations}.
Hence, we obtain \eqref{innerprod perturb}.
\end{proof}

Finally, using Lemmas \ref{LL22} and \ref{LL23},
we derive \eqref{EE160}.
We can get
\begin{align}
&( \tr(L_u \rho_\gamma L_v) - \tr (L_v \rho_\gamma L_u) )
- ( \tr(\bar L_u \bar \rho_\gamma \bar L_v) - \tr (\bar L_v \bar \rho_\gamma \bar L_u) ) \notag\\
\stackrel{(a)}{=}&
4 \sum_{\substack{1\le k,l\le p'\\ k\neq l}}
\lambda'_l \left(  \frac{(\lambda'_k-\lambda'_l)^2}{(\lambda'_k + \lambda'_l)^2} - 1\right) 
\<\phi'_{k}|{  F^u}  |\phi'_{l}\>
\<\phi'_l|{  F^v}|\phi'_k\>  \notag\\
&-
4 \sum_{\substack{1\le k,l\le 2\\ k\neq l}}
\lambda_l \left(  \frac{(\lambda_k-\lambda_l)^2}{(\lambda_k + \lambda_l)^2} - 1\right) 
\<\phi_{k}|{  F^u}  |\phi_{l}\>
\<\phi_l|{  F^v}|\phi_k\>  \notag\\
&+
4 \sum_{l=1}^{p'}  \lambda'_l \<\phi'_{l}|({  F^v} {  F^u}-F^u F^v)|\phi'_l\>
\notag\\
&
-
4 \sum_{l=1}^{2}  \lambda_l \<\phi_{l}|({  F^v} {  F^u}-F^u F^v)|\phi_l\>
\notag\\
\stackrel{(b)}{=}&
4 \sum_{\substack{1\le k,l\le p'\\ k\neq l}}
\lambda'_l \left(  \frac{(\lambda'_k-\lambda'_l)^2}{(\lambda'_k + \lambda'_l)^2} - 1\right) 
\<\phi'_{k}|{  F^u}  |\phi'_{l}\>
\<\phi'_l|{  F^v}|\phi'_k\>  \notag\\
&-
4 \sum_{\substack{1\le k,l\le 2\\ k\neq l}}
\lambda_l \left(  \frac{(\lambda_k-\lambda_l)^2}{(\lambda_k + \lambda_l)^2} - 1\right) 
\<\phi_{k}|{  F^u}  |\phi_{l}\>
\<\phi_l|{  F^v}|\phi_k\>  \notag\\
&+ O(\gamma^4)
\notag\\
\stackrel{(c)}{=}&
4 \sum_{\substack{1\le k,l\le 2\\ k\neq l}}
\lambda'_l \left(  \frac{(\lambda'_k-\lambda'_l)^2}{(\lambda'_k + \lambda'_l)^2} - 1\right) 
\<\phi'_{k}|{  F^u}  |\phi'_{l}\>
\<\phi'_l|{  F^v}|\phi'_k\>  \notag\\
&-
4 \sum_{\substack{1\le k,l\le 2\\ k\neq l}}
\lambda_l \left(  \frac{(\lambda_k-\lambda_l)^2}{(\lambda_k + \lambda_l)^2} - 1\right) 
\<\phi_{k}|{  F^u}  |\phi_{l}\>
\<\phi_l|{  F^v}|\phi_k\>  \notag\\
&+ 
4 \sum_{\substack{l= 1, k > 2\\ k\neq l}}
\lambda'_l \left(  \frac{(\lambda'_k-\lambda'_l)^2}{(\lambda'_k + \lambda'_l)^2} - 1\right) 
\<\phi'_{k}|{  F^u}  |\phi'_{l}\>
\<\phi'_l|{  F^v}|\phi'_k\>  \notag\\
&+ 
4 \sum_{\substack{l = 2, k > 2\\ k\neq l}}
\lambda'_l \left(  \frac{(\lambda'_k-\lambda'_l)^2}{(\lambda'_k + \lambda'_l)^2} - 1\right) 
\<\phi'_{k}|{  F^u}  |\phi'_{l}\>
\<\phi'_l|{  F^v}|\phi'_k\>  \notag\\
&+ O(\gamma^4)
\notag\\
\stackrel{(d)}{=}&
4 \sum_{\substack{1\le k,l\le 2\\ k\neq l}}
\lambda'_l \left(  \frac{(\lambda'_k-\lambda'_l)^2}{(\lambda'_k + \lambda'_l)^2} - 1\right) 
\<\phi'_{k}|{  F^u}  |\phi'_{l}\>
\<\phi'_l|{  F^v}|\phi'_k\>  \notag\\
&-
4 \sum_{\substack{1\le k,l\le 2\\ k\neq l}}
\lambda_l \left(  \frac{(\lambda_k-\lambda_l)^2}{(\lambda_k + \lambda_l)^2} - 1\right) 
\<\phi_{k}|{  F^u}  |\phi_{l}\>
\<\phi_l|{  F^v}|\phi_k\>  \notag\\
&+ 
4 \sum_{\substack{k > 2\\ k\neq l}}
\lambda'_1 O(\gamma^4)
\<\phi'_{k}|{  F^u}  |\phi'_1\>
\<\phi'_1|{  F^v}|\phi'_k\>  \notag\\
&+ 
4 \sum_{\substack{k > 2\\ k\neq l}}
\lambda'_2  O(\gamma^2)
\<\phi'_{k}|{  F^u}  |\phi'_2\>
\<\phi'_2|{  F^v}|\phi'_k\> + O(\gamma^4),
\label{halfway}
\end{align}
where 
(a) follows from \eqref{true commutator} and Lemma \ref{L12},
(b) follows from \eqref{C75}
(c) follows from expanding the first summation, and using $\lambda'_l = O(\gamma^4)$ for $l >2$ from \eqref{lam perturbations},
(d) follows from
\eqref{g1k' value} and \eqref{g2k' value}.

Hence we get
\begin{align}
&( \tr(L_u \rho_\gamma L_v) - \tr (L_v \rho_\gamma L_u) )
- ( \tr(\bar L_u \bar \rho_\gamma \bar L_v) - \tr (\bar L_v \bar \rho_\gamma \bar L_u) ) \notag\\
\stackrel{(a)}{=}&
4 \sum_{\substack{1\le k,l\le 2\\ k\neq l}}
\lambda'_l \left(  \frac{(\lambda'_k-\lambda'_l)^2}{(\lambda'_k + \lambda'_l)^2} - 1\right) 
\<\phi'_{k}|{  F^u}  |\phi'_{l}\>
\<\phi'_l|{  F^v}|\phi'_k\>  \notag\\
&-
4 \sum_{\substack{1\le k,l\le 2\\ k\neq l}}
\lambda_l \left(  \frac{(\lambda_k-\lambda_l)^2}{(\lambda_k + \lambda_l)^2} - 1\right) 
\<\phi_{k}|{  F^u}  |\phi_{l}\>
\<\phi_l|{  F^v}|\phi_k\>  \notag\\ 
&+O(\gamma^4)\notag\\
\stackrel{(b)}{=}&
4 \sum_{\substack{1\le k,l\le 2\\ k\neq l}}
\lambda'_l (c_2 \gamma^2 + O(\gamma^4))
\<\phi'_{k}|{  F^u}  |\phi'_{l}\>
\<\phi'_l|{  F^v}|\phi'_k\>  \notag\\
&-
4 \sum_{\substack{1\le k,l\le 2\\ k\neq l}}
\lambda_l (c_2 \gamma^2 + O(\gamma^4))
\<\phi_{k}|{  F^u}  |\phi_{l}\>
\<\phi_l|{  F^v}|\phi_k\>+O(\gamma^4)  \notag\\  
\stackrel{(c)}{=}&
4 \sum_{\substack{1\le k,l\le 2\\ k\neq l}}
\lambda'_l (c_2 \gamma^2 + O(\gamma^4))
\<\phi'_{k}|{  F^u}  |\phi'_{l}\>
\<\phi'_l|{  F^v}|\phi'_k\>  \notag\\
&-
4 \sum_{\substack{1\le k,l\le 2\\ k\neq l}}
\lambda'_l (c_2 \gamma^2 + O(\gamma^4))
\<\phi_{k}|{  F^u}  |\phi_{l}\>
\<\phi_l|{  F^v}|\phi_k\>  \notag\\  
&
+4 \sum_{\substack{1\le k,l\le 2\\ k\neq l}}
\lambda'_l (c_2 \gamma^2 + O(\gamma^4))
\<\phi'_{k}|{  F^u}  |\phi'_{l}\>
\<\phi'_l|{  F^v}|\phi'_k\>  \notag\\
&-
4 \sum_{\substack{1\le k,l\le 2\\ k\neq l}}
\lambda_l (c_2 \gamma^2 + O(\gamma^4))
\<\phi_{k}|{  F^u}  |\phi_{l}\>
\<\phi_l|{  F^v}|\phi_k\> +O(\gamma^4) \notag\\    
\stackrel{(d)}{=}&
  \sum_{\substack{1\le k,l\le 2\\ k\neq l}}
\lambda'_l   \gamma^2  
O(\gamma^2)    
+  \sum_{\substack{1\le k,l\le 2\\ k\neq l}}
O(\gamma^2)     \gamma^2  
\<\phi'_{k}|{  F^u}  |\phi'_{l}\>
\<\phi'_l|{  F^v}|\phi'_k\>  
\notag\\
\stackrel{(e)}{=}&
O(\gamma^4),\notag
\end{align}
where (a) is because the last two summations in \eqref{halfway} are $O(\gamma^4)$,
(b) is because of \eqref{g12' value} and \eqref{gval2},
(c) is because of a telescoping sum,
(d) follows from collecting terms in the telescoping sum, and using \eqref{innerprod perturb} in the first pair of summations and \eqref{lam perturbations} in the second pair of summations and collecting constants into the big-O notation,
and (e) follows from addition in the big-O notation.
Hence, 
the result follows.

\section{Semidefinite program formulation with CVX}
\label{sec:SDPs}

Now, we will formulate the mathematical optimization problems that correspond to $J_2$, $J_3$, $J_4$, $J_5$ as semidefinite programs to be used with the CVX package and provide the corresponding MatLab code. These formulations allow us to numerically evaluate the optimal values and solutions of $J_2$, $J_3$, $J_4$, $J_5$.

For the numerical formulations, we consider probe states of finite dimension $d_A$, and quantum channels that map finite dimensional probe states of dimension $d_A$ to states of finite dimension $d_B$. 
Then, for these optimization programs that correspond to $J_2$, $J_3$, $J_4$, $J_5$, the optimization variable is $Y$, which is a complex semidefinite matrix of size $(d+1)d_B d_A$.
Hence, in our MatLab script, we define the dimension of $Y$ as 
\begin{align}
{\small\texttt{dY = (d+1)*dB*dA;}}\notag
\end{align}
The matrix $Y$ has support on the tensor product space $\mathcal R \otimes \mathcal H_B \otimes H_A$, and we like to interpret $Y$ as a block matrix with blocks $Y_{j,k}$, where $j,k=0,\dots, d$, and $Y_{j,k}$ are matrices on $\mathcal H_B \otimes \mathcal H_A$.

The optimization programs $J_k$ take as input the size $d$ weight matrix $G$,
the channel's Choi matrix $T$, 
and the partial derivatives of the Choi Matrix. 
It is also convenient to specify the input and output dimensions of the quantum channel. 
Hence we can write the first line of the optimization programs 
$J_2, \dots, J_5$ respectively as 
\begin{align}
&{\small\texttt{function [opt\_val, opt\_Y]=J2(G,T,DT,dA,dB)}}\notag\\
&{\small\texttt{function [opt\_val, opt\_Y]=J3(G,T,DT,dA,dB)}}\notag\\
&{\small\texttt{function [opt\_val, opt\_Y]=J4(G,T,DT,dA,dB)}}\notag\\
&{\small\texttt{function [opt\_val, opt\_Y]=J5(G,T,DT,dA,dB)}}.\notag
\end{align}
Here, $\texttt{DT}$ is a cell of derivatives of $\texttt{T}$, and we can access the $j$th derivative of $\texttt{T}$ using $\texttt{DT\{j\}}$. 
The channel's Choi matrix $T$ is a matrix on the space $\mathcal H_B \otimes\mathcal H_A$. 
Each of the functions $\texttt{J2}$, $\texttt{J3}$, $\texttt{J4}$, and $\texttt{J5}$ will find the optimal value and optimal solutions of the optimization problem that corresponds to $J_k$, given by $\texttt{opt\_val}$ and $\texttt{opt\_Y}$ respectively.

We can access the submatrices $Y_{j,k}$ of $Y$ using the following Matlab function:
\begin{align}
&{\small\texttt{function Yjk = jk\_blk(Y,j,k,d)}}\notag\\
&{\small\quad\texttt{dT=size(Y,1)/d;}}\notag\\
&{\small\quad\texttt{idx\_row\_start = j*dT+1;}}\notag\\
&{\small\quad\texttt{idx\_row\_end = j*dT+dT;}}\notag\\
&{\small\quad\texttt{idx\_col\_start = k*dT+1;}}\notag\\
&{\small\quad\texttt{idx\_col\_end = k*dT+dT;}}\notag\\
&{\small\quad\texttt{rows=idx\_row\_start:idx\_row\_end}}\notag\\
&{\small\quad\texttt{cols=idx\_col\_start:idx\_col\_end}}\notag\\
&{\small\quad\texttt{Yjk = Y(rows,cols);}}\notag\\
&{\small\texttt{end}}\notag
\end{align}

The objective function to be minimized is $\tr Y (G\otimes T)$. We can write this objective function in MatLab code by first defining the variables 
\begin{align}
&{\small \texttt{Gprime = [zeros(1,d+1); zeros(d,1),G];}}\notag\\
&{\small \texttt{GT = kron(Gprime, T);}}\notag
\end{align} 
and minimizing $\texttt{trace(Y*GT)}$.

The constraint that $Y$ belongs to the cone $\mathcal B''$,
which corresponds to 
$Y_{k,0}$ being Hermitian for all $k=0,\dots,d$ and $Y_{k,j} = Y_{j,k}^\dagger$ for all $j,k=0,\dots,d$ 
 can be written as 
\begin{align}
&{\small\texttt{for j=0:d}}\notag\\
&{\small\quad\texttt{jk\_blk(Y,j,0,d) == jk\_blk(Y,j,0,d)'}}\notag\\
&{\small\quad\texttt{for k=0:d}}\notag\\
&{\small\quad\quad\texttt{jk\_blk(Y,j,k,d) == jk\_blk(Y,k,j,d)'}}\notag\\
&{\small\quad\texttt{end}}\notag\\
&{\small\texttt{end}}\label{cvx-B''}
\end{align}

Now we will use the variable $\texttt{vecb}$ to represent the matrix $|0\> \otimes |b\> \otimes I_A $.
We use $\texttt{IR=eye(d+1)}$ to represent $I_R$, $\texttt{zket\_R= IR(:,1)}$ to represent $|0\>$,  $\texttt{IA=eye(dA)}$ to represent $I_A$, $\texttt{IB=eye(dB)}$ to represent $I_B$, and $\texttt{IB(:,b)}$ to represent $|b\>$. Then the constraint for (i') can be written as
\begin{align}
&{\small\texttt{for b = 1:dB}}\notag\\
&{\small\quad\texttt{vecb=kron(kron(zket\_R,IB(:,b)), IA);}}\notag\\
&{\small\quad\texttt{for bp = 1:dB}}\notag\\
&{\small\quad\quad\texttt{vecbp=kron(kron(zket\_R, IB(:,bp)), IA );}}\notag\\
&{\small\quad\quad\texttt{if b == bp}}\notag\\
&{\small\quad\quad\quad\texttt{trace(vecb'*Y*vecb) == 1}}\notag\\
&{\small\quad\quad\texttt{else}}\notag\\
&{\small\quad\quad\quad\texttt{vecbp'*Y*vecb == zeros(dA)}}\notag\\
&{\small\quad\quad\quad\texttt{vecb'*Y*vecb == vecbp'*Y*vecbp}}\notag\\
&{\small\quad\quad\texttt{end}}\notag\\
&{\small\quad\texttt{end}}\notag\\
&{\small\texttt{end}}\label{cvx-i'}
\end{align}

The constraint for (ii), which is the channel analogue of the locally unbiased constraint, can be written as 
\begin{align}
&{\small\texttt{for j = 1:d}}\notag\\
&{\small\quad\texttt{for jp = 1:d}}\notag\\
&{\small\quad\quad\texttt{jp\_ket\_R = IR(:,jp+1);}}\notag\\
&{\small\quad\quad\texttt{zjp = zket\_R*jp\_ket\_R';}}\notag\\
&{\small\quad\quad\texttt{if j == jp}}\notag\\
&{\small\quad\quad\quad\texttt{0.5*trace(Y*kron(zjp + zjp', DT\{j\}))==1}}\notag\\
&{\small\quad\quad\texttt{else}}\notag\\
&{\small\quad\quad\quad\texttt{0.5*trace(Y*kron(zjp + zjp', DT\{j\}))==0}}\notag\\
&{\small\quad\quad\texttt{end}}\notag\\
&{\small\quad\texttt{end}}\notag\\
&{\small\texttt{end}}\label{cvx-ii}
\end{align}
            
Now we will proceed to write $J_5$ as a semidefinite program that can be run using the CVX package in MatLab. For $J_5$, the cone $S^5_{AB}$ requires the constraints $Y \ge 0$, $\Tr Y (|j\>\<i|\otimes T ) = 0$ and that $Y \in \mathcal B''$. Hence we can write $J_5$ as
\begin{align}
&{\small\texttt{cvx\_begin sdp \%J5}}\notag\\
&{\small\quad\texttt{variable Y(dY,dY) complex semidefinite}}\notag\\
&{\small\quad\texttt{minimize(trace(Y*GT))}}\notag\\
&{\small\quad\texttt{subject to }}\notag\\
&\mbox{Insert code for \eqref{cvx-i'}}\notag\\
&\mbox{Insert code for \eqref{cvx-ii}}\notag\\
&\mbox{Insert code for \eqref{cvx-B''}}\notag\\
&{\small\quad\texttt{for j = 1:d}}\notag\\
&{\small\quad\quad\texttt{j\_ket\_R = IR(:,j+1);}}\notag\\
&{\small\quad\quad\texttt{for jp = 1:d}}\notag\\
&{\small\quad\quad\quad\texttt{jp\_ket\_R = IR(:,jp+1);}}\notag\\           
&{\small\quad\quad\quad\texttt{j\_jp = j\_ket\_R*jp\_ket\_R';}}\notag\\
&{\small\quad\quad\quad\texttt{imag(trace(Y*kron(j\_jp,T)))==0}}\notag\\
&{\small\quad\quad\texttt{end}}\notag\\
&{\small\quad\texttt{end}}\notag\\
&{\small\texttt{cvx\_end}}
\end{align}
For $J_4$, we can write the optimization as 
\begin{align}
&{\small\texttt{cvx\_begin sdp \%J4}}\notag\\
&{\small\quad\texttt{variable Y(dY,dY) complex semidefinite}}\notag\\
&{\small\quad\texttt{minimize(trace(Y*GT))}}\notag\\
&{\small\quad\texttt{subject to }}\notag\\
&\mbox{Insert code for \eqref{cvx-i'}}\notag\\
&\mbox{Insert code for \eqref{cvx-ii}}\notag\\
&\mbox{Insert code for \eqref{cvx-B''}}\notag\\
&{\small\texttt{cvx\_end}}
\end{align}
For $J_3$, we will need a positive partial transpose constraint on $Y$.
We can write the partial transpose of $Y$ as $\texttt{Y\_PT}$ where
\begin{align}
&{\small\texttt{Y\_PT = zeros((d+1)*dA*dB)}}\notag\\
&{\small\texttt{for j = 0:d}}\notag\\
&{\small\quad\texttt{idx\_row\_start = j*dT+1;}}\notag\\
&{\small\quad\texttt{idx\_row\_end = j*dT+dT;}}\notag\\
&{\small\quad\texttt{rows=idx\_row\_start:idx\_row\_end}}\notag\\
&{\small\quad\texttt{for k = 0:d}}\notag\\
&{\small\quad\quad\texttt{idx\_col\_start = k*dT+1;}}\notag\\
&{\small\quad\quad\texttt{idx\_col\_end = k*dT+dT;}}\notag\\
&{\small\quad\quad\texttt{cols=idx\_col\_start:idx\_col\_end}}\notag\\
&{\small\quad\quad\texttt{Y\_PT(rows,cols)=jk\_blk(Y,j,k,d);}}\notag\\           
&{\small\quad\texttt{end}}\notag\\
&{\small\texttt{end}}\label{cvx-Y-PT}
\end{align}
 and hence we can write $J_3$ as 
\begin{align}
&{\small\texttt{cvx\_begin sdp \%J3}}\notag\\
&{\small\quad\texttt{variable Y(dY,dY) complex semidefinite}}\notag\\
&{\small\quad\texttt{minimize(trace(Y*GT))}}\notag\\
&{\small\quad\texttt{subject to }}\notag\\
&\mbox{Insert code for \eqref{cvx-i'}}\notag\\
&\mbox{Insert code for \eqref{cvx-ii}}\notag\\
&\mbox{Insert code for \eqref{cvx-B''}}\notag\\
&\mbox{Insert code for \eqref{cvx-Y-PT}}\notag\\
&{\small\quad\texttt{Y\_PT>=0}}\notag\\
&{\small\texttt{cvx\_end}}\notag
\end{align}
For $J_2$, we need to optimize over the cone $\mathcal B$ instead of $\mathcal B''$. 
The constraints for the cone $\mathcal B$ can be written as
\begin{align}
&{\small\texttt{for j=0:d}}\notag\\
&{\small\quad\texttt{for k=0:d}}\notag\\
&{\small\quad\quad\texttt{jk\_blk(Y,j,k,d) == jk\_blk(Y,k,j,d)}}\notag\\
&{\small\quad\quad\texttt{jk\_blk(Y,j,k,d) == jk\_blk(Y,j,k,d)'}}\notag\\
&{\small\quad\texttt{end}}\notag\\
&{\small\texttt{end}}\label{cvx-B}
\end{align}
Hence we can write $J_2$ as
\begin{align}
&{\small\texttt{cvx\_begin sdp \%J2}}\notag\\
&{\small\quad\texttt{variable Y(dY,dY) complex semidefinite}}\notag\\
&{\small\quad\texttt{minimize(trace(Y*GT))}}\notag\\
&{\small\quad\texttt{subject to }}\notag\\
&\mbox{Insert code for \eqref{cvx-i'}}\notag\\
&\mbox{Insert code for \eqref{cvx-ii}}\notag\\
&\mbox{Insert code for \eqref{cvx-B}}\notag\\ 
&{\small\texttt{cvx\_end}}\notag
\end{align}

For the optimization problems that correspond to $J_2, \dots, J_5$, once we find the optimal $Y^*$, the corresponding optimal probe state $\rho^*_A$ can be written as
\begin{align}
\rho^*_A := \Tr_{RB}[ Y( |0\>\<0| \otimes I_{AB} ) ]/ d_B  .
\end{align}

\end{document}